\documentclass[11pt]{article}

\usepackage{fullpage}
\usepackage{appendix}
\usepackage{amssymb}
\usepackage{amsfonts}
\usepackage{amsmath}
\usepackage{amsthm}
\usepackage{latexsym}
\usepackage{epsfig}
\usepackage{makeidx}
\usepackage{graphicx}
\usepackage[numbers]{natbib}
\usepackage{hyperref}
\usepackage{epstopdf}
\usepackage{setspace}

\newtheorem{theorem}{Theorem}[section]
\newtheorem{lemma}[theorem]{Lemma}
\newtheorem{remark}{Remark}

\newtheorem{definition}[theorem]{Definition}
\newtheorem{corollary}[theorem]{Corollary}
\newtheorem{proposition}[theorem]{Proposition}

\newtheorem{conjecture}[theorem]{Conjecture}

\newtheorem{prop}[theorem]{Proposition}

\newtheorem*{proposition*}{Proposition}

\newcommand{\Boolean}{\{0,1\}^n \rightarrow \{0,1\}}

\newcommand{\nexp}{\mathsf{NEXP}}
\renewcommand{\ne}{\mathsf{NE}}
\newcommand{\cone}{\mathsf{coNE}}
\newcommand{\iocone}{\mathsf{i.o.coNE}}
\newcommand{\conexp}{\mathsf{coNEXP}}
\newcommand{\tc}{\mathsf{TC}^0}
\newcommand{\acc}{\mathsf{ACC}}
\newcommand{\ac}{\mathsf{AC}^0}
\newcommand{\nc}{\mathsf{NC}^1}

\newcommand{\gand}{\mathsf{AND}}
\newcommand{\gor}{\mathsf{OR}}
\newcommand{\gmod}{\mathsf{MOD}}
\newcommand{\gnot}{\mathsf{NOT}}
\newcommand{\gxor}{\mathsf{XOR}}
\newcommand{\gequiv}{\mathsf{EQUIV}}
\newcommand{\classc}{\mathcal{C}}
\newcommand{\poly}{\mathsf{poly}}
\newcommand{\ntime}{\mathsf{NTIME}}
\newcommand{\equivandc}{\mathsf{Equiv}$-$\mathsf{AND}$-$\classc}
\newcommand{\equivand}{\mathsf{Equiv}$-$\mathsf{AND}}
\newcommand{\equivorc}{\mathsf{Equiv}$-$\mathsf{OR}$-$\classc}
\newcommand{\ssat}{\mathsf{Succinct}$-$\mathsf{SAT}}
\newcommand{\ppoly}{\mathsf{P}/\mathsf{poly}}
\newcommand{\perm}{\mathsf{PERM}}
\newcommand{\pit}{\mathsf{PIT}}
\newcommand{\nsubexp}{\mathsf{NSUBEXP}}
\newcommand{\size}{\mathsf{SIZE}}
\newcommand{\asize}{\mathsf{ASIZE}}

\newcommand{\propp}{\mathcal{P}}
\newcommand{\proppyes}{\propp_{\mathsf{yes}}}
\newcommand{\proppno}{\propp_{\mathsf{no}}}

\newcommand{\dtime}{\mathsf{DTIME}}
\newcommand{\matime}{\mathsf{MATIME}}

\begin{document}

\title{Algorithms versus Circuit Lower Bounds~\\~\\}

\author{Igor C. Oliveira\footnote{Research supported by NSF grants CCF-1116702 and CCF-1115703.}~\\~\\{\small Department of Computer Science}~\\{\small Columbia University}~\\~\\{\tt oliveira@cs.columbia.edu}~\\~\\~\\}

\maketitle

\begin{abstract}
Different techniques have been used to prove several transference theorems of the form ``nontrivial algorithms for a circuit class $\classc$ yield circuit lower bounds against $\classc$''. In this survey we revisit many of these results. We discuss how circuit lower bounds can be obtained from derandomization, compression, learning, and satisfiability algorithms. We also cover the connection between circuit lower bounds and useful properties, a notion that turns out to be fundamental in the context of these transference theorems. Along the way, we obtain a few new results, simplify several proofs, and show connections involving different frameworks. We hope that our presentation will serve as a self-contained  introduction for those interested in pursuing research in this area.
\end{abstract}

\newpage 

\onehalfspacing

\tableofcontents

\singlespacing

\newpage

\section{Introduction}

This survey deals with two fundamental problems in theoretical computer science: the design of nontrivial algorithms for difficult computational tasks, and the search for unconditional proofs that some natural computational problems are inherently hard (more specifically, do not admit polynomial size circuits).

Perhaps surprisingly, these problems are deeply related. For instance, it follows from the work of Karp and Lipton \citep{DBLP:conf/stoc/KarpL80} (attributed to Meyer) that if $3$-SAT admits a polynomial time algorithm, then there are problems solved in exponential time that cannot be computed by polynomial size circuits. On the other hand, it is known that constructive proofs of circuit lower bounds lead to algorithms breaking exponentially hard pseudorandom generators that are conjectured to exist (Razborov and Rudich \citep{DBLP:journals/jcss/RazborovR97}). 

The last decade has produced several additional \emph{transference theorems}\footnote{In other words, these theorems show that one can transform an algorithmic result into a circuit lower bound, i.e., they allow us to transfer a result from one area to another.} of this form, under many different algorithmic frameworks. For instance, the existence of subexponential time learning algorithms for a class of functions $\classc$ leads to circuit lower bounds against $\classc$ (Fortnow and Klivans \citep{DBLP:journals/jcss/FortnowK09}). In a different domain, it is known that the design of subexponential time deterministic algorithms for problems with efficient randomized algorithms implies circuit lower bounds that have eluded researchers for decades (Kabanets and Impagliazzo \citep{DBLP:journals/cc/KabanetsI04}). More recently, it has been shown that new circuit lower bounds can be obtained from efficient compression algorithms (Chen et al. \citep{DBLP:journals/eccc/ChenKKSZ13}), not to mention the connection between satisfiability algorithms and circuit lower bounds (Williams \citep{DBLP:conf/stoc/Williams10}, \citep{DBLP:conf/coco/Williams11}, \citep{DBLP:conf/stoc/Williams13}). Several additional results have appeared in the literature (\citep{DBLP:journals/cc/KinneMS12}, \citep{DBLP:journals/toc/AaronsonM11}, \citep{DBLP:journals/eccc/AaronsonABHM10}, \citep{DBLP:journals/cc/AydinliogluGHK11}, \citep{DBLP:conf/icalp/HarkinsH11}, \citep{KKO}, among others). For a gentle introduction to some of these connections, see Santhanam \citep{DBLP:journals/eccc/Santhanam12}.

These results are interesting for several reasons. For instance, as far as we know, there may be an efficient compression scheme that works well for any string possessing some structure, or circuits of polynomial size can be learned in quasipolynomial time by a very complicated learning algorithm. Nevertheless, the transference theorems discussed before show that even if complicated tasks like these are actually easy, some natural computational problems are inherently hard (even for non-uniform algorithms).

How could one prove (unconditionally) that a natural computational problem is hard? How would a mathematical proof of such result look like? This is one of the most fascinating questions of contemporary mathematics, and it is related to deep problems about algorithms, combinatorics, and mathematical logic (cf.  Kraj\'{i}cek \citep{krajicek}, Immerman \citep{DBLP:books/daglib/0095988}, Cook and Nguyen \citep{cooknguyen}).

It turns out that the connection between algorithms and circuit lower bounds (``transference theorems'') can be used to prove \emph{new} circuit lower bounds that had resisted the use of more direct approaches for decades. Let $\classc$ be a class of circuits, such as $\ac, \tc, \nc$, etc. We say that a satisfiability algorithm for $\classc$ is \emph{nontrivial} if it runs in time time $2^n/s(n)$, for some function $s(n) \gg \poly(n)$. Building on work done by many researchers, Williams (\citep{DBLP:conf/coco/Williams11}, \citep{DBLP:conf/stoc/Williams10}) proved the following transference theorem: the existence of a nontrivial $\classc$-SAT algorithm implies $\nexp \nsubseteq \classc[\poly]$. In other words, faster satisfiability algorithms lead to languages computed in nondeterminstic exponential time that cannot be computed by polynomial size circuits from $\classc$. 

Most importantly, by designing a new $\acc$-SAT algorithm, Williams \citep{DBLP:conf/coco/Williams11} was able to obtain a circuit lower bound for the circuit class $\acc$.\footnote{This is the class of languages computed by polynomial size constant-depth circuits consisting of $\gand$, $\gor$, $\gnot$ and $\gmod_m$ gates (for a fixed integer $m \in \mathbb{N}$). Every gate other than $\gnot$ is allowed to have unbounded fan-in.} Moreover, this is the \emph{only} known proof of this result. Other approaches that have been proposed are also based on the design of new $\acc$ algorithms (Chen et al. \citep{DBLP:journals/eccc/ChenKKSZ13}). While his result is still weak compared to the main open problems in circuit complexity and complexity theory in general, it is a landmark in our understanding of the connection between nontrivial algorithms and the existence of hard computational problems.

Can we extend this technique to prove stronger circuit lower bounds? Is there any connection between Williams' transference theorem and other similar results discussed before? This survey is motivated by these questions. We break the presentation into two parts. The first part is a fast-paced introduction to some known results connecting algorithms to circuit lower bounds. The second part of this survey presents complete proofs for most of these theorems, along with some extensions that may be of independent interest. Of course we are not able to cover every result that relates algorithms to circuit lower bounds, but we tried to describe representative results from many areas.

We stress that we focus on \emph{generic} connections between faster algorithms and circuit lower bounds, instead of particular techniques that have found applications in both areas (Fourier representation of boolean functions \citep{DBLP:journals/jacm/LinialMN93}, satisfiability coding lemma \citep{DBLP:journals/cjtcs/PaturiPZ99}, random restriction method \citep{DBLP:journals/eccc/ChenKKSZ13}, etc.). For the reader with basic background in complexity theory, our presentation is essentially self-contained.

\subsection{A summary of some known results}

\subsubsection{Satisfiability algorithms and circuit lower bounds}

The connection between algorithms for hard problems and circuit lower bounds has been known for decades. More precisely, a collapse theorem attributed to Meyer \citep{DBLP:conf/stoc/KarpL80} states that if $\mathsf{EXP} \subseteq \ppoly$ then $\mathsf{EXP} = \Sigma_2^p$ (recall this is the second level of $\mathsf{PH}$, the polynomial time hierarchy). On the other hand, it is not hard to prove that if $\mathsf{P} = \mathsf{NP}$ then $\mathsf{P} = \Sigma_2^p = \mathsf{PH}$. Together, the assumptions that there are efficient algorithms for $\mathsf{NP}$-complete problems and that every problem in $\mathsf{EXP}$ admits polynomial size circuits lead to $\mathsf{P} = \mathsf{EXP}$, a contradiction to the deterministic time hierarchy theorem. In other words, if there exists efficient algorithms for $3$-SAT, it must be the case that $\mathsf{EXP} \nsubseteq \ppoly$.\footnote{Using the fact that $\mathsf{P} = \mathsf{PH}$ implies the collapse of the exponential time hierarchy to $\mathsf{EXP}$, an even stronger consequence can be obtained. We omit the details.} Similar transference results can be obtained from the assumption that there are subexponential time algorithms for $3$-SAT (i.e., with running time $2^{n^{o(1)}}$).

The existence of such algorithms is a very strong assumption. The best known algorithms for $k$-SAT run in time $2^{n(1 - \delta(k))}$, for some fixed constant $\delta(k) > 0$ that goes to zero as $k$ goes to infinity (cf. Dantsin and Hirsch \citep{DBLP:series/faia/DantsinH09}). These algorithms offer an exponential improvement over the trivial running time $\tilde{O}(2^n)$. If we only require the running time to be faster than exhaustive search (``nontrivial''), then improved algorithms are known for many interesting circuit classes (see for instance \citep{DBLP:conf/focs/Santhanam10}, \citep{DBLP:journals/eccc/ChenKKSZ13}, \citep{DBLP:journals/cc/SetoT13}, \citep{DBLP:journals/corr/abs-1212-4548}, \citep{DBLP:conf/coco/BeameIS12}, \citep{DBLP:conf/soda/ImpagliazzoMP12}). For an introduction to some of these algorithms, see Schneider \citep{DBLP:journals/corr/Schneider13}.

It makes sense therefore to consider more refined versions of the transference theorem for satisfiability algorithms. This is precisely the first result in this direction obtained by Williams \citep{DBLP:conf/stoc/Williams10}: the existence of nontrivial algorithms deciding the satisfiability of polynomial size circuits is enough to imply $\nexp \nsubseteq \ppoly$. Unfortunately, $\ppoly$ is a very broad class, and the algorithms mentioned before do not work or have trivial running time on such circuits. 

In follow up work, Williams \citep{DBLP:conf/coco/Williams11} extended his techniques from \citep{DBLP:conf/stoc/Williams10} to prove a more general result that holds for most circuit classes. 

\begin{proposition}[``SAT algorithms yield circuit lower bounds, I'' \citep{DBLP:conf/coco/Williams11}] \label{p:williams1}~\\ Let $\classc$ be a class of circuit families that is closed under composition \emph{(}the composition of two circuits from $\classc$ is also in $\classc$\emph{)} and contains $\ac$. There is a $k>0$ such that, if satisfiability of $\classc$-circuits with $n$ variables and $n^c$ size can be solved in $O(2^n/n^k)$ time for every $c$, then $\nexp \nsubseteq \classc[\poly(n)]$.  
\end{proposition}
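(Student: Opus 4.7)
The plan is to prove the contrapositive by a standard Williams-style ``easy witness plus fast SAT'' argument. Assume, for contradiction, that $\nexp \subseteq \classc[\poly]$, and fix a language $L \in \ntime(2^n)$ that, by the nondeterministic time hierarchy theorem, is not in $\ntime(2^n / n^{10})$ (or any desired superpolynomial speedup). The goal is to use the hypothesized $\classc$-SAT algorithm, together with the assumption $\nexp \subseteq \classc[\poly]$, to place $L$ in $\ntime(2^n / n^{10})$, yielding a contradiction.

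First I would establish a ``computable easy-witness'' statement in the style of Impagliazzo--Kabanets--Wigderson: under the assumption $\nexp \subseteq \classc[\poly]$, for any $L \in \ne$ and any verifier $V$ for $L$, on all but finitely many accepted inputs $x$ of length $n$ there is a witness $w \in \{0,1\}^{2^{O(n)}}$ whose bits are computed by a $\classc$-circuit $C_w$ of polynomial size in $n$. The idea is that the predicate ``does there exist a small $\classc$-witness circuit?'' is an $\ne$ property, and iterating the $\nexp \subseteq \classc[\poly]$ assumption (together with padding) forces witnesses themselves to collapse to polynomial-size circuit descriptions; this is where the closure of $\classc$ under composition is needed so that the witness circuits and the verification circuits can be glued.

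Next I would set up the Cook--Levin-style reduction. For $x \in \{0,1\}^n$, the verification of $V$ on $(x, w)$ with $|w| = 2^{cn}$ and running time $2^{cn}$ can be encoded, via a ``succinct'' tableau, as a circuit $E_x$ on roughly $cn + O(\log n)$ variables indexing into the tableau cells, where $E_x$ is a composition of (i) a constant-size local consistency check, (ii) the polynomial-size $\classc$-circuit $C_w$ supplying witness bits, and (iii) a polynomial-size $\ac$ circuit computing the hardwired description of $x$ and $V$. Because $\classc \supseteq \ac$ and is closed under composition, $E_x$ is a polynomial-size $\classc$-circuit on $n' = cn + O(\log n)$ variables, and $x \in L$ holds iff there exists a guess of $C_w$ such that $\neg E_x$ is unsatisfiable. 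The $\ne$-algorithm for $L$ now nondeterministically guesses $C_w$ in $\poly(n)$ bits and then invokes the hypothesized nontrivial $\classc$-SAT algorithm on $\neg E_x$, running in time $O(2^{n'}/(n')^k) = 2^{cn} \cdot \poly(n)/n^k$. Choosing the constant $k$ in the proposition large enough compared to $c$ pushes this well below $2^{cn}/n^{10}$, contradicting the hierarchy theorem for $L$.

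The main obstacle I expect is the easy-witness step: getting from ``every language in $\nexp$ has small circuits'' to ``every $\nexp$ language has \emph{$\classc$-circuit} witnesses that can actually be guessed and plugged into the verifier.'' The delicate point is ensuring the witness circuit and the verifier both live in the same class $\classc$ after composition, so that the final satisfiability instance is genuinely a $\classc$-circuit to which the nontrivial SAT algorithm applies; this is where the hypotheses $\ac \subseteq \classc$ and closure under composition are used essentially. Once that is in place, the reduction to SAT and the quantitative savings needed to beat the hierarchy theorem are a matter of bookkeeping.
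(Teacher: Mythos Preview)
Your strategy is the right one and is exactly Williams' original argument: assume $\nexp \subseteq \classc$, invoke the easy-witness lemma (Proposition~\ref{p:witnesscircuits}) to get $\classc$-circuit witnesses, build a single $\classc$-circuit whose unsatisfiability certifies $x \in L$, and run the nontrivial $\classc$-SAT algorithm to beat the nondeterministic time hierarchy. The paper does not prove this proposition directly (it is quoted from \cite{DBLP:conf/coco/Williams11}); its own proof of the closely related Proposition~\ref{p:main} takes a \emph{different} route. Instead of relying on the Succinct-SAT circuit $D_x$ already lying in $\ac \subseteq \classc$ and on closure under composition, the paper's Conversion Lemma (Lemma~\ref{l:translation}) uses the equivalence-checking power of the SAT algorithm itself to nondeterministically replace the final unrestricted circuit $B$ gate-by-gate with an equivalent $\classc$-circuit. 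That buys applicability to ``reasonable'' constant-depth classes not closed under composition, at the price of needing an $\equivandc$ algorithm rather than plain $\classc$-SAT.

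There is, however, a genuine gap in your parameters that is not mere bookkeeping. You take $L \in \ntime[2^n]$, so the hierarchy theorem only gives $L \notin \ntime[2^n/n^{10}]$; yet you set $n' = cn + O(\log n)$ and then compare the SAT cost $2^{cn}\cdot\poly(n)/n^k$ to $2^{cn}/n^{10}$. For any $c>1$ this yields no contradiction at all against $\ntime[2^n]$. What actually makes the argument go through is the \emph{tight} Succinct-SAT reduction of Lemma~\ref{l:reduction}: for $L \in \ntime[2^n]$ one gets a circuit $D_x$ on only $n + c\log n$ variables and $\poly(n)$ size, so the final instance has $m = n + O(\log n)$ variables and the SAT call costs $2^n\cdot\poly(n)/n^k = o(2^n)$ once $k$ exceeds the fixed polynomial overhead. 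Obtaining $D_x$ on $n + O(\log n)$ variables---and, for your composition step, in a form simple enough (essentially $\ac$) that composing with the guessed $\classc$-witness and the local check lands in $\classc$---is a real ingredient (this is where the Tourlakis/Fortnow et al./Williams reductions enter), not the afterthought you suggest. You are right that the delicate point is getting the final instance into $\classc$, but the difficulty sits on the reduction side ($D_x$), not on the easy-witness side, which Proposition~\ref{p:witnesscircuits} already delivers directly in $\classc$.
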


In addition, he provided a nontrivial algorithm for $\acc[2^{n^\delta}]$ (the class of $\acc$ circuits of size $2^{n^\delta}$), where $\delta = \delta(d,m) > 0$ depends on the depth of the circuit and the modulo gate. Altogether, these results imply the following circuit lower bound.

\begin{corollary} \label{c:acclb1} $\nexp \nsubseteq \acc$.
\end{corollary}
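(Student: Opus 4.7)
The plan is to instantiate Proposition~\ref{p:williams1} with $\classc = \acc$. Two hypotheses need verification. First, $\ac \subseteq \acc$ is immediate from the definition, since $\acc$ is obtained by augmenting $\ac$ with $\gmod_m$ gates. Second, $\acc$ is closed under composition: substituting a constant-depth unbounded fan-in circuit over $\{\gand,\gor,\gnot,\gmod_m\}$ into another circuit of the same kind still produces a constant-depth circuit over the same gate basis, and when the moduli differ one can regard $\acc$ as allowing any finite set of moduli (equivalently, pass to the least common multiple), so the resulting class is still $\acc$.

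With both hypotheses in hand, Proposition~\ref{p:williams1} reduces the corollary to exhibiting, for every $c$, a satisfiability algorithm running in time $O(2^n/n^k)$ on $\acc$ circuits with $n$ variables and $n^c$ size, where $k$ is the constant promised by the proposition. As announced in the paragraph preceding the corollary, Williams supplies such an algorithm; in fact his algorithm handles the far larger class $\acc[2^{n^\delta}]$ in time $2^n/n^{\omega(1)}$, which comfortably meets the requirement for polynomial-size inputs.

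The crux of the proof therefore lies in constructing the $\acc$-SAT algorithm itself, which I would do in two stages. The first stage is the Beigel-Tarui-Yao depth reduction: every $\acc$ circuit of depth $d$, size $s$, and modulus $m$ is equivalent to a $\symplus$ circuit, that is, a symmetric function applied to ANDs of literals, whose size $N$ is quasi-polynomial in $s$ (with the exponent controlled by $d$ and $m$). The second stage evaluates this $\symplus$ circuit on all $2^n$ assignments faster than brute force. Splitting the variables into two halves of size $n/2$, one builds a $2^{n/2}\times N$ matrix and an $N\times 2^{n/2}$ matrix whose product collects, for each pair of half-assignments, the number of satisfied ANDs; Coppersmith's fast rectangular matrix multiplication computes this product in time $2^n/n^{\omega(1)}$ provided $N$ is at most $2^{n^\delta}$ for a sufficiently small $\delta$. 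Applying the outer symmetric gate to each entry and scanning for a satisfying assignment completes the algorithm.

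The hard part is the Beigel-Tarui reduction of the first stage, because it must simultaneously preserve the function exactly and keep $N$ below Coppersmith's threshold; the downstream matrix-product evaluation is routine once the quasi-polynomial size bound is established. With the algorithm in place, Proposition~\ref{p:williams1} delivers $\nexp \nsubseteq \acc[\poly(n)]$, which is exactly the statement of Corollary~\ref{c:acclb1}.
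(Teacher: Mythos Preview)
Your proposal is correct and follows the same approach as the paper: the corollary is presented there as an immediate consequence of combining Proposition~\ref{p:williams1} with the existence of Williams' nontrivial $\acc$-SAT algorithm, exactly as you argue. The paper does not supply a standalone proof beyond the sentence ``Altogether, these results imply the following circuit lower bound,'' so your verification of the closure hypotheses and your sketch of the Beigel--Tarui reduction plus fast evaluation go beyond what the paper itself spells out, but the logical structure is identical.
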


Subsequent work of Williams \citep{DBLP:conf/stoc/Williams13} has extended these techniques to prove the following stronger transference theorem, which provides better circuit lower bounds\footnote{We use $\ne \cap \iocone$ instead of $\ne \cap \cone$ in the statement of Proposition 
\ref{p:williams2} because the proof described in \citep{DBLP:conf/stoc/Williams13} requires this extra condition \citep{privatecomm}.}.

\begin{proposition}[``SAT algorithms yield circuit lower bounds, II'' \citep{DBLP:conf/stoc/Williams13}] \label{p:williams2}~\\ Let $\classc$ be a class of circuit families that is closed under composition and contains $\ac$. There is a $k>0$ such that, if satisfiability of $\classc$-circuits with $n$ variables and $n^{\log^c n}$ size can be solved in $O(2^n/n^k)$ time for every $c$, then $\ne \cap \iocone \nsubseteq \classc[n^{\log n}]$.  
\end{proposition}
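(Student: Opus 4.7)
The plan is to mirror the proof of Proposition \ref{p:williams1} in the quasipolynomial-size regime, arguing by contradiction. Assume for contradiction that $\ne \cap \iocone \subseteq \classc[n^{\log n}]$, and fix $k>0$ large enough that the nondeterministic time hierarchy theorem separates $\ntime(2^n)$ from $\ntime(2^n/n^k)$. The goal is to use the hypothesized nontrivial $\classc$-SAT algorithm, together with the assumed upper bound, to simulate every $\ntime(2^n)$ machine within $\ntime(2^n/n^k)$, thereby contradicting the hierarchy.

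The first step is to establish a quasipolynomial easy-witness lemma tailored to this assumption: if $\ne \cap \iocone \subseteq \classc[n^{\log n}]$, then every $\ne$-machine $M$ admits, on infinitely many input lengths, witnesses whose truth tables are computed by $\classc$-circuits of size $n^{\log^{c_0} n}$ for some absolute constant $c_0$. I would derive this by following the Impagliazzo-Kabanets-Wigderson template: assume toward contradiction that $M$ has accepting inputs with no such small-circuit witness on all but finitely many lengths; then the ``hard truth tables'' arising from the witnesses of $M$ seed a Nisan-Wigderson style pseudorandom generator strong enough to derandomize a universal $\mathsf{MA}$-protocol for $\cone$ at the relevant quasipolynomial parameters. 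Combining this with the Karp-Lipton style collapses forced by $\ne \cap \iocone \subseteq \classc[n^{\log n}]$ places an $\ne$-hard language inside $\classc[n^{\log n}]$ on infinitely many lengths, contradicting itself via diagonalization. The $\iocone$ qualifier is essential here: the derandomization only controls $\cone$ on infinitely many input lengths, so the conclusion is naturally stated in the i.o.~form rather than with $\cone$.

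With the easy-witness lemma in hand, the simulation proceeds as in Williams' original argument. Given input $x$ of length $n$ for a universal $\ne$-machine $M$, nondeterministically guess a $\classc$-circuit $C$ of size $n^{\log^{c_0} n}$ purportedly encoding a valid witness for $M(x)$. To verify correctness, invoke the Cook-Levin reduction to produce an $\ac$ verifier $V$ of polynomial size that checks local consistency of the tableau when queried on bits of the witness, and compose $V$ with $C$ to obtain a single $\classc$-circuit $D$ on $O(n)$ variables of size $n^{\log^{c_1} n}$ for some $c_1$, using closure of $\classc$ under composition and the containment $\ac \subseteq \classc$. The circuit $C$ is a valid witness iff $\neg D$ is unsatisfiable, which the hypothesized $\classc$-SAT algorithm decides in time $O(2^n/n^k)$ since $D$ has quasipolynomial size. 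Adding the $n^{\log^{O(1)} n} = 2^{o(n)}$ cost of guessing and assembling $D$, the total nondeterministic time is $O(2^n/n^k)$, contradicting the nondeterministic time hierarchy and completing the proof.

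The principal obstacle will be the first step, namely establishing the quasipolynomial easy-witness lemma with the correct bookkeeping of input lengths. Williams' proof of Proposition \ref{p:williams1} rests on $\nexp = \mathsf{MA}$ under $\nexp \subseteq \ppoly$, derived via hardness-based derandomization; pushing the argument down to linear exponential time with $n^{\log n}$-size $\classc$-circuits requires careful alignment between the hardness length and the seed length of the pseudorandom generator, and between the length at which $\ne \cap \iocone$ is assumed to be easy and the length at which the witness is extracted. It is in precisely this alignment that only an infinitely-often form of the $\cone$ upper bound propagates through the argument, forcing the $\iocone$ hypothesis in the statement. Once this technical core is in place, the simulation step above follows essentially by the same reasoning as in Proposition \ref{p:williams1}.
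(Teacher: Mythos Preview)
Your approach differs substantially from the paper's. The survey does not prove Proposition~\ref{p:williams2} via an easy-witness lemma plus the nondeterministic time hierarchy. Instead it factors through useful properties: Section~\ref{s:stronger} explains that a nontrivial SAT algorithm for $\classc$ at quasipolynomial size yields a $\mathsf{P}$-property useful against $\classc[n^{\log^c n}]$ for every $c$ (by running the machinery of Section~\ref{s:sat} on a hard \emph{unary} $\ntime[2^n]$ language, so that the verifier's certificates themselves define the property), and Proposition~\ref{p:unary} (proved in Appendix~\ref{a:unary}) converts such properties into the $\ne \cap \iocone$ lower bound. That proof assumes $\ne \cap \iocone \subseteq \classc[n^{\log n}]$, invokes Miltersen--Vinodchandran--Watanabe to place $\dtime[2^{n^{2\log n}}]$ into $\matime[n^{O(\log^3 n)}]$, and derandomizes $\mathsf{MA}$ on an infinite set $S$ of input lengths by guessing a truth table, \emph{certifying its hardness with the polynomial-time useful property}, and feeding it to Umans' generator; applying this to both $L$ and $\overline{L}$ lands $L$ in $\mathsf{i.o.}(\ne \cap \iocone) \subseteq \mathsf{i.o.}\classc[n^{\log n}]$, and a direct $\dtime$ diagonalization finishes.

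Your plan has a genuine gap in the easy-witness step. You propose to seed the generator with ``hard truth tables arising from the witnesses of $M$,'' but inside a nondeterministic derandomization one must guess a seed and \emph{certify} that it is hard; otherwise an easy guessed witness yields a faulty generator and possible false accepts. Your inner hypothesis only guarantees that \emph{some} accepting input $x_n$ at each length has exclusively hard witnesses---it gives no way to recognize which $x_n$ that is, nor to rule out that the witness you guessed (perhaps for a different input) is easy. The paper sidesteps this precisely because a useful property is by definition a polynomial-time predicate, so hardness of the guessed seed is directly verifiable; this is why the argument is routed through useful properties rather than through an easy-witness lemma under the weak hypothesis $\ne \cap \iocone \subseteq \classc[n^{\log n}]$. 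A secondary issue: your final contradiction would need the almost-everywhere nondeterministic time hierarchy (since your easy witnesses exist only i.o.), which you do not invoke; the paper avoids this entirely by diagonalizing in $\dtime$, where almost-everywhere hardness against $\classc[n^{\log n}]$ is immediate.
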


Besides, the following strengthening of Corollary \ref{c:acclb1} is proven in the same paper (the first statement is implicit in his proof).

\begin{corollary}\label{c:acclb2}
$\mathsf{E} \nsubseteq \acc[n^{\log n}]$ or $\mathsf{Quasi}$-$\mathsf{NP} \cap \mathsf{i.o.Quasi}$-$\mathsf{coNP} \nsubseteq \acc[n^{\log n}]$. In particular, $\ne \cap \iocone \nsubseteq \acc[n^{\log n}]$.
\end{corollary}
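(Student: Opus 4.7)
The plan is to instantiate Proposition \ref{p:williams2} with $\classc = \acc$. Closure of $\acc$ under composition is immediate: composing two constant-depth circuits over $\{\gand,\gor,\gnot,\gmod_m\}$ yields a constant-depth circuit of the same type (distinct moduli can be merged into $\gmod_M$ with $M$ the least common multiple), and clearly $\ac \subseteq \acc$. So it suffices to verify the algorithmic hypothesis.

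For that I would invoke Williams' $\acc$-SAT algorithm from \citep{DBLP:conf/coco/Williams11}, which decides satisfiability of $\acc$ circuits on $n$ variables of size $2^{n^{\delta}}$ in time $2^n / n^{\omega(1)}$, for some $\delta = \delta(d,m) > 0$. Since $n^{\log^c n} = 2^{\log^{c+1} n} \le 2^{n^{\delta}}$ for every fixed $c$ and all sufficiently large $n$, this algorithm meets the running-time premise of Proposition \ref{p:williams2} at quasi-polynomial circuit size. Applying the proposition directly yields the ``in particular'' clause $\ne \cap \iocone \nsubseteq \acc[n^{\log n}]$.

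For the sharper dichotomy I would open up the proof of Proposition \ref{p:williams2} and assume for contradiction both $\mathsf{E} \subseteq \acc[n^{\log n}]$ and $\mathsf{Quasi}$-$\mathsf{NP} \cap \mathsf{i.o.Quasi}$-$\mathsf{coNP} \subseteq \acc[n^{\log n}]$. The second inclusion is what allows short $\acc$ witness circuits for $\ne$ computations to be verified in the $\iocone$-regime; the first is used as a hardness-to-easy-witness ingredient, specifically a quasi-polynomial analogue of the Impagliazzo--Kabanets--Wigderson easy-witness theorem, so that an accepting witness for a $\ne$ verifier can be taken to be computable by $\acc[n^{\log n}]$ circuits. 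Plugging the guessed witness circuit into the nontrivial $\acc$-SAT algorithm to speed up verification of the verifier tableau then produces a $\ne$ simulation faster than the nondeterministic time hierarchy permits, a contradiction.

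The main obstacle will be establishing the quasi-polynomial easy-witness lemma and carefully tracking parameters so that the $2^n / n^{\omega(1)}$ savings of the $\acc$-SAT algorithm survives the quasi-polynomial blowup in input length that arises when moving from the $\nexp$-scale of Corollary \ref{c:acclb1} to the $\ne$-scale of Corollary \ref{c:acclb2}. This careful split of the hardness hypothesis into a derandomization/easy-witness half (supplied by $\mathsf{E} \subseteq \acc[n^{\log n}]$) and a nondeterministic-collapse half (supplied by the Quasi-NP assumption) is precisely what forces the disjunctive form of the first statement, since failure of either one separately suffices to complete the argument and yields the corresponding lower bound.
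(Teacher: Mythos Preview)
Your derivation of the ``in particular'' clause is correct and matches the paper: it is exactly an instantiation of Proposition~\ref{p:williams2} with $\classc=\acc$, using Williams' $\acc$-SAT algorithm at quasipolynomial size.

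For the dichotomy, however, your sketch mis-assigns the roles of the two hypotheses, and the argument as written has a gap. The hypothesis $\mathsf{E}\subseteq\acc[n^{\log n}]$ does \emph{not} yield an easy-witness lemma for $\ne$ verifiers: the Impagliazzo--Kabanets--Wigderson argument needs the \emph{nondeterministic} class to have small circuits, not merely $\mathsf{E}$. So the step ``an accepting witness for a $\ne$ verifier can be taken to be computable by $\acc[n^{\log n}]$ circuits'' is unjustified from $\mathsf{E}\subseteq\acc[n^{\log n}]$ alone, and the subsequent $\ntime$-hierarchy contradiction does not go through as you describe.

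The argument the paper has in mind (see the proof of Proposition~\ref{p:unary} in Appendix~\ref{a:unary}, together with Section~\ref{s:stronger}) is structurally different. From $\mathsf{E}\subseteq\acc[n^{\log n}]$ one obtains two things: the Miltersen--Vinodchandran--Watanabe collapse $\dtime[2^{n^{2\log n}}]\subseteq\matime[n^{O(\log^3 n)}]$ (Lemma~\ref{l:mvw}), and, via $\mathsf{P}\subseteq\acc[n^{\log n}]$ and Lemma~\ref{l:ctosize}, the ability to turn the $\mathsf{P}$-property useful against $\acc[n^{\log^c n}]$ coming from the $\acc$-SAT algorithm into one useful against unrestricted circuits of that size. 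That useful property supplies the hard functions needed to derandomize $\matime[n^{O(\log^3 n)}]$ into $\mathsf{Quasi}$-$\mathsf{NP}$ on an infinite set of input lengths, and by running the same argument on complements one lands in $\mathsf{Quasi}$-$\mathsf{NP}\cap\mathsf{i.o.Quasi}$-$\mathsf{coNP}$. Only \emph{then} is the second hypothesis invoked: if $\mathsf{Quasi}$-$\mathsf{NP}\cap\mathsf{i.o.Quasi}$-$\mathsf{coNP}\subseteq\acc[n^{\log n}]$, we get $\dtime[2^{n^{2\log n}}]\subseteq\mathsf{i.o.}\acc[n^{\log n}]$, which a direct diagonalization refutes. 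So the second assumption is not about ``verifying witness circuits in the $\iocone$ regime''; it is the final step that puts the derandomized class back into $\acc$ so that diagonalization applies.
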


The proof of these transference theorems has been simplified since then. In Santhanam and Williams \citep{DBLP:journals/eccc/SanthanamW12}, self-reduction (cf. Allender and Kouck\'y \citep{DBLP:journals/jacm/AllenderK10}) is used to obtain an equivalent circuit from a smaller circuit class given an arbitrary $\nc$ circuit (under some assumptions). This simplifies one of the main technical lemmas from \citep{DBLP:conf/coco/Williams11}.

\subsubsection{Constructivity and circuit lower bounds}

There are three significant barriers to circuit lower bound proofs: relativization (Baker, Gill, and Solovay \citep{DBLP:journals/siamcomp/BakerGS75}), natural proofs (Razborov and Rudich \citep{DBLP:journals/jcss/RazborovR97}), and algebrization (Aaronson and Wigderson \citep{DBLP:journals/toct/AaronsonW09}). Roughly speaking, these barriers can be interpreted as follows: some proof methods are too general, and if a lower bound can be obtained by one of such techniques alone, then we get a contradiction to some known result or a widely believed conjecture\footnote{These barriers can also be interpreted as independence results for some formal theories (\citep{arora1992relativizing}, \citep{razborov1995unprovability}, \citep{DBLP:conf/stoc/ImpagliazzoKK09}).}. As explained by Williams \citep{DBLP:conf/coco/Williams11}, his lower bound proof combines several methods used in modern complexity theory, and each one avoids a particular barrier\footnote{We stress however that there is no widely believed conjecture that leads to pseudorandom function families in $\acc$, and this is an interesting open problem. As far as we know, there may exist a natural proof that $\mathsf{P} \nsubseteq \acc$.}. 

It was proven by Razborov and Rudich that most of the circuit lower bound proofs known at the time proceeded (at least implicitly) as follows. There is a circuit class $\classc$ (say, $\ac$) that one wants to separate from a complexity class $\Gamma$ (say, $\mathsf{P}$). In order to do that, one defines a property $\propp$ of boolean functions (i.e., a subset of all boolean functions), and prove that \emph{no} function in $\classc$ satisfies $\propp$, while there exists some hard function $h \in \Gamma$ for which $\propp(h) = 1$ (in this case, we say that $\propp$ is \emph{useful} against $\classc$). For instance, every $\ac$ function simplifies after an appropriate random restriction (\citep{DBLP:journals/mst/FurstSS84}, \citep{DBLP:conf/focs/Yao85}, \citep{haastad1987computational}), while the parity function is still as hard as before. 

As it turns out, for the property $\propp$ defined in these proofs, there is an efficient algorithm $\mathcal{A}$ (with respect to the size of the truth-table of $f$) that is able to decide whether $\propp(f) = 1$. Such properties are referred to as \emph{constructive} properties. In addition, it is usually the case that a random function satisfies $\propp$ with non-negligible probability ($\propp$ satisfies the \emph{denseness} condition). These two conditions imply that $\mathcal{A}$ can be used to distinguish a function in $\classc$ from a random function. Put another way, if there exists a proof of this form that $\Gamma \nsubseteq \classc$, then there is no pseudorandom function family in $\classc$. 

However, if some number-theoretic problems are exponentially hard on average (an assumption believed to be true by many researchers), then there are pseudorandom functions in circuit classes as small as $\tc_4$ (Naor and Reingold \citep{DBLP:journals/jacm/NaorR04}, Krause and Lucks \citep{DBLP:journals/cc/KrauseL01}). As a consequence, such proofs (dubbed \emph{natural proofs} in \citep{DBLP:journals/jcss/RazborovR97}) are not expected to prove separations for more expressive circuit classes. Unfortunately, most (if not all) known combinatorial proofs of circuit lower bounds implicitly define such properties, and this explains the lack of significant progress obtained so far for more general classes of circuits using these techniques only. The interested reader is referred to Chow \citep{DBLP:journals/jcss/Chow11} and Rudich \citep{DBLP:conf/random/Rudich97} for further developments.

As a consequence, any circuit lower bound proof for more expressive classes must violate either the denseness or the constructivity condition. Williams \citep{DBLP:conf/stoc/Williams13} shed light into this problem, by proving that any separation of the form $\nexp \nsubseteq \classc$ is actually equivalent to exhibiting a \emph{constructive} property $\propp$ that is useful against $\classc$. 

\begin{proposition}[``Constructivity is unavoidable'', {\bf informal} \citep{DBLP:conf/stoc/Williams13}]\label{p:constructivity}~\\
Let $\classc$ be a typical circuit class. Then $\nexp \nsubseteq \classc$ if and only if there exists a constructive property $\propp$ that is useful against $\classc$.
\end{proposition}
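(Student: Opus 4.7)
The plan is to prove the equivalence by handling each direction separately. The ``useful constructive $\Rightarrow \nexp \nsubseteq \classc$'' direction I would prove by contradiction via the generalized Impagliazzo-Kabanets-Wigderson easy-witness lemma for $\classc$. The converse direction, ``$\nexp \nsubseteq \classc \Rightarrow$ useful constructive'', I would prove by constructing $\propp$ directly from an $\nexp$-hard language together with the witnesses of the associated $\ne$-verifier.

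For the first implication, assume for contradiction that $\nexp \subseteq \classc[\poly]$ while $\propp$ is a constructive property useful against $\classc[\poly]$. Consider the $\ne$-machine $M$ that on input $1^n$ guesses a $2^n$-bit string $T$ and accepts iff the constructive tester accepts $T$: by constructivity $M$ runs in nondeterministic time $2^{O(n)}$, and by usefulness $M$ accepts infinitely often. The easy-witness lemma, in its $\classc$-generalization from \citep{DBLP:conf/stoc/Williams13}, says that under $\nexp \subseteq \classc[\poly]$ every $\ne$-machine admits accepting witnesses that are truth tables of $\classc[\poly]$-circuits for infinitely many of its accepting inputs. Applied to $M$, this yields, for infinitely many $n$, a $\classc[\poly(n)]$-circuit $C_n$ whose $2^n$-bit truth table $T_n$ satisfies $\propp(T_n) = 1$, directly contradicting the usefulness of $\propp$.

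For the converse, fix $L \in \nexp \setminus \classc[\poly]$ and, via padding, move to $L' \in \ne \setminus \classc[\poly]$ decided by an $\ne$-verifier $V(x,w)$ of running time and witness length $2^{O(n)}$. The natural target $\propp(T) := [T = L'_n]$ on $2^n$-bit truth tables is clearly useful, but making it constructive requires certifying both $T(x) = 1 \Rightarrow x \in L'$ (an $\ne$-condition checkable via $V$) and $T(x) = 0 \Rightarrow x \notin L'$ (a $\cone$-condition with no obvious short witness). I would handle this by working with augmented truth tables $T \colon \{0,1\}^{n'} \to \{0,1\}$ for $n' = Cn$, viewing $T$ as a concatenation of a candidate $T_0$ for $L'_n$ together with candidate $V$-witnesses $w_x$ for the claimed positive instances, and setting $\propp(T) = 1$ iff $V(x, w_x) = 1$ for every $x$ with $T_0(x) = 1$. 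Constructivity is then immediate ($V$ runs at most $2^n$ times at $\poly(|T|)$ cost each), and any $\classc[\poly(n')]$-circuit for $T$ projects (by fixing the last $n' - n$ input bits to $0$) to a $\classc[\poly(n)]$-circuit for $T_0$. The main obstacle is that a vacuous satisfier $T_0 \equiv 0$ lies in $\classc$ and breaks usefulness: the fully formal proof either strengthens the hypothesis to $\ne \cap \iocone \nsubseteq \classc[\poly]$ (so that $\cone$-refutations for the zeros of $T_0$ can be built into $\propp$, forcing $T_0 = L'_n$) or instantiates $L'$ as a specific Succinct-SAT instance whose $\classc[\poly]$-inapproximability of satisfying assignments is guaranteed by $L \notin \classc[\poly]$ together with the $\nexp$-completeness of Succinct-SAT.
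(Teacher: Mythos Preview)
Your first implication is correct and matches the paper's approach via the easy-witness characterization (Proposition~\ref{p:witnesscircuits}). One small correction: the $\classc$-easy-witness conclusion holds for \emph{all} accepting inputs once the size exponent $c$ is fixed, not merely infinitely many; but since usefulness at parameter $k=c$ supplies infinitely many $n$ at which no $\classc[n^c]$ truth table satisfies $\propp$, and your machine $M$ accepts $1^n$ at each such $n$, the contradiction goes through.

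The second implication has a genuine gap, which you yourself flag but do not close. Your property accepts any augmented table whose projected part $T_0$ is a verifiable \emph{subset} of $L'_n$: not only $T_0\equiv 0$, but also, say, the indicator of a single fixed $x\in L'_n$, which is trivially in $\classc$. So the property as written is not useful. Your two proposed escapes do not save the stated theorem: strengthening the hypothesis to $\ne\cap\iocone\nsubseteq\classc$ proves a different result (this is the content of Proposition~\ref{p:neconeuseful}, not Proposition~\ref{p:constructivity}), and the Succinct-SAT route would require that \emph{every} satisfying assignment of the relevant formula be $\classc$-hard, which $L\notin\classc$ does not give.

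The paper's resolution is to allow $\log N$ bits of advice. One sets $L'=L\cup\{1^n\mid n\in\mathbb{N}\}$ so that $b(n)=|L'_n|\in[1,2^n]$ is always positive and encodable in $n=\log N$ bits. The $\mathsf{NP}$-verifier for $\propp$ first checks $|T_0^{-1}(1)|=b(n)$ and only then guesses and verifies $V$-witnesses for each claimed positive; together these force $T_0=L'_n$ exactly, so the unique satisfier is the hard function. Your witness-absorption trick is then exactly the paper's Proposition~\ref{p:npuseful}, converting this $\mathsf{NP}/\log N$-property into a $\mathsf{P}/\log N$-property. This is why the paper's formal statement (Theorem~\ref{t:eqnexp} / Proposition~\ref{p:usefullog}) carries $\log N$ advice, and why the informal Proposition~\ref{p:constructivity} is explicitly flagged as glossing over this point.
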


In other words, any lower bound proof against $\nexp$ implies the existence of a property that is both useful and constructive, but not necessarily dense. As we explain later in the text, $\propp$ is actually computed with a small amount of advice. We clarify this point in Section \ref{s:resultuseful}, where we discuss some additional results about useful properties and circuit lower bounds.

\subsubsection{Additional transference theorems}\label{s:additional}

As alluded to earlier, several additional transference theorems of the form ``faster algorithms yield circuit lower bounds'' have been discovered. In this section we describe some of these results in more detail. We focus on learning algorithms, derandomization, and algorithms for string compression.~\\

\noindent \emph{Derandomization.} There is a strong connection between the existence of pseudorandom generators and circuit lower bounds (see \citep{DBLP:journals/eatcs/Kabanets02}, \citep{DBLP:journals/jcss/Umans03}). Nevertheless, while there is evidence that pseudorandom generators are \emph{necessary} in order to derandomize probabilistic algorithms (Goldreich \citep{DBLP:books/sp/goldreich2011/Goldreich11g}), i.e. to prove that $\mathsf{P} = \mathsf{BPP}$, this is still open. 

On the other hand, for the larger randomized complexity class $\mathsf{MA}$, it is known that any derandomization (such as $\mathsf{MA} \subseteq \mathsf{NSUBEXP}$) implies superpolynomial circuit lower bounds for $\nexp$ (Impagliazzo, Kabanets and Wigderson \citep{DBLP:journals/jcss/ImpagliazzoKW02}). Subsequent work of Kabanets and Impagliazzo \citep{DBLP:journals/cc/KabanetsI04} shows that even the derandomization of a single, specific problem in $\mathsf{BPP}$ leads to some circuit lower bounds. More precisely, let $\mathsf{PIT}$ be the language consisting of all arithmetic circuits that compute the zero polynomial over $\mathbb{Z}$, and $\perm$ be the problem of computing the permanent of integer matrices. We use $\size[\poly]$ to denote the set of languages computed by polynomial size boolean circuits. Similarly, let $\asize[\poly]$ be the family of languages computed by arithmetic circuits of polynomial size over $\mathbb{Z}$.

\begin{proposition}[``Derandomization yields circuit lower bounds'' \citep{DBLP:journals/cc/KabanetsI04}]\label{p:clbfromderandomization}~\\
If $\pit \in \nsubexp$, then at least one of the following results hold:
\begin{itemize}
\item[\emph{(}i\emph{)}] $\nexp \nsubseteq \size[\poly]$; or
\item[\emph{(}ii\emph{)}] $\perm \nsubseteq \asize[\poly]$.
\end{itemize}
\end{proposition}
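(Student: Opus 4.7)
The plan is to argue by contradiction. Suppose $\pit \in \nsubexp$ and that both alternatives (i) and (ii) fail, i.e., $\nexp \subseteq \size[\poly]$ and $\perm \in \asize[\poly]$. I will derive $\nexp \subseteq \nsubexp$, contradicting the nondeterministic time hierarchy theorem. The permanent plays the role of a bridge: $\nexp$ embeds into $\mathsf{P}^{\perm}$ via a Karp--Lipton-style collapse and Toda's theorem, while $\mathsf{P}^{\perm}$ embeds into $\nsubexp$ via an arithmetic guess-and-verify trick that exploits the downward self-reducibility of $\perm$.

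For the first embedding, by the Impagliazzo--Kabanets--Wigderson theorem the collapse $\nexp \subseteq \size[\poly]$ forces $\nexp \subseteq \mathsf{MA} \subseteq \mathsf{PH}$. Toda's theorem gives $\mathsf{PH} \subseteq \mathsf{P}^{\#\mathsf{P}}$, and since $\perm$ is $\#\mathsf{P}$-complete under polynomial-time Turing reductions (Valiant), $\mathsf{P}^{\#\mathsf{P}} = \mathsf{P}^{\perm}$. Altogether,
\[
\nexp \;\subseteq\; \mathsf{MA} \;\subseteq\; \mathsf{PH} \;\subseteq\; \mathsf{P}^{\#\mathsf{P}} \;=\; \mathsf{P}^{\perm}.
\]

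For the second embedding, I simulate an arbitrary polynomial-time oracle machine $M^{\perm}$ on input $x$ of length $n$ as follows. Let $n^c$ bound the sizes of all oracle queries. Nondeterministically guess a family of polynomial-size arithmetic circuits $C_1, \ldots, C_{n^c}$, where $C_i$ purports to compute $\perm$ on $i \times i$ integer matrices; such a family exists because $\perm \in \asize[\poly]$. Verify the guess inductively: $C_1$ is trivially correct, and for each $i \geq 2$ use the assumed $\nsubexp$ algorithm for $\pit$ to check that
\[
C_i(X) \;-\; \sum_{j=1}^{i} x_{1j} \cdot C_{i-1}\!\left(X^{(1,j)}\right) \;\equiv\; 0,
\]
where $X^{(1,j)}$ is the $(i-1) \times (i-1)$ minor obtained by deleting row $1$ and column $j$ from $X$. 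Whenever all PIT checks succeed, induction on $i$ together with the Laplace expansion of the permanent forces each $C_i$ to compute $\perm$ exactly. Now deterministically simulate $M$, answering every oracle query on an $i \times i$ matrix by evaluating $C_i$, and output $M$'s answer. Since each PIT instance has polynomial size and $\pit \in \nsubexp$, the entire simulation runs in nondeterministic time $2^{n^{o(1)}}$; hence $\mathsf{P}^{\perm} \subseteq \nsubexp$. Combined with the first embedding, $\nexp \subseteq \nsubexp$, contradicting the nondeterministic time hierarchy.

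The main obstacle is the correctness of the guess-and-verify step: although the simulation may nondeterministically write down \emph{any} family of circuits, only families that truly compute $\perm$ can pass every PIT check, by the uniqueness of the Laplace recurrence, so any accepting branch produces the correct value of $M^{\perm}(x)$. A minor technicality is scaling: $\pit \in \nsubexp$ must be invoked on instances of polynomial (rather than linear) size, but since $\bigcap_{\epsilon > 0} \mathsf{NTIME}(2^{m^\epsilon})$ is preserved under the polynomial substitution $m \leftarrow n^{O(1)}$, the sub-exponential bound in $n$ survives.
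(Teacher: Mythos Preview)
Your proof is correct and is essentially the original Kabanets--Impagliazzo argument: collapse $\nexp$ into $\mathsf{P}^{\perm}$ via IKW and Toda, then simulate $\mathsf{P}^{\perm}$ in $\nsubexp$ by nondeterministically guessing arithmetic circuits for $\perm$ and verifying them through the Laplace expansion and the assumed $\pit$ algorithm. The correctness analysis and the handling of the polynomial blow-up in the $\pit$ instances are fine.

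The paper, however, deliberately takes a different route. Rather than invoking the IKW collapse $\nexp = \mathsf{MA}$, it appeals to Corollary~\ref{c:nexpppoly} (a consequence of Williams' framework): if $\nexp \subseteq \size[\poly]$ then there is no \emph{useful} nondeterministic algorithm for $\equivand$-$\size[n^c]$. It then builds such a useful algorithm by (i) guessing a boolean circuit $D_n$ that purportedly decides $\equivand$, (ii) reducing the verification of $D_n$ to a $\mathsf{PH}$ query and thence, via Toda and Valiant, to a $\perm$ query, (iii) guessing an arithmetic circuit for $\perm$ and verifying it through Lemma~\ref{l:arith} (which packages the same Laplace-expansion-plus-$\pit$ trick you wrote out explicitly), and (iv) using the now-verified $D_n$ to answer the original $\equivand$ instance. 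The contradiction is with Corollary~\ref{c:nexpppoly} rather than directly with the nondeterministic time hierarchy.

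Both proofs share the same core ingredients (Toda, $\#\mathsf{P}$-completeness of $\perm$, downward self-reducibility of $\perm$ checked via $\pit$), so the difference is structural rather than technical. Your version is more self-contained and closer to the original source. The paper's version is chosen for a reason it makes explicit: by factoring the argument through the ``useful algorithms'' abstraction, any strengthening of the SAT-algorithms-to-lower-bounds transference (e.g., a proof of Conjecture~\ref{c:unary}) would automatically yield a corresponding strengthening of this derandomization result, such as the $\nexp \cap \conexp$ version of Aaronson and van~Melkebeek. Your direct argument does not expose that modularity.
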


Aaronson and van Melkebeek \citep{DBLP:journals/toc/AaronsonM11} proved a parameterized version of the result, in addition to showing that $\nexp \cap \conexp$ can be used in place of $\nexp$. Another extension appears in Kinne, van Melkebeek and Shaltiel \citep{DBLP:journals/cc/KinneMS12}.~\\

\noindent \emph{Learning algorithms.} Fortnow and Klivans \citep{DBLP:journals/jcss/FortnowK09} were the first to exhibit a formal connection between learning algorithms and circuit lower bounds. Recall that a learning algorithm $\mathcal{A}$ is given restricted access to a fixed but arbitrary function $f$ from a class of functions $\classc$, and it  should output a hypothesis $h$ that is as close to $f$ as possible. Distinct learning models provide difference access mechanisms to $f$, and impose specific requirements over $h$ ($h$ should be close to $f$, $h \equiv f$, etc.) and $\mathcal{A}$ (learner is randomized, deterministic, etc).

An exact learning algorithm is a \emph{deterministic} algorithm that has access to a membership query oracle $\mathsf{MQ}^f$ and an equivalence query oracle $\mathsf{EQ}^f$, and it is required to output a hypothesis $h$ which agrees with $f$ over all inputs\footnote{On input $x \in \{0,1\}^n$, $\mathsf{MQ}^f(x)$ returns $f(x)$. On input a circuit $c$, $\mathsf{EQ}^f$ outputs ``yes'' if $c \equiv f$, otherwise it outputs an arbitrary input $z$ such that $c(z) \neq f(z)$.}.  

\begin{proposition}[``Learning yields circuit lower bounds'' \citep{DBLP:journals/jcss/FortnowK09}]\label{p:clbfromlearning}~\\
Let $\classc$ be a circuit class. If there exists a subexponential time exact learning algorithm for $\classc$, then $\mathsf{E}^{\mathsf{NP}} \nsubseteq \classc$. 
\end{proposition}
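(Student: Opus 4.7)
The plan is to use the exact learner $\mathcal{A}$ as a fast truth-table--based tester of $\classc[s]$-membership, and then diagonalize against $\classc[\poly]$ inside $\mathsf{E}^{\mathsf{NP}}$. The key lemma I will first prove: given the full truth table $T \in \{0,1\}^{2^n}$ of a function $f_T : \{0,1\}^n \to \{0,1\}$, one can decide in time $2^{O(n)}$ whether $f_T \in \classc[s(n)]$ for any fixed polynomial $s$. Simulate $\mathcal{A}$ for its promised $2^{n^{o(1)}}$ steps, answering each membership query $\mathsf{MQ}(x)$ with $T[x]$ and answering each equivalence query $\mathsf{EQ}(h)$ by scanning $T$ for the lex-first $x$ with $h(x)\neq T[x]$ (returning ``equivalent'' only if no such $x$ exists). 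If $\mathcal{A}$ halts and outputs a hypothesis $h$, verify $h \equiv f_T$ by exhaustion over all $2^n$ inputs, and accept iff that verification passes: if $f_T \in \classc[s(n)]$ the learner must succeed by assumption so we accept, and conversely we only accept when $h$ explicitly witnesses $f_T \in \classc[s(n)]$.

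Given this subroutine, I diagonalize inside $\mathsf{E}^{\mathsf{NP}}$. Choose a slowly growing $c(N)\to\infty$ with $c(N)\log N = o(N)$ (e.g.\ $c(N)=\log N$), and set $s(N):=N^{c(N)}$. A counting argument gives $|\classc[s(N)]| \le 2^{\poly(s(N))} = 2^{2^{o(N)}} \ll 2^{2^N}$, so the lex-first truth table $T^* \in \{0,1\}^{2^N}$ whose function $f_{T^*}:\{0,1\}^N \to \{0,1\}$ lies outside $\classc[s(N)]$ exists. Define $L$ by $L_N(x):=T^*[x]$. To compute $L_N(x)$, do bitwise binary search: for $i=1,\ldots,x$, having pinned $P:=T^*[1..i-1]$, call the NP oracle on ``does there exist $T \in \{0,1\}^{2^N}$ extending $P\cdot 0$ with $f_T \notin \classc[s(N)]$?''; by the key lemma this predicate is decidable in time polynomial in $|T|=2^N$ given $T$, so the query is genuinely NP. Set $T^*[i]:=0$ if the answer is yes, else $T^*[i]:=1$; the invariant that the current prefix still extends to some valid $T$ is maintained throughout. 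The whole algorithm runs in $2^{O(N)}$ time with $2^{O(N)}$ NP queries, so $L \in \mathsf{E}^{\mathsf{NP}}$; by construction $L_N \notin \classc[s(N)]$, and since $s(N)$ eventually dominates every fixed polynomial, $L \notin \classc[\poly]$.

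The most delicate point is that an exact learner has \emph{no} correctness guarantee off its promise class: if $f_T \notin \classc[s(n)]$, $\mathcal{A}$ might fail to halt or might output a wrong hypothesis. This is precisely why the key-lemma algorithm both caps the simulation at the announced time bound \emph{and} re-verifies the returned hypothesis against $T$ by brute force; only the combination yields the crisp polynomial-time predicate that the NP oracle relies on. A second, quantitatively milder issue is tuning $c(N)$ so that the learner's running time on circuits of size $N^{c(N)}$ stays inside the $\mathsf{E}^{\mathsf{NP}}$ budget of $2^{O(N)}$; under the standard scaling (polynomial in the circuit size, subexponential in the input length), the bound $c(N)\log N = o(N)$ suffices, and any $c(N)\to\infty$ within that bound defeats every fixed polynomial.
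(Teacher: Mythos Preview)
Your approach is sound in spirit and genuinely different from what the paper does, but there is one real (though easily repaired) gap.

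\textbf{Comparison to the paper.} The survey does not re-prove this $\mathsf{E}^{\mathsf{NP}}$ statement; it attributes it to Fortnow--Klivans and instead proves a variant with conclusion $\nexp \nsubseteq \classc[\poly]$ (Proposition~\ref{p:exactlearning}). That argument goes through a different pipeline: simulate the learner on the full truth table to obtain a \emph{compression} algorithm for $\classc$, then invoke the compression-to-lower-bound transference (Proposition~\ref{p:clbfromapprox}), which in turn rests on useful properties and the Impagliazzo--Kabanets--Wigderson witness-circuit machinery. Your direct diagonalization with an $\np$ oracle is closer to the original Fortnow--Klivans approach and to the later simplification of Klivans--Kothari--Oliveira mentioned in the introduction. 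The paper's route buys uniformity with the rest of the survey's framework; yours is self-contained and matches the stated $\mathsf{E}^{\mathsf{NP}}$ conclusion directly.

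\textbf{The gap.} Your key lemma is stated as an exact decision procedure for membership in $\classc[s(n)]$, and your $\np$ oracle query is literally ``$\exists T$ extending $P\cdot 0$ with $f_T \notin \classc[s(N)]$''. But your procedure does not decide $\classc$-membership. The exact-learning model here is improper: the learner's output $h$ is an arbitrary circuit, not a $\classc$-circuit. So when $f_T \notin \classc[s(n)]$ the learner may still halt within its time bound and output some $h$ of size $2^{n^{o(1)}}$ with $h \equiv f_T$; your test would then accept. The claim ``we only accept when $h$ explicitly witnesses $f_T \in \classc[s(n)]$'' is therefore false. What your test actually computes is a predicate $\mathrm{Test}(T)$ with only the one-sided guarantee $f_T \in \classc[s(n)] \Rightarrow \mathrm{Test}(T)=1$, together with the counting bound $|\{T:\mathrm{Test}(T)=1\}| \le 2^{\poly(2^{n^{o(1)}})} \ll 2^{2^n}$ (since acceptance forces $f_T$ to have a general circuit of size at most the learner's running time). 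The fix is immediate: diagonalize against $\mathrm{Test}$ rather than against $\classc[s]$-membership. Make the $\np$ query ``$\exists T$ extending $P\cdot 0$ with $\mathrm{Test}(T)=0$'', and let $T^*$ be the lex-first $T$ with $\mathrm{Test}(T^*)=0$. The one-sided guarantee still yields $f_{T^*}\notin\classc[s(N)]$, the counting bound still ensures $T^*$ exists, and the rest of your argument goes through unchanged.
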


 The original proof of Proposition \ref{p:clbfromlearning} relies on many complexity theoretic results. Subsequent work done by Harkins and Hitchcock \citep{DBLP:conf/icalp/HarkinsH11} strengthened the conclusion to $\mathsf{EXP} \nsubseteq \classc$. Finally, Klivans, Kothari and Oliveira \citep{KKO} used a very simple argument to prove the essentially optimal result that exact learning algorithms for $\classc[s(n)]$ running in time $t(n)$ lead to a circuit lower bound of the form $\mathsf{DTIME}[\poly(t(n))] \nsubseteq \classc[s(n)]$.\footnote{The same result was obtained independently by Russell Impagliazzo and Valentine Kabanets \citep{privatecomm2}.}

For \emph{randomized} learning algorithms, much weaker results are known (a formal definition of the model is discussed in Section \ref{s:learning}). For instance, \emph{efficient} PAC learning algorithms are known to lead to circuit lower bounds against $\mathsf{BPEXP}$, the exponential time analogue of $\mathsf{BPP}$ \citep{DBLP:journals/jcss/FortnowK09}. A slightly stronger result was obtained by Klivans et al. \citep{KKO}, but the underlying techniques do not provide interesting results for randomized subexponential time algorithms. It is an interesting open problem to obtain such extension.~\\

\noindent \emph{Truth-table compression.} More recently, Chen et al. \citep{DBLP:journals/eccc/ChenKKSZ13} considered the problem of designing efficient algorithms that obtain nontrivial compression of strings representing truth-tables from a circuit class $\classc$. In other words, given a string $tt(f_n) \in \{0,1\}^N$, where $f_n:\Boolean$ is a function from $\classc \subseteq \ppoly$ and $N = 2^n$, a compression algorithm should run in time $\poly(N)$ and output a circuit $C$ over $n$ inputs and size $\ll 2^n/n$ such that $tt(C) = tt(f)$. In the same paper, they observed that several circuit lower bounds proofs relying on the method of random restrictions yield efficient compression algorithms. On the other hand, they obtained the following transference theorem.

\begin{proposition}[``Compression leads to circuit lower bounds'' \citep{DBLP:journals/eccc/ChenKKSZ13}]\label{p:clbfromcompression}~\\
Let $\classc$ be a circuit class. Suppose that for every $c \in \mathbb{N}$ there is a deterministic polynomial-time algorithm that compresses a given truth table of an $n$-variate boolean function $f \in \classc[n^c]$ to an equivalent circuit of size $o(2^n/n)$. Then $\nexp \nsubseteq \classc$. 
\end{proposition}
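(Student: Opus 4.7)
The plan is to reduce this transference theorem to the constructivity theorem (Proposition~\ref{p:constructivity}): it suffices to build, from the hypothesised compression algorithms, a constructive property useful against $\classc$. The construction is natural, since compression is exactly the tool needed to algorithmically ``certify'' membership in $\classc[n^c]$, whereas most boolean functions fail such a certification by counting.

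For each $c \in \mathbb{N}$, let $A_c$ denote the given polynomial-time compression algorithm for $\classc[n^c]$. Given an $n$-variate boolean function $f$ via its truth table $tt(f) \in \{0,1\}^{2^n}$, I run $A_c$ on $tt(f)$, obtain a candidate circuit $D$, and then verify in time $\poly(2^n)$ that $|D| = o(2^n/n)$ and that $D$ agrees with $f$ on every input (by evaluating $D$ on all $2^n$ inputs and comparing to $tt(f)$). Define $\propp_c(f) = 0$ if this verification succeeds, and $\propp_c(f) = 1$ otherwise.

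The property $\propp_c$ is constructive by construction. The correctness of $A_c$ gives $\propp_c(f) = 0$ for every $f \in \classc[n^c]$, and a counting estimate --- circuits of size $o(2^n/n)$ realise at most $2^{o(2^n)}$ distinct functions, whereas there are $2^{2^n}$ functions on $n$ inputs --- ensures that $\propp_c(f) = 1$ for some $f$ at every sufficiently large $n$. Hence $\propp_c$ is a constructive property useful against $\classc[n^c]$; invoking the sufficiency direction of Proposition~\ref{p:constructivity} at the size bound $n^c$ yields $\nexp \nsubseteq \classc[n^c]$.

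The main obstacle is passing from the per-$c$ lower bounds to the single statement $\nexp \nsubseteq \classc$: the running time of $A_c$ is polynomial in $2^n$ but the exponent may depend on $c$, so the family $\{\propp_c\}_{c \geq 1}$ does not combine into a single $\poly(2^n)$-time property useful against $\classc[\poly]$. The standard remedy, foreshadowed in the excerpt's note that constructive properties may be computed ``with a small amount of advice,'' is to treat $c$ as $O(\log n)$ bits of advice and invoke an advice-tolerant form of the constructivity theorem. Equivalently, one performs the diagonalisation underlying Proposition~\ref{p:constructivity} uniformly in $c$ and obtains a single language in $\nexp$ whose $c$-th slice defeats $\classc[n^c]$, giving $\nexp \nsubseteq \classc = \classc[\poly]$ as required.
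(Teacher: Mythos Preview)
Your approach is correct and essentially the same as the paper's: run the compressor on the input truth table, accept exactly when it fails to return a small equivalent circuit (so a counting argument guarantees acceptors while the compression guarantee rules out easy functions), and then invoke the equivalence between constructive useful properties and $\nexp$ lower bounds. The paper carries this out via Proposition~\ref{p:usefullog}, which is exactly your ``advice-tolerant form of the constructivity theorem'' ($\mathsf{P}/\log N$-properties); the paper's formal statement of (lossy) compression in fact posits a \emph{single} algorithm working for every $k$, thereby sidestepping the per-$c$ running-time issue you correctly flag, but your advice-based remedy is the intended fix for the per-$c$ phrasing.
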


It follows from Proposition \ref{p:clbfromcompression} that designing a compression algorithm for $\acc$ would provide an alternative proof of Corollary \ref{c:acclb1}. This is left as an interesting open problem by \citep{DBLP:journals/eccc/ChenKKSZ13}.

\subsection{A guide to the results discussed in this survey}

\subsubsection{Lower bounds from satisfiability algorithms for low depth circuits}\label{s:resultsat}

Let $\tc_2$ denote the class of polynomial size circuits of depth two with gates corresponding to arbitrary linear threshold functions. As far as we know, it may be the case that $\nexp \subseteq \tc_2$. It makes sense therefore to see if the techniques used in the proof of Corollary \ref{c:acclb1} can be helpful in obtaining a separation against bounded-depth circuit classes of this form. 

A more refined version of Proposition \ref{p:williams1} discussed in \citep{DBLP:conf/coco/Williams11} shows that circuit lower bounds for circuits of depth $d$ follow from satisfiability algorithms for depth $2d + O(1)$. We observe here that it is possible to obtain a tight transference theorem for satisfiability algorithms for constant-depth circuits. Let $\classc_d$ be a circuit class consisting of circuits of depth $d$, and $g$ be an arbitrary function. We write $g[k] \circ \classc_d$ to denote the class of functions computed by circuits of depth $d + 1$ consisting of a top layer gate $g$ of fan-in $k$ that is fed by $k$ circuits from $\classc_d$.

\begin{theorem}[``SAT algorithms for depth $d+2$ yield circuit lower bounds for depth $d$'']\label{t:sat}~\\
Let $\classc$ be a reasonable circuit class. If there exists a nontrivial satisfiability algorithm for $\gand[3] \circ \gor[2] \circ \classc_d[\poly]$, then $\nexp \nsubseteq \classc_d[\poly]$.
\end{theorem}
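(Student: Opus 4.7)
The plan is to mirror Williams' transference argument (Proposition \ref{p:williams1}), but to do the depth accounting so carefully that no closure of $\classc_d$ under composition is required: the SAT algorithm we invoke will only have to handle circuits of depth $d+2$. Suppose, for contradiction, that $\nexp \subseteq \classc_d[\poly]$, and fix some $L \in \ntime[2^n]$ witnessed by an NTM $M$ whose accepting witnesses have length $2^{cn}$. The goal is to decide $L$ on inputs of length $n$ in nondeterministic time $2^{cn}/n^{\omega(1)}$; after the usual rescaling by padding, this will contradict the nondeterministic time hierarchy theorem.

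The first ingredient is the ``easy witness'' method of Impagliazzo, Kabanets and Wigderson, adapted to $\classc_d$: under the assumption $\nexp \subseteq \classc_d[\poly]$, for every $x \in L$ there is an accepting witness $y$ which is the truth table of some $C_y \in \classc_d[n^k]$, together with an analogous compression of the Cook--Levin tableau $z$ into some $C_z \in \classc_d[n^k]$. A nondeterministic algorithm for $L$ then guesses $C_y$ and $C_z$ in time $\poly(n)$, and the remaining task is to verify deterministically, in time $o(2^{cn})$, that $(\mathrm{tt}(C_y), \mathrm{tt}(C_z))$ satisfies the $\tilde O(2^{cn})$-clause Cook--Levin 3-CNF $\phi_x$.

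The heart of the proof is to express this verification as the \emph{unsatisfiability} of a single $\gand[3]\circ\gor[2]\circ\classc_d[\poly]$ circuit $\bar V$ over the $cn$-bit clause index $i$. Viewing $i$ as input, $\bar V(i)$ expresses ``the $i$-th clause of $\phi_x$ is violated'', which is naturally an $\gand$ of three per-literal negations. The delicate part is that each per-literal condition must point into $C_y$ or $C_z$ at an address depending on $i$, so a direct implementation would apply $C_y$ (or $C_z$) on top of an addressing circuit and would blow up the depth beyond $d$. The plan is to set up the Cook--Levin reduction, with constant-size tape alphabet and a suitably linear indexing scheme for the tableau, so that (i) the address of every literal is a pure rewiring of the bits of $i$, making the relevant subcircuits $C_y \circ \pi$ and $C_z \circ \pi$ again depth-$d$ circuits from $\classc_d$, and (ii) a single $\gor[2]$ gate per literal suffices to combine the witness-vs-tableau and polarity case analyses. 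I expect this step to be the main obstacle: packing the indexing and polarity bookkeeping into exactly one $\gor[2]$ level, without any depth blowup inside the $\classc_d$ block, requires a careful design of the succinct reduction.

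With $\bar V$ in hand, the hypothesised nontrivial SAT algorithm for $\gand[3]\circ\gor[2]\circ\classc_d[\poly]$ decides unsatisfiability of $\bar V$ in deterministic time $O(2^{cn}/(cn)^k)$. Combined with the $\poly(n)$-time nondeterministic guess of $(C_y, C_z)$, this produces a nondeterministic simulation of $L$ running in time $2^{cn}/n^{\Omega(1)}$; rescaling via padding contradicts the nondeterministic time hierarchy. Hence $\nexp \nsubseteq \classc_d[\poly]$, as claimed.
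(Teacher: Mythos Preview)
Your plan takes a genuinely different route from the paper's. You try to engineer the succinct Cook--Levin reduction so that the clause-violation circuit $\bar V$ lands \emph{directly} in $\gand[3]\circ\gor[2]\circ\classc_d$, with literal addresses being pure rewirings of the clause index $i$; this is essentially the Jahanjou--Miles--Viola strategy, which the paper explicitly flags as ``more involved.'' The paper instead sidesteps all reduction engineering via a \emph{Conversion Lemma}: under the contradiction hypothesis $\nexp\subseteq\classc_d$ one has $\mathsf{P}\subseteq\classc_d$, so every internal gate of the (completely unstructured) clause-violation circuit $B$ has an equivalent $\classc_d$ circuit; the nondeterministic algorithm guesses these gate-by-gate and verifies each guess by solving $\equivandc_d$ (``is $\gand(g_{i_1},g_{i_2})\equiv g_\ell$?''). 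The depth-$(d{+}2)$ SAT hypothesis enters only at this verification step, through the decomposition $\gxor(\gand(f_1,f_2),f_3)=\gor\!\big(\gand(f_1,f_2,\bar f_3),\,\gand(\gor(\bar f_1,\bar f_2),f_3)\big)$, which reduces $\equivandc_d$ to two SAT calls on $\gand[3]\circ\classc_d$ and $\gand[2]\circ\gor[2]\circ\classc_d$, both contained in $\gand[3]\circ\gor[2]\circ\classc_d$.

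Your proposal has a real gap exactly where you flag it. Getting literal addresses to be projections of $i$ is achievable with a very carefully built succinct reduction, but you do not supply one. More seriously, your claim that polarity (and ``witness-vs-tableau'') fits into a single $\gor[2]$ layer is unsupported: writing ``literal $j$ is false'' as $\gor[2]$ of $\classc_d$ circuits typically requires terms like $(C_z\circ\pi_j)\wedge i_{p_j}$, and the definition of ``reasonable'' here guarantees closure under negation and hardwiring of constants, but \emph{not} closure under $\gand$ with an input bit. You also appear to be packing a four-way case split (two polarities $\times$ two tracks) into a two-way $\gor$. The paper's route avoids all of this: it never needs $B$ to have any particular shape, and the only depth accounting is the fixed $\gxor$ identity above, which uses nothing about $\classc_d$ beyond closure under negation.
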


We define reasonable circuit classes in Section \ref{s:preliminaries}. This result can be obtained through a simple extension of the original technique used by Williams \citep{DBLP:conf/stoc/Williams10}. In particular, our presentation avoids the technical details from \citep{DBLP:conf/coco/Williams11}. A similar theorem is described in Jahanjou, Miles and Viola \citep{DBLP:journals/eccc/Viola13}, but the argument they use is more involved. The proof of Theorem \ref{t:sat} and some additional remarks are presented in Section \ref{s:sat}. 

\subsubsection{Useful properties and circuit lower bounds}\label{s:resultuseful}

We discuss in more detail the relation between circuit lower bounds and useful properties (Proposition \ref{p:constructivity}). Useful properties play a fundamental concept in the context of transference theorems, as explained in the next section.

For a uniform complexity class $\Gamma$ (such as $\mathsf{P}$, $\mathsf{NP}$, etc), we say that a property of boolean functions $\propp$ is a $\Gamma$-property if it can be decided in $\Gamma$. We use $\Gamma/s(m)$ to denote the corresponding complexity class with advice of size $s(m)$, where $m$ is the size of the input. Recall that a property is useful against $\classc$ if it distinguishes some hard function from all functions in $\classc$ (a formal definition is presented in Section \ref{s:preliminaries}). 

First we observe that nondeterminism is of no use in the context of useful properties, which is a somewhat surprising result. The proof of this fact relies on some ideas introduced by Williams \citep{DBLP:conf/stoc/Williams13}.

\begin{theorem}[``$\mathsf{NP}$-property yields $\mathsf{P}$-property'']\label{t:npuseful}~\\
Let $\classc$ be a circuit class. If there exists a $\mathsf{NP}$-property that is useful against $\classc[\poly]$, then there is a $\mathsf{P}$-property that is useful against $\classc[\poly]$.
\end{theorem}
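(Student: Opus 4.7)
The plan is to construct the required $\mathsf{P}$-property $\propp'$ from the $\mathsf{NP}$-property $\propp$ by \emph{absorbing} the nondeterministic witness into the truth table of the input function. Let $V$ be a polynomial-time verifier for $\propp$: on input $(tt(g),w)$, where $tt(g)\in\{0,1\}^{2^n}$ encodes $g\colon\{0,1\}^n\to\{0,1\}$ and $|w|=2^{cn}$ for some constant $c$, $V$ decides whether $w$ certifies that $\propp(g)=1$. I will set $n':= cn+1$, so that $N':=2^{n'} \geq 2^n + 2^{cn}$.

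Given a candidate $f\colon\{0,1\}^{n'}\to\{0,1\}$, I would split $tt(f)$ into a prefix of length $2^n$, interpreted as the truth table of some $g\colon\{0,1\}^n\to\{0,1\}$, followed by a block of length $2^{cn}$, interpreted as a candidate witness $w$. I would then define $\propp'(f)=1$ iff $V(tt(g),w)=1$. This test runs in $\poly(N')$ time, so $\propp'\in\mathsf{P}$.

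To verify that $\propp'$ is useful against $\classc[\poly]$, I would argue two things. First, \emph{non-triviality}: since $\propp$ is useful, for infinitely many $n$ some $g$ satisfies $\propp(g)=1$ and hence has a valid witness $w$, so embedding $(tt(g),w)$ into $tt(f)$ as above produces $\propp'(f)=1$ at length $n'$. Second, \emph{hardness against $\classc[\poly]$}: if some $f\in\classc[(n')^k]$ satisfied $\propp'(f)=1$, then since $g(x)=f(0^{n'-n},x)$, hardwiring the first $n'-n$ inputs of the $\classc$-circuit for $f$ to $0$ would produce a $\classc$-circuit for $g$ of size at most $(n')^k=\poly(n)$. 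Hence $g\in\classc[\poly(n)]$ would satisfy $\propp(g)=1$, and if this occurred for infinitely many $n$ it would contradict the usefulness of $\propp$ against $\classc[\poly]$.

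The only step requiring care is checking that substituting constants into the inputs of a $\classc$-circuit yields a circuit still in $\classc$, which is exactly one of the ``reasonable'' closure conditions alluded to in Section~\ref{s:preliminaries}. The conceptual punchline, which I find surprising, is that nondeterminism is free in this context: a witness of length $2^{O(n)}$ fits comfortably inside the truth table of a function on $n'=O(n)$ variables, so the witness can be hidden inside the function itself and then verified deterministically in $\poly(N')$ time.
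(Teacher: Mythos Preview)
Your proposal is correct and follows essentially the same approach as the paper: absorb the $\mathsf{NP}$ witness into the truth table by passing from functions on $n$ variables to functions on $n'=\Theta(n)$ variables, define $\propp'$ via the deterministic verifier applied to the parsed pair, and recover hardness by observing that fixing inputs of a $\classc$-circuit yields a $\classc$-circuit of the same size. The only differences are cosmetic parameter choices (the paper takes $n'=cn$ with witness length $N^c-N$ so that the split is exact, whereas you take $n'=cn+1$ and leave some slack) and that the paper tracks the quantifiers a bit more explicitly by defining the infinite sets $S_k$ and $S'_k=\{cn:n\in S_{k+1}\}$, which you would need to spell out to make your ``if this occurred for infinitely many $n$'' step fully rigorous.
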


Now we discuss in more detail the connection discovered by Williams (Proposition \ref{p:constructivity}) between constructive useful properties ($\mathsf{P}$-properties under our notation) and circuit lower bounds. It turns out that the statement of Proposition \ref{p:constructivity} requires a broader definition, one for which the algorithm deciding the property is allowed inputs of arbitrary size instead of size $N = 2^n$, where $n \in \mathbb{N}$. Put another way, the algorithm receives any string as input, and is allowed to parse its input size as $2^n + k$. Now it is free to interpret $k$ as an advice string of length $\log N$. We clarify this issue here, and observe that Theorem \ref{t:npuseful} together with standard techniques imply the following characterization of $\nexp$ circuit lower bounds.\footnote{We stress that in this survey any algorithm that decides a property of boolean functions works over strings of length $N = 2^n$, where $n \in \mathbb{N}$.}

\begin{theorem}[``Equivalence between $\nexp$ lower bounds and useful properties'']\label{t:eqnexp}~\\
Let $\classc$ be a circuit class. Then $\nexp \nsubseteq \classc[\poly]$ if and only if there exists a $\mathsf{P}/\log N$-property that is useful against $\classc[\poly]$.
\end{theorem}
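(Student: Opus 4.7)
The plan is to establish both implications through the Impagliazzo--Kabanets--Wigderson Easy Witness Lemma, adapted to the class $\classc$, with Theorem \ref{t:npuseful} entering at the derandomization step of that lemma's proof. In both directions the bridge is a natural $\ne$-language built from the witness structure of the property or the hard language.

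\emph{Forward direction.} Assume $\nexp \nsubseteq \classc[\poly]$. A standard padding argument gives $L \in \ne \setminus \classc[\poly]$. The contrapositive of the $\classc$-version of the Easy Witness Lemma then furnishes an $\ne$-machine $M$ with polytime verifier $V$, together with infinitely many hard inputs $x_m \in L(M) \cap \{0,1\}^m$, such that $M(x_m)$ accepts but no accepting witness $w$ of length $N = 2^{am}$ has a $\classc$-circuit of size $\poly(m)$. For each truth-table length $N$, let the advice $\alpha_N \in \{0,1\}^{\log N}$ encode such a hard $x_m$ (arbitrary if none exists at that length), and define $\propp(f,\alpha_N) = V(x_m, f)$. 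Since $V$ runs in time polynomial in $|f| + |x_m| = O(N)$, we obtain $\propp \in \mathsf{P}/\log N$. Usefulness against $\classc[\poly]$ follows: for infinitely many $N$ a genuine witness $w$ for $M(x_m)$ certifies $\propp(w,\alpha_N)=1$, while no $f \in \classc[\poly]$ can satisfy $\propp(f,\alpha_N)=1$, as such an $f$ would be a $\classc[\poly]$-witness for $M(x_m)$, contradicting the hard-witness property.

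\emph{Reverse direction.} Let $\propp$ with algorithm $\mathcal{A}$ and advice sequence $\{\alpha_N\}$ be a $\mathsf{P}/\log N$-property useful against $\classc[\poly]$, and suppose for contradiction that $\nexp \subseteq \classc[\poly]$. Consider the language
\[
L_\propp \;=\; \{\alpha \in \{0,1\}^* : \exists f \in \{0,1\}^{2^{|\alpha|}},\ \mathcal{A}(f,\alpha) = 1\},
\]
decided in $\ne$ by the machine that guesses $f$ and runs $\mathcal{A}$ in time $\poly(2^{|\alpha|})$. Under the hypothesis, the Easy Witness Lemma guarantees that every accepted input $\alpha$ of $L_\propp$ admits an accepting witness $f$ whose truth table has a $\classc$-circuit of size $\poly(|\alpha|)$, i.e.\ $f \in \classc[\poly]$ as a function on $|\alpha| = \log N$ variables. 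Specializing to $\alpha_N$ at the infinitely many $N$ where $\propp$ is nonempty (guaranteed by usefulness), this produces an $f \in \classc[\poly]$ with $\mathcal{A}(f,\alpha_N)=1$, contradicting the rejection clause of usefulness. Hence $\nexp \nsubseteq \classc[\poly]$.

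\emph{Main obstacle.} The crucial technical input is the $\classc$-version of the Easy Witness Lemma. The classical IKW proof for $\ppoly$ combines a Karp--Lipton collapse with the Nisan--Wigderson generator; the adaptation to natural subclasses relies on the self-reducibility framework of Santhanam and Williams cited in the survey. Theorem \ref{t:npuseful} enters precisely at the step where an initially nondeterministic ``guess-and-verify'' useful-property construction must be derandomized into a genuinely deterministic $\mathsf{P}$-property, which is what ultimately produces the $\mathsf{P}/\log N$ rather than $\mathsf{NP}/\log N$ bound. A secondary subtlety is reconciling polynomial quantifiers: the contrapositive of the Easy Witness Lemma defeats one polynomial bound at a time, so one applies it to a universal $\ne$-machine in order to obtain hard witnesses that simultaneously rule out $\classc$-circuits of every polynomial size.
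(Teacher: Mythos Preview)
Your reverse direction is essentially the paper's argument: the verifier $V'(x,w)=\mathcal{A}(w,x)$ is an $\ne$-verifier that, on the correct advice inputs, accepts only witnesses of high $\classc$-complexity, and Proposition~\ref{p:witnesscircuits} then yields $\nexp\nsubseteq\classc$.

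Your forward direction, however, has a real gap. You claim the ``contrapositive of the Easy Witness Lemma'' produces, from $\nexp\nsubseteq\classc$, an $\ne$-verifier with infinitely many inputs admitting only $\classc$-hard witnesses. But the Easy Witness Lemma (Proposition~\ref{p:witnesscircuits}) is the implication $\nexp\subseteq\classc\Rightarrow$ every verifier has easy witnesses; its contrapositive runs \emph{from} hard witnesses \emph{to} $\nexp\nsubseteq\classc$, which is the wrong direction for what you need. You are actually invoking the \emph{converse}, and that converse is neither the Easy Witness Lemma nor obviously true: knowing that the characteristic function of some $L\in\ne$ is $\classc$-hard does not by itself force any verifier for any $\ne$-language to have only $\classc$-hard accepting witnesses on any input. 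Indeed, establishing that converse seems to require work tantamount to the theorem you are trying to prove.

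The paper's forward direction avoids this entirely. From $L\in\ne\setminus\classc$ (arranged so that $L_n\neq\emptyset$ for all $n$) it takes the singleton property $\propp=\{f_n : f_n = L_n\}$. With advice $a(n)=|L_n|\in\{0,1\}^{\log N}$, one checks $\propp(h)$ in $\mathsf{NP}/\log N$ by guessing an $\ne$-witness for each $x$ with $h(x)=1$ and verifying that the number of $1$'s matches the advice. This is where Theorem~\ref{t:npuseful} actually enters: it converts this natural $\mathsf{NP}/\log N$ property into the required $\mathsf{P}/\log N$ property. Your ``Main obstacle'' paragraph therefore misplaces the role of Theorem~\ref{t:npuseful}; in the paper it is not hidden inside an Easy Witness argument but applied directly at the end of the forward direction, and your own forward-direction construction (were it valid) would not use it at all.
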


It makes sense therefore to investigate whether there exists an equivalence between useful properties computed without advice and circuit lower bounds. The following result holds.

\begin{theorem}[``$\ne \cap \cone$ lower bounds and useful properties'']\label{t:eqnecone}~\\
Let $\classc$ be a circuit class. The following holds: 
\begin{itemize}
\item[\emph{(}i\emph{)}] If $\ne \cap \cone \nsubseteq \classc[\poly]$ then there is a $\mathsf{P}$-property that is useful against $\classc[\poly]$.
\item[\emph{(}ii\emph{)}] If for every $c \in \mathbb{N}$ there exists a $\mathsf{P}$-property that is useful against $\classc[n^{\log^{c}n}]$, then $\ne \cap \iocone \nsubseteq \classc[n^{\log n}]$.
\end{itemize}
\end{theorem}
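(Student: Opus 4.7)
The plan is to handle the two directions separately and to deploy Theorem~\ref{t:npuseful} as a black box for direction (i).

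For (i), start with $L \in \ne \cap \cone$ that is not in $\classc[\poly]$. Define the property $\propp_L$ on inputs of length $N = 2^n$ by $\propp_L(T) = 1$ iff $T$ equals the truth table of $L_n$. The $\ne$ and $\cone$ machines for $L$ both run in time $2^{O(n)} = \poly(N)$, so one can nondeterministically verify all $N$ of the conditions $T[x] = L(x)$ by guessing the appropriate accepting computations for the $\ne$ or $\cone$ machine at each point $x$; this puts $\propp_L$ in $\mathsf{NP}$. Usefulness of $\propp_L$ against $\classc[\poly]$ follows because any $\classc[\poly]$ circuit satisfying $\propp_L$ at length $n$ must compute $L_n$ exactly, so a $\classc[\poly]$ family satisfying $\propp_L$ at almost every length would contradict $L \notin \classc[\poly]$ (after, if needed, a routine padding step to strengthen $L \notin \classc[\poly]$ to its almost-everywhere version). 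Applying Theorem~\ref{t:npuseful} then converts $\propp_L$ into a $\mathsf{P}$-property useful against $\classc[\poly]$.

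For (ii), assume for each $c \in \mathbb{N}$ a $\mathsf{P}$-property $\propp_c$ useful against $\classc[n^{\log^c n}]$, and suppose for contradiction that $\ne \cap \iocone \subseteq \classc[n^{\log n}]$. For every $c$, the $\ne$ language $L_c = \{1^n : \exists T \in \{0,1\}^{2^n}, \propp_c(T) = 1\}$ is infinite by nontriviality, and since $\propp_c \in \mathsf{P}$ the machine guessing $T$ and verifying $\propp_c(T) = 1$ runs in nondeterministic time $2^{O(n)}$. The key step is an easy-witness argument in the spirit of Impagliazzo-Kabanets-Wigderson~\citep{DBLP:journals/jcss/ImpagliazzoKW02}, adapted to the $\ne \cap \iocone$ regime: under the contradictory assumption, at infinitely many $n \in L_c$ the $\ne$ verifier admits a witness $T$ that is itself the truth table of a $\classc[n^{\log^{c'} n}]$ circuit for some fixed $c' = c'(c)$. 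Such a $T$ supplies a $\classc[n^{\log^{c'} n}]$ function satisfying $\propp_c$ at infinitely many lengths, and taking $c \geq c'$ contradicts the usefulness of $\propp_c$.

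The main obstacle is the easy-witness step in (ii): the classical lemma of~\citep{DBLP:journals/jcss/ImpagliazzoKW02} presupposes $\ne \subseteq \classc$ outright, whereas our weaker hypothesis $\ne \cap \iocone \subseteq \classc[n^{\log n}]$ only constrains part of $\ne$, so extracting easy witnesses requires invoking both the $\ne$ and the $\iocone$ sides together with careful tracking of the size blow-up from $n^{\log n}$ to $n^{\log^{c'} n}$. This asymmetry between the $\ne \cap \cone$ hypothesis in (i) and the $\ne \cap \iocone$ conclusion in (ii) is precisely where the ``infinitely often'' gap enters. Part (i), by contrast, reduces cleanly to one invocation of Theorem~\ref{t:npuseful}, so the technical weight of the proof is concentrated in part (ii).
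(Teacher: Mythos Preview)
Your treatment of part (i) is essentially the paper's proof: define the property ``$T$ is the truth table of $L_n$'', verify it in $\mathsf{NP}$ using the $\ne$ and $\cone$ verifiers for $L$ on each coordinate, and then invoke Theorem~\ref{t:npuseful}. (The padding remark is unnecessary, since $L \notin \classc[\poly]$ already means that for every $k$ there are infinitely many $n$ with $L_n \notin \classc[n^k]$, which is exactly what usefulness requires.)

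Part (ii), however, has a genuine gap, and the paper's argument is structurally different from what you sketch. Your proposal hinges on the claim that, assuming $\ne \cap \iocone \subseteq \classc[n^{\log n}]$, the $\ne$ verifier for $L_c = \{1^n : \exists T,\ \propp_c(T)=1\}$ must admit $\classc[n^{\log^{c'} n}]$ witness circuits at infinitely many lengths. But the easy-witness lemma of \citep{DBLP:journals/jcss/ImpagliazzoKW02} needs the full hypothesis $\nexp \subseteq \ppoly$ (or at least $\ne \subseteq \classc$) to go through: its proof cycles through a collapse $\nexp = \mathsf{MA} = \cexp$, and that collapse is simply unavailable from $\ne \cap \iocone \subseteq \classc[n^{\log n}]$, which says nothing about $\ne$ languages outside $\iocone$. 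Your language $L_c$ has no evident $\iocone$ structure, so the hypothesis does not constrain it. You flag this obstacle yourself but do not explain how ``invoking both the $\ne$ and $\iocone$ sides together'' would close it; as written, this step is an assertion rather than an argument.

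The paper's route for (ii) avoids the easy-witness lemma entirely. It proceeds by contradiction as you do, but then: (1) from $\ne \cap \iocone \subseteq \classc[n^{\log n}]$ it extracts $\mathsf{E} \subseteq \size[n^{\log n}]$; (2) the Miltersen--Vinodchandran--Watanabe collapse then gives $\dtime[2^{n^{2\log n}}] \subseteq \matime[n^{O(\log^3 n)}]$; (3) the $\mathsf{P}$-properties $\propp_c$ are used \emph{not} to receive easy witnesses but to \emph{supply} hard truth tables that seed Umans' pseudorandom generator, derandomizing $\matime[n^{O(\log^3 n)}]$ into $\ne$ on an infinite set $S$ of input lengths; (4) applying this to both $L$ and $\overline{L}$ places every $L \in \dtime[2^{n^{2\log n}}]$ in $\mathsf{i.o.}(\ne \cap \iocone) \subseteq \mathsf{i.o.}\classc[n^{\log n}]$; (5) direct diagonalization then produces an $L^* \in \dtime[2^{n^{2\log n}}]$ that is almost everywhere outside $\classc[n^{\log n}]$, a contradiction. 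The useful property enters as a source of hardness for derandomization, not as a target for an easy-witness contradiction.
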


One direction follows from Theorem \ref{t:npuseful}, while the other is implicit in Williams \citep{DBLP:conf/stoc/Williams13}. Given these results, the following conjecture seems plausible.

\begin{conjecture}[``Equivalence between $\ne \cap \cone$ lower bounds and useful properties?'']\label{c:eqnecone}~\\
Let $\classc$ be a circuit class. Then $\ne \cap \cone \nsubseteq \classc[\poly]$ if and only if there exists a $\mathsf{P}$-property that is useful against $\classc[\poly]$.
\end{conjecture}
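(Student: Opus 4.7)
The forward direction ``$\ne \cap \cone \nsubseteq \classc[\poly]$ implies the existence of a useful $\mathsf{P}$-property'' is exactly Theorem \ref{t:eqnecone}(i). The entire substance of Conjecture \ref{c:eqnecone} therefore lies in the converse, and my plan would be to start from the natural candidate hard language and then attempt to close the gap that keeps Theorem \ref{t:eqnecone}(ii) stated only for $\ne \cap \iocone$ and $\classc[n^{\log n}]$.

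\emph{Candidate language and $\ne$ algorithm.} Given a $\mathsf{P}$-property $\propp$ useful against $\classc[\poly]$, define $L$ so that its length-$n$ slice $L_n : \{0,1\}^n \to \{0,1\}$ has truth table equal to the lexicographically smallest $T^{(n)} \in \{0,1\}^{2^n}$ with $\propp(T^{(n)}) = 1$. Usefulness guarantees that no $\classc[\poly]$ circuit computes $T^{(n)}$, so $L \notin \classc[\poly]$. Individual bits of $L$ are in $\ne$ via the standard verifier: guess some $T$ with $\propp(T) = 1$ (verifiable in $\poly(2^n)$ time) and output $T_i$. The difficulty is that different $\ne$-branches may guess different $T$'s, so the procedure only computes $L$ up to the advice of \emph{which} canonical witness to commit to, placing $L$ in $\ne/1$ rather than in $\ne$.

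\emph{From $\ne/1$ to $\ne \cap \cone$.} To certify the non-bits of $L$ in $\ne$ I would try the following conditional argument, in the spirit of the easy-witness method of Impagliazzo--Kabanets--Wigderson \citep{DBLP:journals/cc/KabanetsI04} and its refinements in Williams \citep{DBLP:conf/stoc/Williams13}: assume toward contradiction that $\ne \cap \cone \subseteq \classc[\poly]$ and show that this collapse forces, at every sufficiently large $n$, some $\classc[\poly]$ circuit whose truth table satisfies $\propp$, directly contradicting the usefulness of $\propp$. Combined with a Santhanam--Williams-style \citep{DBLP:journals/eccc/SanthanamW12} padding / self-reduction to propagate hardness across input lengths, this would deliver the full conclusion of the conjecture. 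Here I would exploit that $\propp$ is a genuine $\mathsf{P}$-property with no advice, which is precisely the extra structure not used in Theorem \ref{t:eqnecone}(ii).

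\emph{Main obstacle.} I expect the ``every sufficiently large $n$'' and the ``$\poly(n)$ rather than $n^{\log n}$'' parts of the last step to be where the plan breaks, and indeed where Theorem \ref{t:eqnecone}(ii) settles for weaker parameters. The fundamental asymmetry is that an $\ne$ machine can guess a truth table satisfying $\propp$, but a $\cone$ machine must certify the absence of any lex-earlier satisfying truth table; the existing workaround routes through Nisan--Wigderson generators seeded by a hard function and pays both a quasipolynomial blow-up in circuit size and an ``infinitely often'' quantifier. Removing both losses simultaneously appears to require a genuinely new canonicalization of $\ne$-witnesses for $\mathsf{P}$-properties, and in my view this is precisely why Conjecture \ref{c:eqnecone} is left open rather than derived as a corollary of Theorems \ref{t:npuseful} and \ref{t:eqnecone}.
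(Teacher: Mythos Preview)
The statement you are addressing is a \emph{conjecture} in the paper, not a theorem; the paper gives no proof of it and explicitly leaves it open (see the discussion after Proposition~\ref{p:unary} and in Section~\ref{s:stronger}). Your proposal reflects this correctly: you identify that the forward direction is exactly Theorem~\ref{t:eqnecone}(i) (Proposition~\ref{p:neconeuseful} in the text), and that the converse is the open part. Your final paragraph, explaining that the known route through pseudorandom generators incurs both the $n^{\log n}$ blow-up and the infinitely-often quantifier of Theorem~\ref{t:eqnecone}(ii), matches the paper's own account of why the conjecture is not a corollary of existing results. So at the level of assessment, you and the paper agree.

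That said, a couple of technical points in your sketch are imprecise. First, the candidate $T^{(n)}$ need not exist for every $n$: usefulness only promises that for each $k$ there are \emph{infinitely many} $n$ at which $\propp$ is nonempty and avoids $\classc[n^k]$, so your language $L$ must be defined with a default slice on the bad $n$'s. Second, the claim that the naive verifier places $L$ in $\ne/1$ is not right: the procedure ``guess any $T$ with $\propp(T)=1$ and output $T_i$'' does not compute $L$ at all, since different accepting branches may output different bits; one bit of advice per length does not single out the lex-least $T$ (that would require $2^n$ bits). The genuine obstacle you correctly name afterward --- certifying in $\cone$ that no lex-earlier truth table satisfies $\propp$ --- is the real reason the naive approach stalls, and it is the same obstacle the paper implicitly points to when it settles for $\ne \cap \iocone$ and quasipolynomial size in Proposition~\ref{p:unary}.
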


We discuss how this conjecture relates to Williams' program for circuit lower bounds in Section \ref{s:stronger}. The results for useful properties are discussed in Section \ref{s:resultuseful} of the survey.

\subsubsection{Applications}\label{s:resultconnections}

It is possible to use the results mentioned in the previous sections to prove the propositions stated in Section \ref{s:additional}. In particular, several transference theorems are in fact \emph{connected}, and improvements in one framework propagates to other results.

The first application that we discuss is for compression algorithms, as investigated by Chen et al. \citep{DBLP:journals/eccc/ChenKKSZ13}. Observe that Proposition \ref{p:clbfromcompression} shows circuit lower bounds for $\nexp$ from exact compression of truth-tables of polynomial size circuits. As mentioned in the same paper, their result can be extended to show that even \emph{lossy} compression algorithms lead to circuit lower bounds. We flesh out the details here.

We say that a circuit class $\classc$ admits \emph{lossy compression} algorithms if there exists an efficient algorithm $\mathcal{A}$ (over inputs of size $N = 2^n$) that when given as input a truth-table $tt(f)$ from $\classc$, where $f: \{0,1\}^n \rightarrow \{0,1\}$, outputs a circuit $C$ of size $o(2^n/n)$ such that $\Pr_x[C(x) = f(x)] \geq .51$. A more general definition is discussed in Section \ref{s:compression}.

\begin{theorem}[``Circuit lower bounds from lossy compression'']\label{t:lossycompression}~\\
Let $\classc$ be a circuit class. The following results hold.
\begin{itemize}
\item[\emph{(}i\emph{)}] If for every $c \in \mathbb{N}$ there exists a lossy compression algorithm for $\classc[n^c]$, then $\nexp \nsubseteq \classc[\poly(n)]$.
\item[\emph{(}ii\emph{)}] If for every $c \in \mathbb{N}$ there exists a lossy compression algorithm for $\classc[n^{\log^c n}]$, then $\ne \cap \iocone \nsubseteq \classc[n^{\log n}]$.
\end{itemize}
\end{theorem}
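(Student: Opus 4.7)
The plan is to convert any lossy compression algorithm into a constructive useful property and then invoke the useful-property-based lower bounds from Section \ref{s:resultuseful}. Given a lossy compression algorithm $\mathcal{A}$ for $\classc[s(n)]$, I would define $\propp_{\mathcal{A}}$ on strings $tt \in \{0,1\}^N$ with $N = 2^n$ as follows: simulate $\mathcal{A}(tt)$ to obtain a circuit $C$ of size $o(2^n/n)$, exhaustively evaluate $C$ on all $2^n$ inputs to compute its agreement rate with $tt$, and set $\propp_{\mathcal{A}}(tt) = 1$ iff this agreement is strictly less than $0.51$. Since $\mathcal{A}$ runs in $\poly(N)$ time and the exhaustive evaluation costs $2^n \cdot |C| = \poly(N)$, the property lies in $\mathsf{P}$. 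Usefulness against $\classc[s(n)]$ follows because the compression guarantee forces every $f \in \classc[s(n)]$ to fail $\propp_{\mathcal{A}}$, while a Shannon-style counting argument shows that the number of $n$-variate functions $0.51$-approximable by some circuit of size $o(2^n/n)$ is at most $2^{o(2^n)} \cdot 2^{H(0.49) \cdot 2^n} < 2^{2^n}$, so some $h:\Boolean$ satisfies $\propp_{\mathcal{A}}$.

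For part (ii), applying this construction with $s(n) = n^{\log^c n}$ produces, for every $c \in \mathbb{N}$, a $\mathsf{P}$-property useful against $\classc[n^{\log^c n}]$; Theorem \ref{t:eqnecone}(ii) then immediately delivers $\ne \cap \iocone \nsubseteq \classc[n^{\log n}]$. For part (i), I would similarly extract a $\mathsf{P}$-property $\propp_c$ useful against $\classc[n^c]$ for each $c \in \mathbb{N}$, and combine the family $\{\propp_c\}$ into a single $\mathsf{P}/\log N$-property useful against $\classc[\poly]$ using the advice: the $\log N$ advice bits at input length $N = 2^n$ encode a pair $(c(N), T(N))$, where $c(N) \to \infty$ grows slowly enough that $\mathcal{A}_{c(N)}$ terminates within $T(N) \leq \poly(N)$ steps, and the resulting property $\propp(tt)$ is obtained by running $\mathcal{A}_{c(N)}(tt)$ for $T(N)$ steps and checking the output as before. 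For any fixed $f \in \classc[n^{c_0}] \subseteq \classc[\poly]$, at all sufficiently large $N$ we have $c(N) \geq c_0$ and hence $f \in \classc[n^{c(N)}]$ is correctly compressed by $\mathcal{A}_{c(N)}$ within the allowed time, forcing $\propp(tt(f)) = 0$. Theorem \ref{t:eqnexp} then yields $\nexp \nsubseteq \classc[\poly]$.

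The main technical hurdle lies in part (i), namely coordinating the growth of $c(N)$ against the polynomial time budget, since the polynomial degree $\alpha(c)$ of $\mathcal{A}_c$'s running time may itself be unbounded in $c$. The $\log N$ advice budget comfortably accommodates both $c(N)$ and an explicit polynomial bound $T(N) \geq N^{\alpha(c(N))}$, so the construction succeeds provided $c(N)$ is chosen to grow as (essentially) the inverse of $\alpha$. Aside from this bookkeeping, the argument mirrors the exact-compression case of Proposition \ref{p:clbfromcompression}, with the only new ingredient being that $\propp_{\mathcal{A}}$ tolerates a $0.49$ fraction of errors in the output circuit of $\mathcal{A}$.
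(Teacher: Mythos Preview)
Your proposal is correct and takes essentially the same route as the paper: turn the compressor into a constructive useful property (run it on the input truth table, then accept iff its output circuit fails to approximate the input), and invoke Theorems~\ref{t:eqnexp} and~\ref{t:eqnecone}(ii) respectively. Two minor points of divergence: the paper's formal definition of a lossy compression scheme posits a \emph{single} algorithm that works for every size bound $n^k$ (infinitely often), so its construction yields a $\mathsf{P}$-property directly and the advice-based combination you carry out in part~(i) is unnecessary under that reading; and the paper explicitly accepts whenever the compressor's output is malformed or exceeds the target size bound---a case you should also handle, since the compression guarantee only constrains the output on inputs from $\classc$.
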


In other words, any efficient algorithm for lossy compression of strings is either trivial on infinitely many input strings represented by truth-tables from $\tc_2$ (i.e., does not provide a lossy encoding of significantly smaller size), or a new circuit lower bound follows. This result can be obtained as an easy application of Theorem \ref{t:eqnexp}, and its proof is presented in Section \ref{s:compression}.

Next we observe that Proposition \ref{p:clbfromderandomization} (``derandomization yield circuit lower bounds'') follows from the transference theorem for satisfiability. More precisely, Theorem \ref{t:sat} extends to slightly more general algorithms, an observation that we discuss in more detail in Section \ref{s:remarkalgorithm}. Using this generalization, it is possible to prove that if Proposition \ref{p:clbfromderandomization} is false, then a contradiction can be obtained. This proof is presented in Section \ref{s:derandomization}.

In the context of learning algorithms, some extensions of the main result of Fortnow and Klivans \citep{DBLP:journals/jcss/FortnowK09} for exact learning (Proposition \ref{p:clbfromlearning}) follow easily from results for useful properties (Theorems \ref{t:eqnexp} and \ref{t:eqnecone}). In addition, it is not hard to show that even subexponential time randomized learning leads to useful properties decided by efficient randomized algorithms.

\begin{theorem}[``Useful properties and learning algorithms'']\label{t:usefullearning}~\\
Let $\classc = \classc[\poly]$ be a circuit class. If there exists a subexponential time randomized \emph{PAC} learning algorithm for $\classc$, then there exists a $(\mathsf{promise})\mathsf{coRP}$-property that is useful against $\classc$.
\end{theorem}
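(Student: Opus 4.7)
The idea is to define $\propp$ to contain exactly those truth tables on which the randomized PAC learner $L$ fails for every choice of random tape, and to decide $\propp$ by simulating $L$ once and then exactly verifying the resulting hypothesis against the truth table. Fix constants $\epsilon < 1/2$ and $\delta = 1/3$, and assume $L$ runs in time $T(n) = 2^{n^{o(1)}}$ under the uniform distribution, outputting a hypothesis $h$ with $\Pr_x[h(x) \neq f(x)] \leq \epsilon$ with probability at least $1 - \delta$ whenever $f \in \classc$. Bundle every internal coin of $L$ and every coin used to draw a uniform example into a single random seed $R$ of length $s(n) \leq T(n)$, write $L^f(R)$ for the hypothesis produced when $L$ is run with seed $R$ and the example oracle for $f$, and let $q(n) \leq T(n)$ bound the number of oracle queries. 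The proposed useful property is
\[
\propp \;:=\; \Big\{\, tt(f) \in \{0,1\}^N \,:\, \forall R,\; \Pr_x[L^f(R)(x) \neq f(x)] > \epsilon \,\Big\}.
\]

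For usefulness, the disjointness $\propp \cap \classc = \emptyset$ is immediate from PAC correctness, since for $f \in \classc$ at least a $(1-\delta)$-fraction of seeds produce a good hypothesis, and in particular at least one $R$ violates the defining condition. Nonemptiness follows from a counting argument. Once $R$ is fixed, the adaptive transcript of $L$, and thus its output, is determined by the $q(n)$-bit vector $\vec{y}$ of oracle responses, so for each pair $(R,\vec{y})$ there is a single candidate hypothesis $h_{R,\vec{y}}$; any $f$ that $L$ $\epsilon$-approximates under seed $R$ must lie in the Hamming ball of radius $\epsilon N$ around $h_{R,\vec{y}}$, which has size at most $2^{N H(\epsilon)}$. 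Hence the number of functions outside $\propp$ is at most
\[
2^{s(n) + q(n)} \cdot 2^{N H(\epsilon)} \;=\; 2^{\,N H(\epsilon) \,+\, 2^{n^{o(1)}}} \;\ll\; 2^N,
\]
since $H(\epsilon) < 1$ is a constant bounded below $1$, so in fact $\propp$ contains almost every Boolean function on $n$ inputs.

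The promise-$\mathsf{coRP}$ algorithm $\mathcal{A}$ on input $tt(f) \in \{0,1\}^N$ samples a seed $R$ uniformly, simulates $L^f(R)$ by reading $tt(f)$ to answer each example query, extracts the resulting hypothesis $h$ (a circuit of size $\leq T(n) = N^{o(1)}$), computes $d(h,f) = \Pr_x[h(x) \neq f(x)]$ exactly by evaluating $h$ on every $x \in \{0,1\}^n$ (total time $\poly(N)$), and accepts iff $d(h,f) > \epsilon$. If $tt(f) \in \propp$, then by definition every $R$ makes $L$ fail and so $\mathcal{A}$ accepts with probability $1$; if $f \in \classc$, then $L$ succeeds with probability $\geq 2/3$ and so $\mathcal{A}$ rejects with probability $\geq 2/3$. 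Taking $\proppyes = \propp$ and $\proppno = \{tt(f) : f \in \classc\}$ yields the promise-$\mathsf{coRP}$ property useful against $\classc$ required by the theorem.

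The main obstacle, and the reason the conclusion is stated with the promise qualifier, is the gap between ``$L$ fails on every seed'' (yes instances, on which $\mathcal{A}$ accepts with probability exactly $1$) and ``$L$ succeeds on a $2/3$-fraction of seeds'' (no instances containing $\classc$): for truth tables $f \notin \classc$ on which $L$ happens to succeed with some intermediate probability, $\mathcal{A}$ can have two-sided error and so violates the one-sided guarantee of $\mathsf{coRP}$. Amplifying the learner by repetition-and-verify, or redefining $\propp$ via a threshold on the empirical success probability, either shifts the gap elsewhere or destroys the exact-acceptance guarantee on yes instances, so removing the promise seems to require a genuinely different construction; this is the step I expect to be the principal technical bottleneck.
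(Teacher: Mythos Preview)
Your proof is correct and follows essentially the same approach as the paper: simulate the PAC learner once on the given truth table, exactly verify the output hypothesis against $tt(f)$, and accept iff the hypothesis is far from $f$; take $\proppno = \classc$. The only substantive difference is in how nonemptiness of $\proppyes$ is established. The paper invokes its Lemma~\ref{l:hardfunction} (random functions are hard to approximate) and takes $\proppyes$ to be the set of functions not approximable by any subexponential-size circuit, whereas you define $\proppyes$ as the set of functions on which the learner fails on \emph{every} seed and use a direct counting argument over $(R,\vec{y})$ pairs. Your argument is more self-contained and, as a bonus, shows the property is dense; the paper's $\proppyes$ is in fact a subset of yours (any function inapproximable by all small circuits is in particular one on which the learner, which outputs small circuits, always fails), so the two choices yield the same promise-$\mathsf{coRP}$ algorithm. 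Your closing discussion of why the promise qualifier seems necessary is a nice addition that the paper does not make explicit.
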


These transference theorems can be obtained by interpreting learning algorithms as lossy compression schemes, or by relying directly on Theorems \ref{t:eqnexp} and \ref{t:eqnecone}. These results are discussed in more detail in Section \ref{s:learning}.

Overall, these observations show that an improvement of a transference theorem in one framework leads to similar improvements in other frameworks. For instance, a proof of Conjecture \ref{c:eqnecone} implies many interesting results of the form ``nontrivial algorithms yield circuit lower bounds''. More precisely, it immediately implies new transference theorems for both (lossy) compression and satisfiability algorithms, and an alternative proof of the extension of Proposition \ref{p:clbfromderandomization} obtained by Aaronson and van Melkebeek \citep{DBLP:journals/toc/AaronsonM11}. Moreover, a direct improvement of the transference theorems for satisfiability is likely to imply a similar strengthening of Proposition \ref{p:clbfromderandomization}.

\subsubsection{An overview of the results}\label{s:summary}

For convenience of the reader, Figure \ref{fig:summary} summarizes the relations between algorithms and circuit lower bounds discussed in this survey.

\begin{figure}[!h]
\centering
\includegraphics[scale=.75]{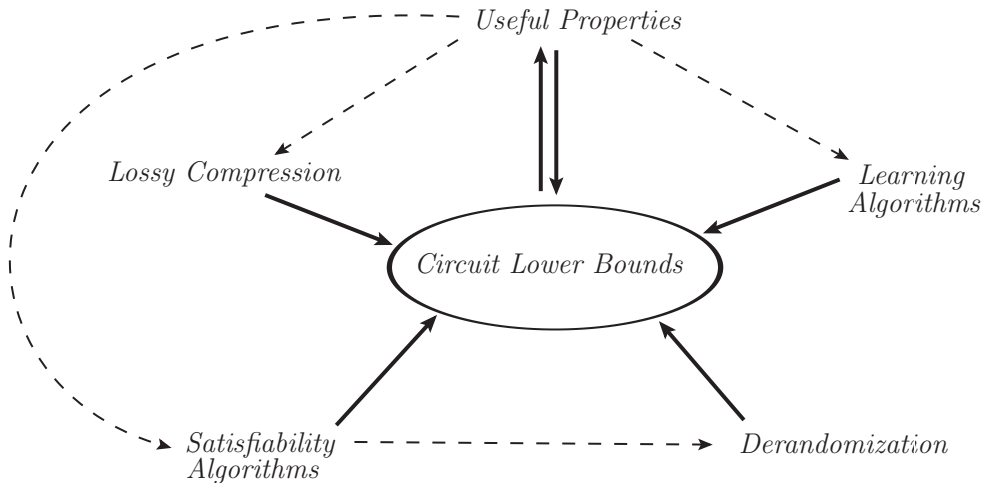}
\caption{\small Bold arrows represent transference theorems, while a dotted arrow from $A$ to $B$ indicates that an improvement of the transference theorem for $A$ implies a similar improvement of the transference theorem for $B$.}
\label{fig:summary}
\end{figure}

\section{Preliminaries and Notation}\label{s:preliminaries}

We assume familiarity with basic notions from computational complexity theory. The reader is referred to Arora and Barak \citep{DBLP:books/daglib/0023084} and Goldreich \citep{DBLP:books/daglib/0019967} for more details. For convenience, we postpone some definitions that are specific to a particular section of the paper to that corresponding section. 

A boolean function is any function $h_n: \Boolean$, for some fixed $n \in \mathbb{N}$. We say that $h$ is a family of boolean functions if $h = \{h_n\}_{n \in \mathbb{N}}$. Any family of functions corresponds to a language $L \subseteq \{0,1\}^*$, and vice versa. For a language $L$, we use $L_n$ to denote $L \cap \{0,1\}^n$. 

We will use $\Gamma$ to denote uniform complexity classes such as $\mathsf{P}$, $\mathsf{coRP}$ and $\mathsf{NP}$. Sometimes we will extend these complexity classes to the corresponding classes with advice of size $s(m)$, where $m$ is the input size. In this case, we use $\Gamma/s(m)$. 
A language $L$ is in $\mathsf{i.o.}\Gamma$ if there exists $L' \in \Gamma$ such that $L_n = L'_n$ for infinitely many values of $n$.

Following \citep{DBLP:conf/stoc/Williams13}, we say that a circuit class $\classc$ is \emph{typical} if $\classc \in \{\ac, \acc, \tc, \nc, \ppoly\}$. The results stated for typical classes hold for more general circuit classes. We use $\size[s(n)]$ to denote the family of functions computed by circuits of size $s(n)$. Similarly, $\asize[s(n)]$ denotes the family of functions computed by arithmetic circuits of size $s(n)$.  Although each circuit class corresponds to a set of languages, we may abuse notation and say that a given circuit $D$ is from $\classc$. In general, for any circuit class $\classc$, let $\classc_d[s(n)]$ be the family of functions computed by circuits from $\classc$ of depth $d$ and size $s(n)$, where the size of a circuit is the number of gates in the circuit. If for convenience we omit $s(n)$, assume the circuits are of polynomial size. For instance, $\tc_2[n^2]$ corresponds to the class of languages computed by circuits of depth-two with $O(n^2)$ gates, each one corresponding to some linear threshold function. All circuit classes considered here are \emph{non-uniform}. If we mention a circuit $D$ of size $s(n)$ without attributing it to a specific circuit class, assume it is composed of $\gand$, $\gor$ and $\gnot$ gates of fan-in at most two.

In order to prove a tight transference theorem for some circuit classes, we make the following definition.

\begin{definition}\label{d:reasonable}
A circuit class $\classc$ is \emph{reasonable} if:
\begin{itemize}
\item[\emph{(}i\emph{)}] The constant zero function $f:\Boolean$ with $f(x) = 0$ for every input $x$ is in $\classc$.
\item[\emph{(}ii\emph{)}] For every function $g \in \classc$, the function $\bar{g} = \gnot(g)$ is in $\classc$, i.e., $\classc$ is closed under complementation. In addition, there is an efficient algorithm that, given the description of a circuit computing $g$, outputs a circuit from $\classc$ of the same size computing $\bar{g}$.
\item[\emph{(}iii\emph{)}] The gates of circuits from $\classc$ may have direct access to constant inputs $0$ and $1$ in addition to the input variables and their negations\footnote{This allows us to hardwire some values without increasing the depth of the circuit.}.
\item[\emph{(}iv\emph{)}] Any language in $\classc[\poly(n)]$ is in $\ppoly$. 
\end{itemize}
\end{definition}

The results that are stated for reasonable classes hold for more general circuit classes, but for simplicity we stick with this definition. In any case, most circuit classes are reasonable (in the sense of Definition \ref{d:reasonable}), including $\ac$, $\tc_2$, $\nc$, $\ppoly$, etc.

We say that a deterministic algorithm is \emph{nontrivial} if it runs in time $2^n/n^{\omega(1)}$. We may use this terminology to talk about nondeterministic and randomized algorithms with similar time bounds.

The following folklore result shows that to prove a circuit lower bound for $\mathsf{P}$ it is enough to obtain a circuit lower bound for the non-uniform class $\ppoly$.

\begin{lemma}\label{l:trick}
Let $\classc_d[\poly(n)]$ be a reasonable circuit class. If $\mathsf{P} \subseteq \classc_d[\poly(n)]$, then for every $b \in \mathbb{N}$ there exists a $t \in \mathbb{N}$ such that every boolean circuit over $n$ inputs of size $n^b$ admits an equivalent circuit from  $\classc_d$ of size $n^t$.
\end{lemma}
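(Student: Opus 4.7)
The plan is to reduce the claim to the standard fact that the Circuit Value Problem is in $\mathsf{P}$, and then to apply the hypothesis $\mathsf{P} \subseteq \classc_d[\poly(n)]$ to this single problem, finally hardwiring the description of an input circuit into the resulting $\classc_d$-circuit.

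First, fix $b \in \mathbb{N}$ and consider the parameterized circuit value problem
\[
L_b \;=\; \{\,\langle D,x\rangle : D \text{ is a boolean circuit on } n \text{ inputs of size } n^b, \, x \in \{0,1\}^n,\, D(x) = 1\,\}.
\]
With a suitable encoding, an instance $\langle D, x \rangle$ has length $m = \Theta(n^b \log n)$, and $L_b$ can be decided in time $\poly(m) = \poly(n)$ by straightforward bottom-up gate evaluation. Hence $L_b \in \mathsf{P}$.

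Next, invoke the hypothesis: since $\mathsf{P} \subseteq \classc_d[\poly(n)]$, there is a constant $c = c(b)$ and a family $\{E_m\}_{m \in \mathbb{N}}$ of $\classc$-circuits of depth $d$ and size at most $m^c$, with $E_m$ computing $L_b$ on inputs of length $m$. Now, given an arbitrary boolean circuit $D$ on $n$ inputs of size $n^b$, let $m = m(n,b)$ be the length of the encoding $\langle D,x\rangle$ for $x \in \{0,1\}^n$. Take the circuit $E_m$ and hardwire the $m - n$ bits of $\langle D\rangle$ as constants into the appropriate input wires, leaving only the $n$ wires corresponding to $x$. Because $\classc$ is reasonable, item (iii) of Definition \ref{d:reasonable} guarantees that gates may take the constants $0$ and $1$ directly as inputs, so the hardwired circuit is still in $\classc_d$, still has depth $d$, and has size at most $m^c = O((n^b \log n)^c) \leq n^{t}$ for some $t = t(b)$ depending only on $b$. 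By construction, this hardwired circuit computes $x \mapsto D(x)$.

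The argument is essentially routine; the only point where one must be careful is to ensure that the hardwiring step does not inflate the depth beyond $d$, which is precisely why Definition \ref{d:reasonable}(iii) is imposed. With that in hand, the choice $t = \lceil bc \rceil + O(1)$ works uniformly for all circuits of size $n^b$, completing the proof.
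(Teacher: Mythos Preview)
Your proof is correct and follows essentially the same approach as the paper: define the (parameterized) Circuit Value Problem $L_b$, use $\mathsf{P} \subseteq \classc_d[\poly(n)]$ to get a $\classc_d$-circuit family for it, and then hardwire the description of the given circuit using Definition~\ref{d:reasonable}(iii) to preserve depth. If anything, you are slightly more careful than the paper about tracking the encoding length $m = \Theta(n^b \log n)$ when applying the hypothesis.
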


\begin{proof}
Assume that $\mathsf{P} \subseteq \classc_d[\poly(n)]$. Consider the following problem:
\begin{center}
$\mathsf{Circuit}$-$\mathsf{Eval}_b = \{ \langle E,x \rangle\;:\; E~\text{is a circuit on}~n~\text{variables of size}\:\leq n^b~\text{and}~E(x)=1\}$
\end{center}
Clearly, $\mathsf{Circuit}$-$\mathsf{Eval}_b$  is in $\mathsf{P}$ (for any fixed $b$), and thus there exists $t$ such that $\mathsf{Circuit}$-$\mathsf{Eval}_b \in \classc_d[n^t]$. In other words, there exists a sequence $\{D_n\}_{n \in \mathbb{N}}$ of circuits from $\classc_d$ of size $O(n^t)$ that computes $\mathsf{Circuit}$-$\mathsf{Eval}_b$.

Let $E_n: \Boolean$ be a function over $n$ boolean variables computed by a circuit of size at most $n^b$. We can hardwire the description of $E_n$ inside circuit  $D_n$ (recall that $\classc$ is reasonable, and that this operation does not increase the depth of the circuit). The resulting circuit is in $\classc_d$, has size at most $n^t$, and it computes $E_n$ by definition of $D_n$.
\end{proof}

The next definition will play an important role in many results discussed later.

\begin{definition}[Properties that are useful against $\classc$ \citep{DBLP:conf/stoc/Williams13}] 
A property of boolean functions is a subset of the set of all boolean functions. For a typical circuit class $\classc$, a property $\propp$ is said to be \emph{useful} against $\classc$ if, for all $k$, there are infinitely many positive integers $n$ such that
\begin{itemize}
\item $\propp(f_n)$ is true for at least one function $f_n:\Boolean$, and
\item $\propp(g_n)$ is false for all functions $g_n:\Boolean$ that admit circuits from $\classc[n^k]$.
\end{itemize}

\noindent We say that $\propp$ is a $\Gamma$-\emph{property} if, given the truth-table $tt(f_n) \in \{0,1\}^N$ \emph{(}where $N = 2^n$\emph{)} of any boolean function $f_n:\Boolean$, $\propp(f_n)$ can be decided in complexity class $\Gamma$. In other words, the language $$L_{\propp} = \{w \in \{0,1\}^N \mid w = tt(f_n)~\textit{for some function}~f_n:\Boolean~\textit{with}~\propp(f_n)=1\}$$ is in $\Gamma$.
\end{definition}

A useful property distinguishes some ``hard'' function from all easy ones. This is weaker than the notion of \emph{natural properties} studied by \citep{DBLP:journals/jcss/RazborovR97}, which also requires $\propp$ to be dense, i.e., $\propp(f)=1$ for a non-negligible fraction of functions.

Recall that a verifier $V$ for a language $L \in \mathsf{NTIME}[t(n)]$ satisfies the following properties:
\begin{itemize}
\item $V(x,w)$ runs in deterministic time $O(t(n))$, where $n = |x|$.
\item $x \in L$ if and only if there exists $w \in \{0,1\}^{O(t(n))}$ such that $V(x,w) = 1$. 
\end{itemize}

\noindent If $L \in \nexp$ and $V$ is a verifier for $L$ running in time $2^{n^{O(1)}}$, we say that $V$ is a $\nexp$-verifier for $L$. Similarly, we may talk about $\ne$-verifiers running in time $2^{O(n)}$.

\begin{definition}
Let $\classc$ be a typical circuit class. We say that a $\nexp$-verifier $V$ for a language $L \in \nexp$ \emph{admits witness circuits from} $\classc[s(n)]$ if for all $x \in L$, there exists a circuit $C \in \classc[s(n)]$ such that $V(x,tt(C)) = 1$. 
\end{definition}

\begin{proposition}[Impagliazzo et al. \citep{DBLP:journals/jcss/ImpagliazzoKW02}, Williams \citep{DBLP:conf/coco/Williams11}] \label{p:witnesscircuits}
Let $\classc$ be a typical circuit class. If $\nexp \subset \classc$ then for any language $L \in \nexp$ and every $\nexp$-verifier $V$ for $L$, there exists $c \in \mathbb{N}$ such that $V$ admits witness circuits from $\classc[n^c]$.
\end{proposition}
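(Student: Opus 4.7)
The plan is to invoke the easy witness lemma of Impagliazzo, Kabanets and Wigderson \citep{DBLP:journals/jcss/ImpagliazzoKW02}, in the form used by Williams \citep{DBLP:conf/coco/Williams11}. Let $V$ be an $\nexp$-verifier for $L$ running in time $T(n) = 2^{n^d}$, so that witnesses have length $T(n)$; assume $\nexp \subseteq \classc[\poly]$. The idea is to extract a candidate witness bit by bit from a small $\classc$-circuit decider for an auxiliary $\nexp$ language, and then argue that the candidate is in fact a valid witness.

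Concretely, consider
\[
L^{\mathrm{bit}} \;=\; \bigl\{ (x, i) : 1 \le i \le T(|x|),\ \exists\, w \in \{0,1\}^{T(|x|)} \text{ with } V(x,w)=1 \text{ and } w_i = 1 \bigr\}.
\]
Guessing $w$ shows $L^{\mathrm{bit}} \in \nexp$, so by hypothesis there exist a constant $a$ and circuits $\{D_m\} \subseteq \classc[m^a]$ deciding $L^{\mathrm{bit}}$. An instance $(x,i)$ with $|x|=n$ has length $O(n^d)$, so $D$ has size $O(n^{ad})$. For each $x$, hardwiring $x$ into $D$ yields a circuit $C_x \in \classc$ of size $O(|x|^{ad})$ on $n^d$ input bits; its truth-table $\hat{w}_x := tt(C_x)$ is a candidate witness for $x$, and it is produced by a fixed $\classc[|x|^{ad}]$ family as desired.

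The crux of the proof is to verify that $V(x, \hat{w}_x) = 1$ for all but finitely many $x \in L$. If this fails, one produces an infinite family of inputs $x \in L$ admitting no valid witness of $\classc$-circuit complexity $|x|^a$, and leverages them to construct an $\nexp$ language $H$ whose truth-tables require $\classc$-circuits of size $\omega(n^b)$ for every fixed $b$, contradicting $\nexp \subseteq \classc[\poly]$. This is the standard IKW hardness-randomness argument, where closure of $\classc$ under negation and the inclusion $\ac \subseteq \classc$ keep all intermediate constructions inside $\classc$; I would expect the main obstacle to be precisely this lifting of the argument from the classical setting $\classc = \ppoly$ to an arbitrary typical $\classc$. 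Once the validity step is in place, $\hat{w}_x$ is a valid witness computed by $C_x \in \classc[|x|^{ad}]$ for almost every $x \in L$, and the finitely many exceptions can be absorbed by hardwiring a short lookup table into $C_x$ (using that $\classc$ admits direct access to the constants $0$ and $1$), yielding witness circuits in $\classc[n^c]$ with $c := ad + 1$.
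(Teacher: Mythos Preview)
The paper does not prove this proposition; it simply cites Impagliazzo--Kabanets--Wigderson and Williams. So the comparison is to the standard \emph{easy witness} argument from those papers, which is a pure hardness-versus-randomness proof by contradiction: if some verifier $V$ fails to have $\classc[n^c]$ witness circuits for every $c$, then for infinitely many $x\in L$ \emph{every} accepting witness of $V$ on $x$ is a hard function, and such a hard witness seeds a pseudorandom generator (Proposition~\ref{p:umans}) that derandomizes $\mathsf{MA}$; combined with the collapse $\cexp = \mathsf{MA}$ coming from $\nexp\subseteq\ppoly$, this contradicts the time hierarchy. No auxiliary ``bit'' language is involved.

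Your $L^{\mathrm{bit}}$ construction is a detour that does not work as written. The truth-table $\hat w_x$ you build satisfies $\hat w_x[i]=1$ iff \emph{some} accepting witness has $i$-th bit $1$; different coordinates may be witnessed by different $w$'s, so $\hat w_x$ need not be an accepting witness at all, even when $x$ has many accepting witnesses of tiny $\classc$-complexity. Consequently your key inference---``if $V(x,\hat w_x)\neq 1$ for infinitely many $x\in L$, then those $x$ admit no valid witness of $\classc$-complexity $|x|^a$''---is false, and the bridge from the failure of $\hat w_x$ to the IKW hypothesis is broken. What remains of your sketch is the one sentence deferring to ``the standard IKW hardness-randomness argument'', which is indeed the entire proof; the $L^{\mathrm{bit}}$ portion neither contributes to it nor gives an alternative route. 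If you want a bit-extraction approach to succeed you would need something like a $\Sigma_2$-type search (fix a prefix and ask whether it extends to a witness) to pin down a single witness, but $\nexp\subseteq\classc$ alone does not supply that.
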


\begin{definition}
Given functions $f,g: \Boolean$ and $\delta > 0$, we say that $g$ computes $f$ with advantage $\delta$ if $$\Pr_{x \in_R \{0,1\}^n} [f(x) = g(x)] \geq \frac{1}{2} + \delta.$$
\end{definition}
 
The results for learning algorithms and lossy compression rely on the following fact.

\begin{lemma}[``Random functions are hard to approximate'']\label{l:hardfunction}~\\
There exists a constant $\alpha > 0$ such that for any sufficiently large $n$, there exists a function $h: \Boolean$ that cannot be computed with advantage $\delta > 0$ by any circuit of size $\alpha \cdot 2^n \delta^2 /n$.  
\end{lemma}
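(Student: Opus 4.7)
The plan is to use a standard counting (probabilistic) argument: show that the number of boolean functions that can be approximated by some small circuit is a strict minority of all boolean functions on $n$ variables, so at least one hard function must exist.

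First I would bound the number of distinct circuits of a given size. Any circuit on $n$ inputs with at most $s$ gates of fan-in two over $\{\gand, \gor, \gnot\}$ can be encoded by a string of length $O(s \log s)$ (specifying, for each gate, its type and the indices of its two inputs among the $n$ variables and $s$ gates). Hence there are at most $2^{c \cdot s \log s}$ such circuits for an absolute constant $c > 0$, provided $s \geq n$.

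Next, I would bound, for a fixed circuit $C$, the number of functions $f : \{0,1\}^n \to \{0,1\}$ that $C$ computes with advantage $\delta$. Equivalently, I want to count the functions $f$ that agree with $C$ on at least $(1/2 + \delta) \cdot 2^n$ inputs out of $N = 2^n$. A simple Chernoff (or even direct binomial-tail) bound gives
\[
\Pr_{f}\!\left[\,\Pr_x[f(x)=C(x)] \geq \tfrac{1}{2} + \delta\,\right] \;\leq\; 2^{-2\delta^2 N},
\]
when $f$ is a uniformly random function. Thus at most $2^{N - 2 \delta^2 N}$ functions are $\delta$-approximated by $C$.

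Finally, I would take a union bound over all circuits of size $s$. The total number of functions $\delta$-approximated by some circuit of size $s$ is at most
\[
2^{c \cdot s \log s} \cdot 2^{N - 2 \delta^2 N}.
\]
For this to be strictly less than $2^N$, it suffices that $c \cdot s \log s < 2 \delta^2 N$. Choosing $s = \alpha \cdot 2^n \delta^2 / n$ gives $s \log s = O(s \cdot n) = O(\alpha \cdot \delta^2 \cdot 2^n)$, so for a small enough absolute constant $\alpha > 0$ the inequality holds and a hard function $h$ must exist. The only mild subtlety is handling the degenerate case $s < n$ (where the circuit cannot even read all inputs, making the statement trivial) and ensuring $\log s = O(n)$, which holds since $s \leq 2^n$; these are routine.
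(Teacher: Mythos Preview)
Your proposal is correct and follows essentially the same approach as the paper: bound the number of size-$s$ circuits by $2^{O(s\log s)}$, apply the Chernoff--Hoeffding bound to show each fixed circuit $\delta$-approximates a random function with probability at most $\exp(-2\delta^2 N)$, and take a union bound. The paper's proof is terser but the argument is identical.
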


\begin{proof}
Fix any circuit $C:\Boolean$. Using the Chernoff-Hoeffding bound, we get that the probability that $C$ computes a random function $r: \Boolean$ with advantage $\delta$ is at most $\exp(-2 \delta^2N)$, where $N = 2^n$ as usual. There are at most $2^{O(s(n) \log s(n))}$ functions on $n$ inputs computed by circuits with $s(n)$ gates. Therefore, it follows by a simple union bound that for some $\alpha > 0$, there exists a function $h$ that is not computed with advantage $\delta$ by any circuit of size $\alpha \cdot 2^n \delta^2 /n$.
\end{proof} 
\section{Lower bounds from SAT algorithms for low depth circuits}\label{s:sat}

In this section we present the transference theorem for satisfiability algorithms. We start with the following definition.

\begin{definition}
Let $\classc$ be a circuit class. We define the computational problem $\equivandc$ as follows. Given the description of circuits from $\classc$ computing functions $f_1,f_2,f_3: \Boolean$, check if $\gand(f_1,f_2)(x) = f_3(x)$ for every $x \in \{0,1\}^n$. The $\equivorc$ problem is defined analogously. 
\end{definition}

\begin{remark}\label{r:equivorc} Observe that if $\classc_d$ is reasonable, then an algorithm for $\equivandc_d$ can be used to solve $\classc_d$-$\mathsf{SAT}$. Moreover, the same algorithm can be used to solve $\equivorc$, since two functions are equivalent if and only if their negations are equivalent \emph{(}by assumption, any reasonable circuit class is closed under negations\emph{)}.
\end{remark}

The proof presented here follows closely the original argument used by Williams \citep{DBLP:conf/stoc/Williams10}, which works for $\ppoly$. However, we introduce a new technique that allows us to obtain an equivalent $\classc$-circuit from a general $\mathsf{P}/\mathsf{poly}$-circuit. It simplifies the proof in \citep{DBLP:conf/coco/Williams11}, and provides a tighter connection between satisfiability algorithms and circuit lower bounds in the case of bounded-depth circuits. The proof of the next lemma is partially inspired by some ideas in Rossman \citep{rossmanphdthesis}.

\begin{lemma}[``Conversion Lemma'']\label{l:translation}
Let $\classc$ be a reasonable circuit class, and suppose that $\mathsf{P} \subseteq \classc$. In addition, assume that there is a nontrivial algorithm for $\equivandc$. Then there exists a nondeterministic algorithm $\mathcal{N}$ with the following properties. Given as input any circuit $B$ over $m$ variables of size $m^b$,  
\begin{itemize}
\item $\mathcal{N}$ has at least one accepting path, and in every accepting path it outputs a circuit $G$ from $\classc[m^t]$ that is equivalent to $B$ \emph{(}where $t = O(b)$\emph{)}.  
\item $\mathcal{N}$ runs in time at most $\frac{2^m}{s(m)}$, for some superpolynomial function $s(m)$.
\end{itemize} 
\end{lemma}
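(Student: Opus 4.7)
The plan is to produce an equivalent $\classc$-circuit for $B$ by processing $B$ gate by gate, nondeterministically guessing a $\classc$ representation of each gate's output function and using the $\equivandc$ algorithm to verify these guesses against the syntactic structure of $B$. Fix $B$ with gates $g_1, g_2, \ldots, g_N$ in topological order, where $N \leq m^b$; each $g_i$ computes some function $h_i: \{0,1\}^m \to \{0,1\}$. Since $\mathsf{P} \subseteq \classc$, Lemma~\ref{l:trick} guarantees a $t = t(b)$ such that every boolean function computable by an $m^b$-size circuit admits a circuit from $\classc[m^t]$; in particular, each $h_i$ has some witness $G_i^{\star} \in \classc[m^t]$. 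The algorithm $\mathcal{N}$ will inductively build circuits $G_1, \ldots, G_N \in \classc[m^t]$ satisfying $G_i \equiv h_i$, and then output $G_N$.

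The inductive step branches on the type of $g_i$. Input gates and constant gates are handled directly: $G_i$ is taken to be the trivial $\classc$-circuit for the appropriate variable or constant, using parts~(i) and~(iii) of Definition~\ref{d:reasonable}. For $g_i = \gnot(g_j)$, set $G_i := \overline{G_j}$ using closure under complementation (Definition~\ref{d:reasonable}(ii)); no guessing is required. For $g_i = \gand(g_j, g_k)$, nondeterministically guess $G_i \in \classc[m^t]$ and invoke the $\equivandc$ algorithm on $(G_j, G_k, G_i)$, rejecting if the triple fails the AND-equivalence check. For $g_i = \gor(g_j, g_k)$, guess $G_i$ and apply the same algorithm to $(\overline{G_j}, \overline{G_k}, \overline{G_i})$, which by De~Morgan verifies the OR identity (cf.\ Remark~\ref{r:equivorc}). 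If every check along the path passes, $\mathcal{N}$ outputs $G_N$; otherwise it rejects.

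Correctness follows by induction: on any non-rejecting path, each $G_i$ computes $h_i$, so $G_N \equiv B$. At least one accepting path exists because the witnesses $G_i^{\star}$ from Lemma~\ref{l:trick} satisfy every consistency check simultaneously. For the running time, $\mathcal{N}$ performs at most $N \leq m^b$ equivalence tests, each on three $\classc$-circuits of size $m^{O(t)}$ over $m$ variables; by hypothesis each such test takes time $2^m/s(m)$ for some superpolynomial $s$, so the total time is at most $m^b \cdot 2^m/s(m) = 2^m/s'(m)$ with $s'(m) := s(m)/m^b$ still superpolynomial (since $s(m) = m^{\omega(1)}$). Guessing the $G_i$'s costs only polynomial time in $m^t$, which is absorbed into $s'$.

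The main obstacle is reducing arbitrary boolean-equivalence checks to the single primitive $\equivandc$; this is handled cleanly by combining closure under complementation and constant inputs from Definition~\ref{d:reasonable} with De~Morgan, so that AND, OR, and NOT gates can all be certified with the same SAT-like oracle. A secondary subtlety is that $\equivandc$ must remain nontrivial on circuits of size $m^{O(t)}$ rather than merely $m^{O(1)}$; however, once $b$ is fixed the exponent $t$ is a constant, so this lies within the stated nontriviality regime. The bound $t = O(b)$ in the conclusion is then inherited from (and indeed governed by) the quantitative version of Lemma~\ref{l:trick}.
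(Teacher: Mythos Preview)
Your proposal is correct and follows essentially the same approach as the paper: process the gates of $B$ in topological order, guess a $\classc[m^t]$ circuit for each gate, verify $\gand$ gates via the $\equivandc$ algorithm, handle $\gor$ via De~Morgan (Remark~\ref{r:equivorc}), and handle $\gnot$ via closure under complementation, appealing to Lemma~\ref{l:trick} for existence of the witnesses. Your treatment is slightly more explicit about base cases and about the running-time arithmetic, but the argument is the same.
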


\begin{proof} Let $\mathcal{A}$ be an algorithm for $\equivandc$ running in time $2^m/a(m)$, for a superpolynomial function $a(m)$. 
We proceed as follows. Let $x_1, x_2, \ldots, x_m, g_1, \ldots, g_k$ for $k = m^b$ be a topological sort of the gates of $B$, where each gate $g_i \in \{\gand, \gor, \gnot\}$ has fan-in at most two. We will guess and verify (by induction) equivalent $\classc$-circuits of size $m^t$ for each gate $g_i$ in $B$. Since $\mathsf{P} \subseteq \classc$, it follows from Lemma \ref{l:trick} that the functions computed at the internal gates of $B$ admit such circuit. 

More details follow. Suppose (by induction) that $\mathcal{N}$ has produced equivalent $\classc$-circuits $B^{\classc}_i$ of size at most $m^t$ for every gate $g_i$ of $B$, where $i < \ell$ (otherwise it has aborted already). If $g_\ell$ is an $\gand$ gate with inputs $g_{i_1}, g_{i_2}$, where $i_1, i_2 < \ell$,  $\mathcal{N}$ guesses a circuit $B^{\classc}_\ell$ in $\classc[m^t]$ over the same input variables, then use the $\equivandc$ algorithm to check if $\gand(B^{\classc}_{i_1},B^{\classc}_{i_2})$ and $B^{\classc}_\ell$ are equivalent. $\mathcal{N}$ rejects if these circuits are not equivalent, otherwise it continues the computation, completing the induction step. If $g_\ell$ corresponds to an $\gor$ gate, a similar computation is performed, this time applying an algorithm for $\equivorc$ (check Remark \ref{r:equivorc}). Finally, if $g_\ell$ is a $\gnot$ gate, using the fact that $\classc$ is reasonable, it is possible to produce in polynomial time an equivalent $\classc$-circuit for $g_\ell$ of the same size. This completes the induction step. Observe that the base case is trivial.

Note that $\mathcal{N}$ runs algorithm $\mathcal{A}$ for at most $k$ times, i.e., a polynomial number of times. In addition, each execution is performed over circuits from $\classc$ of size $O(m^t)$. Therefore the total running time of $\mathcal{N}$ is $\poly(m) \cdot 2^m/a(m)$, for some superpolynomial function $a(m)$. Setting $s(m) = a(m)/\poly(m)$ completes the proof of Lemma \ref{l:translation}. 
\end{proof}

In addition, we will need the following auxiliary results, whose notation we borrow from Williams \citep{DBLP:conf/coco/Williams11}.

\begin{definition}\label{d:ssat}
The computational problem $\ssat$ is defined as follows. Given a circuit $C$ over $n$ input variables, denote by $F_C$ the instance of $3$-\emph{SAT} obtained by evaluating $C$ over all inputs in lexicographic order \emph{(}i.e., $F_C$ is the $2^n$-bit string representing the truth-table $tt(C)$ of $C$\emph{)}. Decide if $F_C$ is satisfiable.
\end{definition}

We say that $F_C$ is the \emph{decompression} of $C$, and call $C$ the \emph{compression} of $F_C$. 

\begin{lemma}[Tourlakis \cite{DBLP:journals/jcss/Tourlakis01}, Fortnow et a. \cite{DBLP:journals/jacm/FortnowLMV05}, Williams \cite{DBLP:conf/stoc/Williams10}] \label{l:reduction} There is a fixed constant $c>0$ for which the following holds. For every $L \in \ntime[2^n]$ there is a polynomial time reduction from $L$ to $\ssat$ that maps every input $x$ of size $n$ to a circuit $C_x$ over at most $n + c\log n$ input variables and size $O(n^c)$, such that $x \in L$ if and only if the decompressed formula $F_{C_x}$ is satisfiable \emph{(}observe that this is a formula of size $2^n \poly(n)$\emph{)}. 
\end{lemma}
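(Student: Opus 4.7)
The plan is to combine a standard Cook--Levin style tableau reduction with the observation that the resulting \(3\)-\textsf{SAT} instance is extremely uniform, so its bit string admits a tiny circuit description. Fix \(L\in\ntime[2^n]\) and a nondeterministic Turing machine \(M\) deciding \(L\) in time \(T(n)=2^n\). First I would write out the standard \(T\times T\) computation tableau reduction: on input \(x\in\{0,1\}^n\), the tableau has a cell \((i,j)\) with \(0\le i,j<T\) representing the symbol (and possibly state, head position, and nondeterministic choice) of tape cell \(j\) at time \(i\). The reduction produces a \(3\)-\textsf{CNF} \(\phi_x\) whose clauses assert: (a) the initial row encodes \(x\) padded with blanks; (b) for every \(2\times 3\) window a constant-size consistency predicate on the transition function holds; (c) some cell of the final row is in an accepting state. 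The total number of clauses is \(O(T^2)=O(2^{2n})\), each a \(3\)-clause, and \(\phi_x\) is satisfiable iff \(x\in L\). (A padding trick below will let us use a tableau with fewer clauses if we wish, but this crude bound is already enough.)

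Next I would observe that every clause of \(\phi_x\) is described by a constant-length ``type tag'' together with \(O(\log T)=O(n)\) bits specifying the location (cell coordinates \((i,j)\)) and, for transition clauses, the choice among a constant number of transition rules. Each of the \(\le 3\) literals in a clause is then a deterministic, \(\poly(n)\)-time computable function of the type tag, the location, the input \(x\) (only needed for initial-row clauses), and \(M\)'s fixed transition table. In particular, from an index into the bit string of \(\phi_x\) one can in time \(\poly(n)\) reconstruct which clause is being described, which literal within the clause, and which bit of that literal's variable-index/polarity encoding.

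With this in hand, the reduction maps \(x\) to a circuit \(C_x\) as follows. Fix a uniform binary encoding that writes \(\phi_x\) as a string of \(2^n\cdot n^c\) bits for some absolute constant \(c\) (each clause takes \(O(n)\) bits for its variable indices and polarities, and there are \(O(2^{2n})\) clauses; by padding with dummy satisfied clauses we arrange the total length to be exactly \(2^{n+c\log n}\)). The circuit \(C_x\) takes \(n+c\log n\) input bits, interpreted as the position of a bit in this encoded string. It first uses simple arithmetic (dividers and modular arithmetic on \(O(n)\)-bit numbers, all implementable in \(\poly(n)\) size) to split the index into (clause number, bit within clause). It then branches on the clause type, and for each type computes the relevant literal bit using the local rule described above. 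Since \(x\) is hardwired (of length \(n\), well within the \(O(n^c)\) size budget), and since every subroutine runs in time \(\poly(n)\) on \(O(n)\)-bit data, the total circuit size is \(O(n^c)\) for some absolute \(c>0\). By construction \(F_{C_x}=tt(C_x)\) is exactly the bit encoding of \(\phi_x\), so \(F_{C_x}\in\ssat\) iff \(\phi_x\) is satisfiable iff \(x\in L\). The whole map \(x\mapsto C_x\) is clearly polynomial-time computable.

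I expect the main obstacle to be purely bookkeeping: pinning down an explicit, self-delimiting binary format for \(3\)-\textsf{CNF} formulas so that a uniform \(\poly(n)\)-size circuit can, from a flat bit index, cleanly recover the clause number and the bit-within-clause, and then route to the correct sub-circuit for each clause type without blowing up the size or introducing an extra \(\log\) factor that would push the input count past \(n+c\log n\). Choosing a format in which every clause uses exactly the same number of bits (so that index decomposition is just integer division by a fixed constant) and padding the total clause count to a power of two (so that the input count is exactly \(n+c\log n\)) resolves this cleanly.
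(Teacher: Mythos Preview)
The paper does not prove this lemma; it is imported from the cited references (Tourlakis, Fortnow et al., Williams) and used as a black box. So there is no in-paper argument to compare against, only the question of whether your sketch is correct on its own terms.

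Your sketch has a genuine gap at the size accounting. You set $T=2^n$ and form the standard $T\times T$ tableau, correctly noting that this produces $O(T^2)=O(2^{2n})$ clauses. But then you assert that the encoded formula has length $2^n\cdot n^c$ and that the circuit $C_x$ needs only $n+c\log n$ input bits. Those two claims are incompatible: $O(2^{2n})$ clauses, each needing $\Theta(n)$ bits for variable indices, force the encoding to have length $\Theta(n\cdot 2^{2n})$, which requires roughly $2n+\log n$ input bits to index, not $n+c\log n$. Your parenthetical ``this crude bound is already enough'' is exactly where the argument breaks; padding can only lengthen the string, not shrink it.

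What is actually needed is a \emph{quasi-linear} Cook--Levin reduction: $\ntime[T]$ to $3$-SAT instances of size $T\cdot\poly\log T$. The standard route is to first convert $M$ into an oblivious two-tape machine running in time $T\cdot\poly\log T$ (Pippenger--Fischer), so that the head positions at each time step are input-independent and known in advance. Then each time step contributes only $O(1)$ local consistency clauses referring to a constant number of previous time steps (rather than a full row of the tableau), yielding $O(T\cdot\poly\log T)=2^n\cdot\poly(n)$ clauses total. With that in place, the rest of your plan---hardwiring $x$, decoding a bit index into (clause number, literal, bit-of-index), and computing each bit by a $\poly(n)$-size subroutine---goes through and gives the $n+c\log n$ input bound. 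The encoding-format bookkeeping you flagged as the main obstacle is indeed routine; the real missing idea is the oblivious-simulation step that brings the formula size down from $T^2$ to $\tilde{O}(T)$.
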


\begin{definition}\label{d:succinctassign}
We say that $\ssat$ admits \emph{succinct satisfying assignments} if there exists a constant $c>0$ such that for every language $L \in \ntime[2^n]$ the following holds. Given any $x \in L$, there exists some circuit $W_x$ of polynomial size over $k \leq n + c \log n$ input variables for which the assignment $z_i = W(i)$ for $i \in  \{1, \ldots, 2^k\}$ is a satisfying assignment for $F_{C_x}$, where $C_x$ is the circuit obtained from the reduction to $\ssat$ given by Lemma \ref{l:reduction}. 
\end{definition}

The following lemma is an easy consequence of Proposition \ref{p:witnesscircuits}.

\begin{lemma}\label{l:witness} If $\nexp \subseteq \ppoly$ then $\ssat$ admits succinct satisfying assignments.
\end{lemma}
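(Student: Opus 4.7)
The plan is to apply Proposition \ref{p:witnesscircuits} to the natural $\nexp$-verifier for $L$ whose witness on input $x$ is a satisfying assignment of the decompressed formula $F_{C_x}$.

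First I would fix an arbitrary $L \in \ntime[2^n]$ and let $x \mapsto C_x$ be the polynomial-time reduction to $\ssat$ provided by Lemma \ref{l:reduction}, so that $C_x$ has $k \le n + c \log n$ input variables and $F_{C_x}$ is a 3-SAT formula on $N = 2^k$ variables of total size $2^n \poly(n)$. Next I would define a verifier $V$ for $L$ which, on input $(x,w)$ with $|x| = n$ and $|w| = N$, first builds $C_x$ in polynomial time, and then enumerates the clauses of $F_{C_x}$ (each produced from $C_x$ by evaluating $C_x$ on a constant number of inputs derivable from the clause index) and checks that every clause is satisfied by the assignment encoded by $w$. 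Since $V$ runs in time $\poly(N) = 2^{O(n)}$ and is correct by Lemma \ref{l:reduction}, it is a $\nexp$-verifier for $L$ (in fact, a $\ne$-verifier).

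With $V$ in hand, since the hypothesis gives $\nexp \subseteq \ppoly$, Proposition \ref{p:witnesscircuits} applied with $\classc = \ppoly$ yields a constant $c' \in \mathbb{N}$ such that for every $x \in L$ there is a circuit $W_x$ of size at most $n^{c'}$ on $k$ input variables with $V(x, tt(W_x)) = 1$. By construction of $V$, this means precisely that the length-$N$ string $tt(W_x)$ satisfies $F_{C_x}$, and since $tt(W_x)_i = W_x(i)$ for every $i \in \{0,1\}^k$, the circuit $W_x$ is exactly the succinct satisfying assignment required by Definition \ref{d:succinctassign}. Taking the constant witnessing succinct satisfying assignments to be $\max(c, c')$ completes the argument.

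The main (and rather mild) thing to get right is ensuring that the verifier's witness length equals $N = 2^k$, so that the witness circuit produced by Proposition \ref{p:witnesscircuits} has exactly $k$ inputs and can be interpreted directly as $W_x$; beyond this bookkeeping alignment, no genuine obstacle arises, since Lemma \ref{l:reduction} and Proposition \ref{p:witnesscircuits} do essentially all of the work.
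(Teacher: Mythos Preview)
Your proposal is correct and is exactly the approach the paper has in mind: the paper merely states that the lemma is ``an easy consequence of Proposition~\ref{p:witnesscircuits}'', and you have fleshed out precisely that consequence by building the natural $\nexp$-verifier whose witnesses are satisfying assignments of $F_{C_x}$. One tiny remark: in your last line, the constant $c$ of Definition~\ref{d:succinctassign} is only the bound on the number of input variables (inherited from Lemma~\ref{l:reduction}) and is uniform over all $L$, whereas your $c'$ governs the size of $W_x$ and may depend on $L$; there is no need to combine them into a single $\max(c,c')$.
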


We use these auxiliary results to prove the following proposition. For simplicity, we only state it for polynomial size classes, but a parameterized version can be obtained using the same techniques. 

\begin{prop}\label{p:main}
Let $\classc = \classc_d[\poly(n)]$ be a reasonable circuit class. If there exist a nontrivial algorithm for $\equivandc$, then $\nexp \nsubseteq \classc$.
\end{prop}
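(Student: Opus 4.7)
My plan is to argue by contradiction: assume $\nexp \subseteq \classc$ and use the hypothesized nontrivial $\equivandc$ algorithm to simulate some language $L \in \ntime[2^n]$ in nondeterministic time $o(2^n)$, contradicting the nondeterministic time hierarchy theorem. Two ingredients become available under the assumption $\nexp \subseteq \classc$: first, $\mathsf{P} \subseteq \classc$, so the Conversion Lemma (Lemma \ref{l:translation}) applies; second, $\nexp \subseteq \ppoly$, so $\ssat$ admits succinct satisfying assignments (Lemma \ref{l:witness}).

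Fix a language $L \in \ntime[2^n] \setminus \ntime[o(2^n)]$ and consider the following nondeterministic algorithm $\mathcal{M}$ for $L$ on input $x \in \{0,1\}^n$. First, compute in polynomial time the $\ssat$ instance $C_x$ from Lemma \ref{l:reduction}, a boolean circuit of size $O(n^c)$ over $k = n + c\log n$ input variables whose decompression $F_{C_x}$ is a $3$-CNF of size $2^n \cdot \mathsf{poly}(n)$. Second, nondeterministically guess a polynomial-size witness circuit $W_x$ on $k$ inputs. Third, build in polynomial time an ordinary boolean circuit $B$ on $k$ variables whose value at $i \in \{0,1\}^k$ equals $1$ iff the clause of $F_{C_x}$ at position $i$ (read off by evaluating $C_x$) is satisfied when its three literals are evaluated through $W_x$; thus $W_x$ is a satisfying assignment for $F_{C_x}$ iff $B \equiv 1$. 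Fourth, feed $B$ into the Conversion Lemma to nondeterministically produce an equivalent $\classc_d$-circuit $B'$ of polynomial size. Fifth, using that $\classc$ is reasonable (so $B'$ can be negated inside $\classc$ without cost) and that any nontrivial $\equivandc$ algorithm is also a nontrivial $\classc$-SAT algorithm (Remark \ref{r:equivorc}), decide whether $\neg B'$ is unsatisfiable and accept iff it is.

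Correctness is immediate: $\mathcal{M}$ has an accepting path on $x$ iff some $W_x$ yields $B \equiv 1$, which by the succinct-satisfying-assignment property happens iff $x \in L$. For the running time, steps one through three take polynomial time; step four costs $2^k/s(k) = 2^n \cdot n^{O(1)}/s(n) = o(2^n)$ by Lemma \ref{l:translation}; and step five costs $2^k/k^{\omega(1)} = o(2^n)$ because the hypothesized $\equivandc$ algorithm is nontrivial on inputs of size $k = n + O(\log n)$. Thus $L \in \ntime[o(2^n)]$, contradicting the choice of $L$.

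The main obstacle, which the Conversion Lemma cleanly absorbs, is the mismatch between the $\equivandc$ hypothesis (which only solves SAT for $\classc$-circuits) and the fact that the natural verification circuit $B$ is an arbitrary $\ppoly$-circuit. Once Lemma \ref{l:translation} is invoked, the rest is routine bookkeeping to confirm that the polynomial blow-ups, coming from the $O(\log n)$ extra input variables produced by the reduction to $\ssat$ and from the $m^{O(b)}$ size blow-up in the converted circuit, are comfortably absorbed by the superpolynomial savings $s(k)$ and by the nontrivial speedup of the $\equivandc$ algorithm.
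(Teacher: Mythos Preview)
Your proposal is correct and follows essentially the same argument as the paper's proof: derive a contradiction to the nondeterministic time hierarchy by combining the reduction to $\ssat$ (Lemma~\ref{l:reduction}), succinct witnesses (Lemma~\ref{l:witness}), the Conversion Lemma (Lemma~\ref{l:translation}), and the $\classc$-SAT algorithm obtained from the $\equivandc$ algorithm via Remark~\ref{r:equivorc}. The only cosmetic difference is that the paper defines the verification circuit $B$ so that $B(i)=1$ when clause $i$ is \emph{not} satisfied (hence $x\in L \Leftrightarrow B$ unsatisfiable), whereas you flip this convention and negate at the end; this changes nothing of substance.
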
  

\begin{proof}
Let $\mathcal{A}$ be a nontrivial algorithm for $\equivandc$, and assume for the sake of a contradiction that $\nexp \subseteq \classc$. We use these assumptions to show that every language $L \in \ntime[2^n]$ is in $\ntime[o(2^n)]$, a contradiction to the nondeterministic time hierarchy theorem (\citep{DBLP:journals/jcss/Cook73}, \citep{DBLP:journals/jacm/SeiferasFM78}, \citep{vzak1983turing}). 

The proof relies on the fact that every language $L \in \ntime[2^n]$ can be efficiently reduced to an instance of the $\ssat$ problem (Lemma \ref{l:reduction}). In other words, there is a polynomial time algorithm that maps any input $x \in \{0,1\}^n$ to a circuit $D_x$ on $n + c \log n$ input variables and at most $O(n^c)$ gates such that $x \in L$ if and only if the decompression $F_x = tt(D_x)$ of $D_x$ is satisfiable.   

It follows from $\nexp \subseteq \classc \subseteq \ppoly$ ($\classc$ is reasonable) and Lemma \ref{l:witness} that if $F_x$ is satisfiable then there is a satisfying assignment encoded by a circuit $E_x$ of polynomial size over $n + O(\log n)$ variables. Summarizing what we have so far:
\begin{center}
$x \in L \quad \Longleftrightarrow \quad \exists$ circuit $E:\{0,1\}^{n + O(\log n)} \rightarrow \{0,1\}$ of size $O(n^d)$ such that $F_x(tt(E))=1$, 
\end{center}
\noindent where $F_x = tt(D_x)$ is a $3$-CNF formula and $D_x:\{0,1\}^{n + O(\log n)} \rightarrow \{0,1\}$ is an arbitrary circuit (not necessarily in $\classc$) of  size $O(n^c)$ encoding this formula.

Our nontrivial algorithm for $L$ now guesses a candidate circuit $E$ of this form. It uses $D_x$ and three copies of $E$ to build a circuit $B = B(D_x,E)$ of size $O(n^b)$ over $n + O(\log n)$ inputs such that: 
\begin{center}
$B$ is satisfiable $\quad \Longleftrightarrow \quad$ some clause $C_i$ of $F_x$ is not satisfiable by the assignment $tt(E)$.
\end{center}

The description of $B$ is as follows.  An input $y$ to $B$ is interpreted as an integer $i$, and $B$ uses this index to obtain from $D_x$ the description of the $i$-th clause $C_i$ in $F_x$. Let $z_1, z_2, z_3$ be the literals in $C_i$. Circuit $B$ uses three copies of $E$ to obtain the boolean values of the variables corresponding to these literals, and finally outputs $1$ if and only if these values do not satisfy $C_i$. This last verification can be done by a polynomial size circuit. Observe that $B$ is not a $\classc$-circuit: $D_x$ and $E$ are arbitrary circuits, these circuits are composed, and there is additional circuitry computing the final output value of $B$. Overall, we obtain:
\begin{equation}\label{e:equivalence}
x \in L \quad \Longleftrightarrow \quad~\text{circuit}~B:\{0,1\}^{n + O(\log n)} \rightarrow \{0,1\}~\text{is unsatisfiable}.
\end{equation}

Note that all these steps can be performed in $\ntime[\poly(n)]$. Recall that we can use $\equivandc$ to solve $\classc$-$\mathsf{SAT}$ in less than $2^n$ steps, but $B$ is not a circuit from $\classc$. We can assume without loss of generality that $B$ is a circuit of size $m^b$ (where $m = n + O(\log n)$) consisting of $\gand$, $\gor$ and $\gnot$ gates of fan-in at most two.

While in Williams' original proof there is a step that guesses and verifies an equivalent $\classc$-circuit for $D_x$ (and already assumes $E$ in $\classc[\poly(n)]$ with some extra work), our nondeterministic algorithm for $L$ produces directly an equivalent $\classc$-circuit for the final circuit $B$. Under our assumptions, Lemma \ref{l:translation} can be applied, and it allows the nondeterministic algorithm for $L$ to obtain a circuit $G$ over $m$ inputs from $\classc[m^t]$ that is equivalent to $B$. This step can be performed in time $2^m/s(m)$ for a superpolynomial function $s(m)$. Since $m = n + O(\log n)$, this running time is still nontrivial in $n$.

Using condition (\ref{e:equivalence}), it follows that $x \in L$ if and only if $G$ is unsatisfiable. Finally, since $\classc$ is reasonable, we can use  algorithm $\mathcal{A}$ to check if this is true, in which case our algorithm for $L$ accepts input $x$. Again, this is a computation that can be performed in nontrivial running time by our assumption over $\mathcal{A}$. Overall, it follows that we can decide $L$ in $\ntime[o(2^n)]$, which completes the proof of the theorem.
\end{proof}

\begin{corollary}
Let $\classc = \classc_d[\poly(n)]$ be a reasonable circuit class. If there exist nontrivial satisfiability algorithms for both $\gand[3] \circ \classc$ and $\gand[2] \circ \gor[2] \circ \classc$, then $\nexp \nsubseteq \classc$. 
\end{corollary}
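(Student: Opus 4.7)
The plan is to reduce $\equivandc$ to the two satisfiability problems in the hypothesis and then invoke Proposition \ref{p:main}. Concretely, given an instance $(f_1, f_2, f_3)$ of $\equivandc$ with each $f_i \in \classc$, observe that $f_1 \wedge f_2 \not\equiv f_3$ if and only if some $x$ witnesses one of the two one-sided failures:
\[
f_1(x) \wedge f_2(x) \wedge \neg f_3(x) = 1 \quad \text{or} \quad (\neg f_1(x) \vee \neg f_2(x)) \wedge f_3(x) = 1.
\]
So to test equivalence it suffices to separately decide the satisfiability of the two resulting circuits.

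Next, I would verify that each of these circuits lies in the appropriate class. Since $\classc$ is reasonable, it is closed under negation with no size or depth overhead, so $\neg f_i$ is again a $\classc$-circuit. Thus the first circuit is of the form $\gand[3]$ applied to three $\classc$-circuits, giving a $\gand[3] \circ \classc$-instance. For the second, the top gate is an $\gand[2]$; its left input is $\gor[2](\neg f_1, \neg f_2)$, an $\gor[2] \circ \classc$ subcircuit. To match the shape $\gand[2] \circ \gor[2] \circ \classc$ exactly, I would pad the right input $f_3$ as $\gor[2](f_3, 0)$, which is valid because reasonable classes contain the constant $0$ function and allow constant inputs to gates. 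Running the two hypothesized nontrivial SAT algorithms on these circuits answers $\equivandc$ in the sum of their running times, which is still of the form $2^n / n^{\omega(1)}$, hence nontrivial.

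Having produced a nontrivial algorithm for $\equivandc$, Proposition \ref{p:main} applies directly and yields $\nexp \nsubseteq \classc$. No additional complexity-theoretic machinery is needed beyond what is already packaged inside that proposition (the reduction of $\ntime[2^n]$ to $\ssat$, the assumption $\nexp \subseteq \ppoly$ leading to succinct witnesses, and the Conversion Lemma).

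The only step requiring care is the reduction of $\equivandc$ to the two SAT instances: one must respect the precise shape $\gand[2] \circ \gor[2] \circ \classc$ rather than something slightly larger, which is why the padding trick with the constant $0$ (and the reasonableness clauses guaranteeing this is free) is invoked. Everything else is bookkeeping and follows from the previously established results.
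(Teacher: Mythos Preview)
Your proposal is correct and follows essentially the same approach as the paper: both reduce $\equivandc$ to the satisfiability of $\gand(f_1,f_2,\bar{f_3})$ and $\gand(\gor(\bar{f_1},\bar{f_2}),f_3)$ and then invoke Proposition~\ref{p:main}. The paper arrives at these two formulas via the $\gxor$ expansion while you write them down directly, and you are slightly more careful than the paper in explicitly padding $f_3$ as $\gor(f_3,0)$ so the second instance conforms exactly to the $\gand[2]\circ\gor[2]\circ\classc$ shape.
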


\begin{proof}
It is enough to observe that these satisfiability algorithms can be used to solve $\equivandc$ in nontrivial running time (Proposition \ref{p:main}). Let $f_1, f_2, f_3$ be functions from $\classc$. Then
\begin{equation}\label{e:equivxor}
\neg \gequiv(\gand(f_1,f_2),f_3)\quad \Longleftrightarrow \quad  \gxor(\gand(f_1,f_2),f_3)~\text{is satisfiable}.
\end{equation}  
For bits $a,b \in \{0,1\}$, we have $\gxor(a,b) \equiv \gor(\gand(a, \bar{b}),\gand(\bar{a},b))$. Using de Morgan's rules and combining gates, it is not hard to see that
$$
\gxor(\gand(f_1,f_2),f_3) \equiv \gor(\gand(f_1,f_2,\bar{f_3}),\gand(\gor(\bar{f_1}, \bar{f_2}),f_3)).
$$
It follows from (\ref{e:equivxor}) that an algorithm for $\neg \equivandc$ should output $1$ if and only if either $\gand(f_1,f_2,\bar{f_3})$ or $\gand(\gor(\bar{f_1}, \bar{f_2}),f_3)$ is satisfiable. Since $\classc$ is reasonable, a circuit for $\bar{f_i}$ can be computed efficiently from a circuit for $f_i$. Hence nontrivial algorithms for $\gand[3] \circ \classc$-SAT and $\gand[2] \circ \gor[2] \circ \classc$-SAT can be used to solve $\neg \equivandc$, which completes the proof.
\end{proof}

\subsection{A remark for the algorithm designer}\label{s:remarkalgorithm}

It is hard to find satisfiability algorithm for expressive circuit classes even when we allow very modest running times, such as $2^n/n^{\log n}$. Here we  mention a weaker assumption on the algorithmic side that could be of practical significance\footnote{This is not the most encompassing definition, but it is a very natural one to have in mind.}.

\begin{definition}[``Algorithms useful for circuit lower bounds'']\label{d:usefulalgorithms}~\\
Let $\classc$ be a circuit class. A nondeterministic algorithm $\mathcal{A}$ for $\equivandc$ is \emph{useful} if the following conditions hold:
\begin{itemize}
\item Every path of the \emph{(}nondeterministic\emph{)} computation of $\mathcal{A}$ either outputs ``abort'', or provides the correct answer.
\item At least one path of the computation of $\mathcal{A}$ does not abort, and runs in time bounded by $2^n/s(n)$ for some superpolynomial function $s(n)$.
\end{itemize}
\end{definition}

\begin{proposition}\label{p:useful}
Let $\classc = \classc_d[\poly(n)]$ be a reasonable circuit class. If there exists a useful algorithm for $\equivandc$ then $\nexp \nsubseteq \classc$.
\end{proposition}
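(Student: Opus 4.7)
The plan is to re-run the proof of Proposition \ref{p:main} essentially verbatim, observing that the algorithm constructed there for $L \in \ntime[2^n]$ is \emph{already} nondeterministic; the $\equivandc$ subroutine is only ever invoked inside a nondeterministic computation. Once the nontrivial deterministic algorithm is replaced by a useful nondeterministic algorithm $\mathcal{A}$, we can incorporate $\mathcal{A}$ by having our meta-algorithm guess the nondeterministic bits of $\mathcal{A}$ and reject on any branch where $\mathcal{A}$ aborts. This turns a "deterministic-subroutine-inside-nondeterministic-meta-algorithm" into a "nondeterministic-subroutine-inside-nondeterministic-meta-algorithm", which is just another nondeterministic meta-algorithm.

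More concretely, I would assume for contradiction that $\nexp \subseteq \classc$ and fix an arbitrary $L \in \ntime[2^n]$. As in the proof of Proposition \ref{p:main}, Lemma \ref{l:reduction} reduces an input $x$ of length $n$ to a circuit $D_x$ on $m = n + O(\log n)$ inputs encoding a $3$-CNF $F_x$, and Lemma \ref{l:witness} furnishes a polynomial-size $\classc$-circuit $E$ describing a satisfying assignment whenever one exists. The meta-algorithm for $L$ guesses $E$, constructs the circuit $B = B(D_x, E)$ exactly as before, then invokes the Conversion Lemma (Lemma \ref{l:translation}) to obtain an equivalent $\classc$-circuit $G$ for $B$, and finally checks that $G$ is unsatisfiable.

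The only place where the argument deviates is in the use of the Conversion Lemma, whose proof invokes an $\equivandc$ algorithm polynomially many times, once per gate of $B$. I would replace each such call by a nondeterministic simulation of $\mathcal{A}$. By Definition \ref{d:usefulalgorithms}, every non-aborting branch of $\mathcal{A}$ outputs the correct answer, and for each input there exists at least one non-aborting branch running in time $2^m/s(m)$ for a superpolynomial $s$. Hence we may sequentially concatenate non-aborting branches of $\mathcal{A}$ across the polynomially many invocations to obtain a globally non-aborting path whose total running time is $\poly(m) \cdot 2^m/s(m)$, still nontrivial in $n$. The final unsatisfiability check on $G$ is handled by one further nondeterministic invocation of $\mathcal{A}$ (recall from Remark \ref{r:equivorc} that $\classc$-$\mathsf{SAT}$ reduces to $\equivandc$ since $\classc$ is reasonable).

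Assembling the pieces yields a nondeterministic algorithm deciding $L$ in time $2^n / n^{\omega(1)}$, contradicting the nondeterministic time hierarchy theorem. Rather than a serious obstacle, the main subtlety is the bookkeeping to certify two facts: (i) along aborting branches the meta-algorithm always rejects, so aborts can never produce a spurious accept; and (ii) whenever $x \in L$, there is a globally non-aborting accepting path, obtained by concatenating the accepting/non-aborting branches guaranteed for each invocation of $\mathcal{A}$. Both properties are immediate from the two bullet points in Definition \ref{d:usefulalgorithms}, so the proof reduces to verifying that the original argument is robust under this substitution, which it is.
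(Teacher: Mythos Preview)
Your proposal is correct and takes essentially the same approach as the paper: the paper's proof simply observes that the argument of Proposition \ref{p:main} goes through unchanged when the deterministic $\equivandc$ subroutine is replaced by a useful nondeterministic one, rejecting on aborting branches. The only detail the paper makes explicit that you leave implicit is the need to add a clock that aborts any computation path exceeding $2^m/s(m)$ steps (since Definition \ref{d:usefulalgorithms} bounds the running time of only one non-aborting path, not all paths), so that the meta-algorithm genuinely lies in $\ntime[o(2^n)]$; this is routine and does not affect your argument.
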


\begin{proof}
Observe that the proof of Proposition \ref{p:main} still holds with such algorithms, provided that we abort in any computation path that runs for more than $2^n/s(n)$ steps. It is still the case that $x \in L$ if and only if there exists a computation path that accepts $x$. More precisely, if $x \notin L$, even if equivalent circuits are guessed and verified in each stage, a useful algorithm for $\equivandc$ will never output ``yes'' in the last step of the computation that checks if the final circuits is equivalent to the zero function (i.e., it is unsatisfiable). On the other hand, for $x \in L$, it is clear from the definition of useful algorithm that some computation path will accept in nontrivial running time. 
\end{proof}

Observe that useful algorithms for \emph{unsatisfiability} also lead to circuit lower bounds, since these can be used in place of an $\equivandc$ algorithm. The same is true for satisfiability algorithms, since useful algorithms are closed under complementation. In Section \ref{s:derandomization} we will use Proposition \ref{p:useful} to prove that derandomization implies circuit lower bounds (Proposition \ref{p:clbfromderandomization}).

Why is this a natural relaxation? Suppose there exists a class $\classc$ such that for any circuit $D$ in this class, there exists some subset $S \subset [n]$ of the inputs of $D$ such that by trying all assignments to the variables in $S$, we can check on \emph{average time} strictly less than $2^{n - |S|}$ (over the restrictions) the satifiability of the remaining circuits. Then $\classc$ admits a useful satisfiability algorithm, since the set $S$ can be guessed at the beginning of the execution. For the reader familiar with the satisfiability algorithm for small threshold circuits described by Impagliazzo, Paturi and Schneider \citep{DBLP:journals/corr/abs-1212-4548}, it means that their algorithm gives more than what is needed for lower bounds. There the expected running time is nontrivial over the subset of inputs to be restricted, which is a stronger guarantee. It is sufficient that a single subset provides a nontrivial running time.

\section{Useful properties and circuit lower bounds}\label{s:useful}

In this section we focus on the relation between \emph{useful properties} and circuit lower bounds, a connection that was made explicit in a recent paper written by Williams \citep{DBLP:conf/stoc/Williams13}. We start with the following simple, but somewhat surprising result. Recall that an algorithm that computes a property of boolean functions receives as input a string of size $N = 2^n$ representing the truth-table $tt(f)$ of a function $f:\Boolean$. The following result follows from techniques introduced by Williams \citep{DBLP:conf/stoc/Williams13}.

\begin{proposition}[``Useful $\mathsf{NP}$-property yields useful $\mathsf{P}$-property''] \label{p:npuseful}~\\
Let $\classc$ be a typical circuit class, and let $s: \mathbb{N} \rightarrow \mathbb{N}$ be any function. If there is a $\mathsf{NP}/s(N)$-property useful against $\classc$ then there is a $\mathsf{P}/s(N)$-property useful against $\classc$. 
\end{proposition}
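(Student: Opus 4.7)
The plan is to absorb each nondeterministic witness directly into the truth table of a slightly larger boolean function, thereby converting the $\mathsf{NP}$ computation into a deterministic one. Concretely, let $V$ be the deterministic polynomial-time verifier witnessing $\propp \in \mathsf{NP}/s(N)$, with advice sequence $\{\alpha_N\}$ of length $s(N)$, so that $\propp(f_n) = 1$ iff some $w \in \{0,1\}^{N^c}$ (for a constant $c$ depending on $V$) satisfies $V(tt(f_n),w,\alpha_N) = 1$. I will pick $n' := cn + O(1)$ so that $N' := 2^{n'} \geq N + N^c$, and define the new property $\propp'$ on $n'$-variate functions $h$ by parsing $tt(h) \in \{0,1\}^{N'}$ as a triple $(u, w, z)$ with $|u| = N$ and $|w| = N^c$, and declaring $\propp'(h) := V(u, w, \alpha_N)$. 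Then $\propp'$ is computable in $\mathsf{P}$ with advice $\alpha'_{N'} := \alpha_N$ of size $s(N) \leq s(N')$ (assuming $s$ is non-decreasing), so $\propp' \in \mathsf{P}/s(N')$.

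Next I would show that $\propp'$ inherits usefulness from $\propp$. Given $k$, apply usefulness of $\propp$ at parameter $k' := ck + 1$: there are infinitely many $n$ for which some $f_n$ satisfies $\propp$, yet no $g_n$ with a $\classc[n^{k'}]$-circuit satisfies $\propp$. For each such $n$, set $n' := cn + O(1)$ and argue on two sides. On the positive side, if $w$ is a $\propp$-witness for $f_n$, then the $n'$-variate function $h$ with $tt(h) = (tt(f_n), w, 0 \cdots 0)$ satisfies $\propp'$. On the negative side, suppose some $g_{n'}$ admits a $\classc[(n')^k]$-circuit $D$. Restricting $D$ by hardwiring the $n' - n$ ``high-order'' input bits that index the $u$-block of $tt(g_{n'})$ yields a $\classc$-circuit of size at most $(n')^k \leq n^{k'}$ computing some $n$-variate function $g_n$. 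By choice of $n$, $\propp(g_n) = 0$, so no witness makes $V$ accept, hence $\propp'(g_{n'}) = 0$ for every $n'$-variate $g_{n'}$ in $\classc[(n')^k]$.

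The main technical point to verify carefully is the restriction step: that fixing input bits in a $\classc$-circuit yields a $\classc$-circuit of no greater size and the same depth. This is immediate for every typical $\classc \in \{\ac, \acc, \tc, \nc, \ppoly\}$ by constant propagation, so the argument goes through uniformly. The remaining bookkeeping --- monotonicity of $s$ ensuring $s(N) \leq s(N')$, and the translation from parameter $k$ to $k'$ absorbing the polynomial blow-up from $n$ to $n'$ --- is routine once the padding is set up correctly. I expect no hidden obstacle beyond being precise about how the truth table is indexed so that the hardwiring argument genuinely extracts a boolean function on $n$ variables whose $\classc$-circuit complexity is controlled by that of $g_{n'}$.
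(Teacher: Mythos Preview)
Your proposal is correct and follows essentially the same approach as the paper: both pad the input length from $n$ to $n' \approx cn$, parse the enlarged truth table as (original truth table, witness), run the $\mathsf{NP}$-verifier deterministically, and handle the negative case by hardwiring the extra input bits to extract an $n$-variate function of bounded $\classc$-complexity. The only differences are minor bookkeeping (the paper takes $n' = cn$ exactly with witness length $N^c - N$, and uses $k' = k+1$ rather than your $ck+1$), and both proofs note the same non-decreasing assumption on $s$ for the advice bound.
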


\begin{proof} First we prove the proposition without advice, then we observe that the same proof works in the presence of advice strings as well.
Let $\propp$ be a $\mathsf{NP}$-useful property against $\classc$. In other words, for any fixed $k$, there exists an infinite subset $S_k \subseteq \mathbb{N}$ such that for any $n \in S_k$:
\begin{itemize}
\item $\propp(f_n) = 1$ for at least one function $f_n:\Boolean$.
\item $\propp(g_n) = 0$ for any function $g_n: \Boolean$ computed by circuits in $\classc[n^k]$.
\end{itemize}
In addition, there exists a polynomial time verifier $V_\propp:\{0,1\}^N \times \{0,1\}^{N^c - N} \rightarrow \{0,1\}$ (where $N = 2^n$ and $c \in \mathbb{N}$) for $L_\propp$. Put another way, for any function $h_n$, 
$$\propp(h_n) = 1\quad\quad\Longleftrightarrow\quad\quad \exists w \in \{0,1\}^{N^c - N}~\text{such that}~~V_\propp(tt(h_n),w)=1.
$$

Let $A = \{n' \mid n' = cn, n \in \mathbb{N}\}$. For convenience, set $N' = 2^{n'} = N + (N^c - N)$. We define a predicate $\propp'$ defined on any function over $n'$ inputs, where $n' \in A$ (the definition of $\propp'$ over functions with a different number of inputs can be arbitrary). For any $h'_{n'}: \{0,1\}^{n'} \rightarrow \{0,1\}$, view its representation $tt(h'_{n'}) \in \{0,1\}^{N'}$ as a pair of strings $(tt(h_n), w)$, where $tt(h_n) \in \{0,1\}^N$ and $w \in \{0,1\}^{N^c - N}$. To be more precise, let $h_n: \Boolean$ be the restriction of $h'_{n'}$ defined by $h_n(x) = h'_{n'}(x0^{(c-1)n})$, where $x \in \{0,1\}^n$. Finally, let
$$ \propp'(h'_{n'}) = 1  \quad\quad\Longleftrightarrow\quad\quad V_\propp(tt(h_{n}),w)=1.$$

We claim that $\propp'$ is a $\mathsf{P}$-property that is useful against $\classc$. First observe that since $V_\propp$ is an efficient algorithm, $\propp'$ can be computed in time polynomial in $N' = |tt(h'_{n'})|$. Fix any $k \in \mathbb{N}$. We need to define an infinite set $S'_{k} \subseteq A$ such that for every $n' \in S'_{k}$,
\begin{itemize}
\item $\propp(f'_{n'}) = 1$ for at least one function $f'_{n'}: \{0,1\}^{n'} \rightarrow \{0,1\}$.
\item $\propp(g'_{n'}) = 0$ for any function $g'_{n'}: \{0,1\}^{n'} \rightarrow \{0,1\}$ computed by circuits in $\classc[n'^{k}]$.
\end{itemize}

\noindent Let $S'_{k} = \{n' \mid n' = cn, n \in S_{k+1}\}$. This set is infinite because so is $S_{k+1}$. Let $n' \in S'_{k}$. It follows from the definition of $S_{k+1}$ that there is a function $f_n: \Boolean$ for which $\propp(f_n) = 1$. Hence there exists $w \in \{0,1\}^{N^c - N}$ such that $V_\propp(tt(f_n),w) = 1$. By construction, the corresponding function $f'_{n'}: \{0,1\}^{n'} \rightarrow \{0,1\}$ whose truth-table is the concatenation of the pair $(f_n,w)$ satisfies $\propp'$. 

Finally, in order to establish the second bullet, assume for the sake of a contradiction that there exists a function $g'_{n'}: \{0,1\}^{n'} \rightarrow \{0,1\}$ computed by circuits from $\classc[n'^{k}]$ for which $\propp'(g'_{n'}) = 1$. Clearly, the function $g_n: \Boolean$ defined as before by the restriction $g_n(x) = g'_{n'}(x0^{(c-1)n})$ also admits circuits from $\classc$ of size $n'^{k} = (cn)^{k} \leq n^{k+1}$, for sufficiently large values of $n$. But then $\propp(g_n) = 0$, since $n \in S_{k+1}$. However, this contradicts the assumption that $\propp'(g'_{n'}) = 1$, since in this case there is no $w \in \{0,1\}^{N^c - N}$ such that $V_\propp(tt(g_n),w) = 1$. In other words, for every function $g'_{n'}$ with $n' \in S'_{k}$ that is computed by circuits from $\classc[n'^{k}]$, we have $\propp'(g'_{n'}) = 0$. 

If the original verifier works with advice strings of length $s(N)$, then property $\mathcal{P}'$ can be decided correctly using the same advice. However, the definition of $\mathcal{P}'$ over functions on $n' = cn$ inputs is based on the definition of $\mathcal{P}$ over functions on $n$ inputs. Therefore, the advice for the new algorithm is of size $s(N'^{1/c})$, since it gets as input truth-tables of size $N' = N^{c}$. Assuming that $s(.)$ is non-decreasing and $c \geq 1$, it follows that $\mathcal{P}'$ can be decided with advice of size $s(N'^{1/c}) \leq s(N')$. This completes the proof of Proposition \ref{p:npuseful}. 
\end{proof}

The new useful property may not be dense, even if the original property is dense. The reason is that there may be just a few certificates for each hard function, thus almost no function will satisfy the newly defined property. However, if we start with an $\mathsf{RP}$-natural property useful against $\classc$ (i.e., a dense property in which every hard function has many certificates), the proof of Proposition \ref{p:npuseful} yields a corresponding $\mathsf{P}$-natural property.

The next proposition clarifies the relation between $\nexp$ circuit lower bounds and the existence of properties that are useful against $\classc$. Recall that for any typical circuit class, standard arguments can be used to prove that $\nexp \nsubseteq \classc$ if and only if $\ne \nsubseteq \classc$.

\begin{proposition}\label{p:usefullog}
Let $\classc$ be a typical class. Then $\nexp \nsubseteq \classc$ if and only if there exists a $\mathsf{P}/\log N$-property that is useful against $\classc$.
\end{proposition}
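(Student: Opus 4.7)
I establish both directions separately, using Proposition~\ref{p:witnesscircuits} throughout. For $(\Leftarrow)$, I derive a contradiction from $\nexp \subseteq \classc$ plus a useful property, by packaging the property into a $\ne$-verifier. For $(\Rightarrow)$, I read off a $\mathsf{P}/\log N$-property directly from the verifier of a hard $\ne$ language.

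\textbf{The $(\Leftarrow)$ direction.} Let $\propp$ be decided by a polynomial-time algorithm $A(tt, w)$ on inputs $tt$ of length $N = 2^n$ with advice $w$ of length $\log N = n$. Consider
\[
L_A = \{(1^n, w) : |w| = n,\ \exists\,tt \in \{0,1\}^{2^n}\text{ with }A(tt, w) = 1\},
\]
which belongs to $\ne$ via the natural verifier $V_A$ that guesses $tt$ and runs $A$ in time $\poly(2^n)$. Suppose for contradiction $\nexp \subseteq \classc[\poly]$. Proposition~\ref{p:witnesscircuits} then supplies a constant $c$ such that $V_A$ admits witness circuits from $\classc[(2n)^c]$. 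Using the usefulness of $\propp$ at parameter $k = c + 2$, fix a sufficiently large $n$ in the $k$-useful set together with an advice $w^*$ for which $\propp(\cdot, w^*)$ distinguishes a hard function from all $\classc[n^k]$ functions. Since some truth table $tt^*$ satisfies $A(tt^*, w^*) = 1$, we have $(1^n, w^*) \in L_A$, so there exists a circuit $C \in \classc[(2n)^c]$ on $n$ variables with $A(tt(C), w^*) = 1$. But $(2n)^c \leq n^{c+2}$ for large $n$, so $C \in \classc[n^k]$ and $\propp(tt(C), w^*) = 1$, contradicting the usefulness at level $k$.

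\textbf{The $(\Rightarrow)$ direction.} Take $L \in \nexp \setminus \classc[\poly]$. By padding, assume $L \in \ne$ with a $\ne$-verifier $V(x, w)$ whose witnesses have length exactly $N = 2^n$ on inputs of length $n$, and with $V$ running in time $\poly(N)$. Define $\propp(tt, a) = 1$, on inputs of length $N = 2^n$ with advice $a = x \in \{0,1\}^{n}$, to hold iff $V(x, tt) = 1$; this is a $\mathsf{P}/\log N$-property because $V$ runs in time $\poly(N)$. For usefulness it suffices to show that for every $k$ there are infinitely many $n$ admitting some $x \in L_n$ with no $\classc[n^k]$ witness circuit for $V$. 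Such an $(n, x)$ makes the property useful at that length with advice $x$: any valid witness for $V(x, \cdot)$ satisfies $\propp(\cdot, x)$, yet any $\classc[n^k]$ circuit $C$ satisfying $\propp(tt(C), x) = 1$ would provide a forbidden $\classc[n^k]$ witness circuit. To verify the claim, suppose it fails at some $k_0$: then for all but finitely many $n$, every $x \in L_n$ has a $\classc[n^{k_0}]$ witness circuit, and after absorbing the finite exceptional lengths into a larger polynomial bound, $V$ admits $\classc[\poly]$ witness circuits uniformly. A converse of Proposition~\ref{p:witnesscircuits} --- obtainable through an IKW-style argument combining Lemma~\ref{l:witness} with the reduction of Lemma~\ref{l:reduction} --- then forces $L \in \classc[\poly]$, contradicting our choice of $L$.

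\textbf{Main obstacle.} The nontrivial step is the converse of Proposition~\ref{p:witnesscircuits} used in the $(\Rightarrow)$ direction: that $\classc[\poly]$ witness circuits for a single $\ne$-verifier suffice to place the underlying language in $\classc[\poly]$. A cleaner alternative that bypasses this technicality is to first exhibit a useful $\mathsf{NP}/\log N$-property --- for instance by having the nondeterministic algorithm guess the witness $w$ and verify $V(x, w) = 1$ --- and then appeal to Theorem~\ref{t:npuseful} to convert it into a useful $\mathsf{P}/\log N$-property. Both routes rest on the same core observation: a hard $\nexp$-verifier, at each length where no succinct witness circuit exists, defines a constructive property that separates some function from the entire class $\classc[\poly]$.
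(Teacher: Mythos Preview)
Your $(\Leftarrow)$ direction is correct and essentially identical to the paper's argument: both package the advice as the input to a $\ne$-verifier and invoke Proposition~\ref{p:witnesscircuits}.

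Your $(\Rightarrow)$ direction, however, has a genuine gap at exactly the point you flag. The ``converse of Proposition~\ref{p:witnesscircuits}'' you need --- that a \emph{single} $\ne$-verifier $V$ for $L$ admitting $\classc[\poly]$ witness circuits forces $L \in \classc[\poly]$ --- is false in general and is not what Lemmas~\ref{l:witness} and~\ref{l:reduction} give you. For a concrete failure, take any $L \in \mathsf{E} \setminus \classc[\poly]$ and the trivial verifier $V(x,w)$ that ignores $w$ and decides $x \in L$ directly in time $2^{O(n)}$; then every $x \in L$ has the all-zeros witness, computed by a constant-size circuit, yet $L \notin \classc[\poly]$. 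Your alternative sketch (guess $w$, check $V(x,w)=1$) does not escape this, since the property $\propp(tt,x)=V(x,tt)$ is already in $\mathsf{P}$ and the nondeterminism adds nothing; the same obstruction applies.

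The paper's route for $(\Rightarrow)$ sidesteps the witness-circuit issue entirely by choosing a different property. Take $L' = L \cup \{1^n : n \in \mathbb{N}\} \in \ne \setminus \classc$ and set $\propp(h_n)=1$ iff $h_n$ is \emph{exactly} the characteristic function of $L'_n$. Usefulness is then immediate from $L' \notin \classc$ --- no converse needed. The advice is the count $|L'_n| \in [1,2^n]$, which fits in $\log N$ bits, and the $\mathsf{NP}$-verifier checks that $|h^{-1}(1)|$ matches the advice and then guesses an $\ne$-certificate for each $x \in h^{-1}(1)$. This gives an $\mathsf{NP}/\log N$-property, and Proposition~\ref{p:npuseful} downgrades it to $\mathsf{P}/\log N$. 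The key idea you are missing is that the advice should encode \emph{how many} strings are in $L'_n$ (pinning down the unique correct slice), not \emph{which} input to test.
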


\begin{proof} Let $N = 2^n$ as usual. First assume that $\nexp \nsubseteq \classc$, and let $L \in \mathsf{NE} \backslash \classc$. Let $L' = L \cup \{1^n\mid n \in \mathbb{N}\}$, and notice that $L' \in \mathsf{NE} \backslash \classc$. For every $n \in \mathbb{N}$, let $b(n)$ be the number of strings of size $n$ in $L'$. Observe that $b(n) \in [1,2^n]$. Therefore $b(n)$ can be encoded by a string $a(n) \in \{0,1\}^{\log N}$. Let $f_n = L_n$, i.e., $f_n(x) = 1$ if and only if $x \in L$. Consider the property $\mathcal{P}$ such that $\mathcal{P}(g)=1$ if and only if $g = f_n$ for some $n \in \mathbb{N}$. We claim that $\mathcal{P}$ is a $\mathsf{NP}/\log N$-property that is useful against $\classc$. Let $V'$ be an $\mathsf{NE}$-verifier for $L'$ accepting witnesses of size $2^{cn}$.

Clearly, $\mathcal{P}$ is useful against $\classc$, because $L' \notin \classc$. On the other hand, the following $\mathsf{NP}$-verifier decides $\mathcal{P}$ when it is given the correct advice string $a(n)$:\\

\noindent {\bf Verifier $V$ for $\mathcal{P}$:}\\
\noindent On inputs $tt(h) \in \{0,1\}^N$ and advice string $z \in \{0,1\}^{\log N}$, reject if $|h^{-1}(1)| \neq z$. Otherwise, guess witnesses $w_x \in \{0,1\}^{N^c}$ for every $x \in h^{-1}(1)$, and accept if and only if $V'(x,w_x)=1$ for every such $x$.\\

Clearly, when $z = a(n)$, the only function over $n$ inputs accepted by $V$ is $f_n = L_n$. In addition, $V$ runs in time $\mathsf{poly}(N)$. It follows that $\mathcal{P}$ is computed in $\mathsf{NP}/\log N$. Therefore, there is a $\mathsf{NP}/\log N$-property that is useful against $\classc$, and Proposition \ref{p:npuseful} guarantees the existence of a $\mathsf{P}/\log N$-property useful against $\classc$.

Now suppose that there exists a $\mathsf{P}/\log N$-property $\mathcal{P}'$ that is useful against $\classc$. We use this assumption to define a $\nexp$-verifier $V'$ that does not admit witness circuits of polynomial size. Observe that it follows then from Proposition \ref{p:witnesscircuits} that $\nexp \nsubseteq \classc$, which completes the proof our result.

Let $\mathcal{A}'$ be an algorithm running in time $N^d$ that decides $\mathcal{P}'$ on inputs $tt(f) \in \{0,1\}^N$ when it is given access to an appropriate advice string $a(N) \in \{0,1\}^{\log N}$, i.e, a string of size $n$. Consider the following verifier.\\

\noindent {\bf $\nexp$-verifier $V'$:}\\
\noindent On input $\langle x,w \rangle$, where $x \in \{0,1\}^n$ and $w \in \{0,1\}^N$, output $\mathcal{A}'(w)/x$ (i.e., run $\mathcal{A}'$ on input $w$ with advice string $x$).\\

First observe that $V'$ is a $\nexp$-verifier. Fix any $c \in \mathbb{N}$. We prove that $V'$ does not admit witness circuits from $\classc[n^c]$. First, there are infinitely many inputs $n$ for which $\mathcal{P'}$ correctly discriminates a hard function $h_n: \{0,1\}^n \rightarrow \{0,1\}$ from a function in $\classc[n^c]$. For any such value of $n$, there is a correct advice string $a(N)$ for which algorithm $\mathcal{A}'$ computes $\mathcal{P}'$. However, whenever $x =  a(N)$, it follows from the definition of $V'$ that it only accepts certificates for $x$ that do not correspond to any truth-table from $\classc[n^c]$. In addition, $V'$ accepts at least one truth-table, by definition of $\mathcal{P}'$. As discussed before, this completes the proof of Proposition \ref{p:usefullog}.
\end{proof}

One may be tempted to pose the following conjecture.

\begin{conjecture}\label{c:advice}
Let $\classc$ be a typical circuit class. If there exists a $\mathsf{P}/O(\log N)$-property that is useful against $\classc$, then there is a $\mathsf{P}$-property that is useful against $\classc$.
\end{conjecture}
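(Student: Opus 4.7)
The most natural first attempt is to brute-force the advice: let $\mathcal{A}$ be the algorithm deciding the $\mathsf{P}/O(\log N)$-property $\mathcal{P}$, and define $\mathcal{P}'(tt(h))=1$ iff there exists some string $z \in \{0,1\}^{O(\log N)}$ with $\mathcal{A}(tt(h),z)=1$. Since there are only $N^{O(1)}$ candidate advice strings, $\mathcal{P}'$ is decidable in deterministic polynomial time. At each ``good'' length $n$, the hard function $f_n$ guaranteed by the usefulness of $\mathcal{P}$ is accepted by $\mathcal{A}$ under the correct advice $a(N)$, so $\mathcal{P}'(f_n)=1$. The apparent obstruction is that with \emph{wrong} advice the behaviour of $\mathcal{A}$ is completely unconstrained, so $\mathcal{P}'$ may also accept many functions in $\classc[n^k]$, killing the second bullet of usefulness.

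A second attempt is to embed the advice into the input by padding, mimicking the construction in the proof of Proposition \ref{p:npuseful}. Set $n' = n + \lceil \log L(n) \rceil + O(1)$ where $L(n)=O(\log N)$, so $N' = \mathsf{poly}(N)$, and parse $tt(h'_{n'}) \in \{0,1\}^{N'}$ as a pair $(tt(h_n),z)$ with $z$ playing the role of an advice slot; declare $\mathcal{P}'(h'_{n'})=1$ iff $\mathcal{A}(tt(h_n),z)=1$. This is clearly in $\mathsf{P}$. Unlike the setting of Proposition \ref{p:npuseful}, however, the advice $z$ is not an existentially quantified witness whose correctness is verifiable from the input — it controls the computation of $\mathcal{A}$ itself. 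For a typical reasonable $\classc$, any $h'_{n'} \in \classc[(n')^k]$ restricts to a function $h_n \in \classc[n^{k+1}]$, so under the \emph{correct} advice $a(N)$ the algorithm $\mathcal{A}$ does reject $h_n$; but the $z$ extracted from $tt(h'_{n'})$ can happen to trigger an incorrect acceptance, again spoiling the second bullet of usefulness.

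A more principled route goes through the equivalences between useful properties and circuit lower bounds in Theorems \ref{t:eqnexp} and \ref{t:eqnecone}. The hypothesis of Conjecture \ref{c:advice} is (modulo the standard $\ne \Leftrightarrow \nexp$ padding, which absorbs the constant in $O(\log N)$) equivalent to $\nexp \not\subseteq \classc[\poly]$, while by part (i) of Theorem \ref{t:eqnecone} the conclusion is implied by $\ne \cap \cone \not\subseteq \classc[\poly]$. Thus it would suffice to show the unconditional implication $\nexp \not\subseteq \classc[\poly] \Longrightarrow \ne \cap \cone \not\subseteq \classc[\poly]$. I would try to do this by starting from a language $L \in \ne \setminus \classc[\poly]$, obtained by padding, and combining $L$ with a suitable complement-side certificate — perhaps via nondeterministic self-reduction in the spirit of Santhanam and Williams \citep{DBLP:journals/eccc/SanthanamW12} — to obtain a single language that lies in both $\ne$ and $\cone$ and still defeats $\classc[\poly]$.

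The hard part will be this last step. The $\log N$ bits of advice used in the proof of Theorem \ref{t:eqnexp} encode precisely the number of accepting inputs of length $n$ in the hard language, and this counting information is exactly what a bare $\ne$-verifier is missing to rule out spurious ``over-accepting'' $\classc$-circuits that pass verification on too many strings. Eliminating this single piece of global information is essentially the content of Conjecture \ref{c:eqnecone}, and any direct attack on Conjecture \ref{c:advice} seems to collide with the same barrier — which presumably is why it is posed as a conjecture rather than proved.
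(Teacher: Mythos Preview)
The statement is a \emph{conjecture} in the paper, not a theorem; the paper offers no proof, and your write-up correctly ends at the same place the paper does. Your analysis of why the two naive attempts fail is accurate: brute-forcing over all $N^{O(1)}$ advice strings destroys the ``rejects all easy functions'' side because wrong advice is unconstrained, and embedding the advice into the input via padding (as in Proposition~\ref{p:npuseful}) fails for the same reason --- unlike an $\mathsf{NP}$ witness, the advice is not certified by the verifier, so a small-circuit $h'_{n'}$ can carry a bad $z$ that makes $\mathcal{A}$ accept its restriction.

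Your third paragraph also matches the paper's own commentary. Immediately after stating the conjecture, the paper observes that (a slight generalisation of) it would let one convert any $\nexp \nsubseteq \classc$ lower bound into an $\ne \cap \cone \nsubseteq \classc$ lower bound, and the discussion following Proposition~\ref{p:unary} reiterates this in the quasipolynomial regime. Your reduction --- $\mathsf{P}/O(\log N)$-property $\Rightarrow \nexp \nsubseteq \classc$ (Proposition~\ref{p:usefullog}), then hypothetically $\ne \cap \cone \nsubseteq \classc$, then $\mathsf{P}$-property (Proposition~\ref{p:neconeuseful}) --- is exactly the chain the paper has in mind, and your identification of the missing step with the open direction of Conjecture~\ref{c:eqnecone} is on target. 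There is nothing to correct here; you have recognised that the statement is open and articulated the obstruction the paper itself points to.
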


We will see shortly that if a slightly more general version of this conjecture holds, then a generic $\nexp$ circuit lower bound can always be converted into a $\ne \cap \cone$ lower bound, a rather surprising consequence, given its generality. 

Now we move to the relation between useful properties decided without advice and circuit lower bounds.

\begin{proposition}\label{p:neconeuseful}
For any typical $\classc$,  if $\ne \cap \cone \nsubseteq \classc$ then there exists a $\mathsf{P}$-property that is useful against $\classc$.
\end{proposition}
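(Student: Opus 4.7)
The plan is to construct an $\mathsf{NP}$-property that is useful against $\classc$, and then invoke Proposition \ref{p:npuseful} to convert it (without any advice) into a $\mathsf{P}$-property useful against $\classc$. The key point is that for languages in $\ne \cap \cone$ we have verifiers for both membership and non-membership, which together let us pin down the characteristic function of each slice $L_n$ in nondeterministic polynomial time of the truth-table size $N = 2^n$, removing the need for the $\log N$ bits of advice used in Proposition \ref{p:usefullog}.

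First, fix $L \in \ne \cap \cone \setminus \classc[\poly]$. Let $V_1$ be an $\ne$-verifier for $L$ and $V_2$ be an $\ne$-verifier for $\bar{L}$; both run in time $2^{O(n)}$ on inputs of length $n$ and accept witnesses of length $2^{O(n)}$. Let $f_n : \Boolean$ be the characteristic function of $L_n$. Since $L \notin \classc[\poly]$, for every $k \in \mathbb{N}$ there are infinitely many $n$ for which $f_n$ cannot be computed by any circuit in $\classc[n^k]$.

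Next, define the property $\mathcal{P}$ as follows: on the truth-table $tt(h) \in \{0,1\}^N$ of a function $h:\Boolean$, declare $\mathcal{P}(h) = 1$ iff for every $x \in \{0,1\}^n$ there exists a string $w_x \in \{0,1\}^{2^{O(n)}}$ such that $V_1(x,w_x) = 1$ whenever $h(x) = 1$ and $V_2(x,w_x) = 1$ whenever $h(x) = 0$. Because the combined witness has length $N \cdot 2^{O(n)} = \poly(N)$ and verification takes $N \cdot 2^{O(n)} = \poly(N)$ time, $\mathcal{P}$ is decided by an $\mathsf{NP}$ machine on inputs of size $N$. Moreover, the only function $h$ on $n$ inputs for which $\mathcal{P}(h) = 1$ is precisely $f_n$, since every input has either an $L$-witness or an $\bar{L}$-witness, but not both. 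It follows that for every $k$, at infinitely many $n$ the unique function satisfying $\mathcal{P}$ is $f_n$, and $f_n \notin \classc[n^k]$; hence $\mathcal{P}$ is an $\mathsf{NP}$-property useful against $\classc$.

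Finally, applying Proposition \ref{p:npuseful} with $s(N) \equiv 0$ produces a $\mathsf{P}$-property useful against $\classc$, as required. The only delicate point in the whole argument is ensuring that $\mathcal{P}$ is genuinely useful against every polynomial size bound $n^k$; this is immediate from $L \notin \classc[\poly]$ together with the fact that $\mathcal{P}$ singles out $f_n$ uniquely at each input length, so no further quantitative work is needed beyond invoking Proposition \ref{p:npuseful}.
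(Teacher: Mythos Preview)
Your proof is correct and essentially identical to the paper's own argument: both define the property ``$h$ equals the $n$-th slice of $L$'' by guessing, for each $x\in\{0,1\}^n$, a witness for the appropriate one of the two $\ne$-verifiers, observe this runs in $\mathsf{NP}$ time $\poly(N)$, and then invoke Proposition~\ref{p:npuseful} with no advice.
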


\begin{proof}
Let $L \in \ne \cap \cone \backslash \classc$, and let $V^0$ and $V^1$ be verifiers running in time $2^{O(n)}$ for $n = |x|$ such that:
\begin{eqnarray}
x \in L \quad & \Longleftrightarrow & \quad \exists w_x \in \{0,1\}^{2^{O(n)}}~\text{such that}~\:V^1(x,w_x)=1. \nonumber \\
x \notin L \quad & \Longleftrightarrow & \quad \exists w_x \in \{0,1\}^{2^{O(n)}}~\text{such that}~\:V^0(x,w_x)=1. \nonumber
\end{eqnarray}
\noindent We view $L$ as a family of functions $f = \{f_n\}_{n \in \mathbb{N}}$, where $f_n^{-1}(1) = L_n$. Let $\mathcal{P} = \{f_n \mid n \in \mathbb{N}\}$. First observe that this property is useful against $\classc$, since $L \notin \classc$. In addition, there is an efficient verifier $V_\propp$ for $\propp$: on input a string $tt(h) \in \{0,1\}^N$ representing the truth-table of a function $h:\Boolean$, guess $2^n$ certificates $y_x \in \{0,1\}^{N^c}$, one for each $x \in \{0,1\}^n$, and accept if and only if $V^{h(x)}(x,y_x) = 1$ for every such $x$. Clearly, $V_\propp$ is a $\mathsf{NP}$-verifier for $\propp$. It follows then from Proposition \ref{p:npuseful} that there exists a $\mathsf{P}$-property $\propp'$ that is useful against $\classc$, which completes the proof.   
\end{proof}

Conversely, which consequences can we obtain from the existence of $\mathsf{P}$-properties (without advice) that are useful against $\classc$? The following result is implicit in the work of Williams \citep{DBLP:conf/stoc/Williams13}, and it shows that without advice even stronger consequences can be obtained (although in the quasipolynomial size regime). 

\begin{proposition}\label{p:unary}
Let $\classc$ be a typical circuit class. If for every $c \in \mathbb{N}$ there exists a $\mathsf{P}$-property that is useful against $\classc[n^{\log^{c}n}]$, then $\mathsf{NE} \cap \iocone \nsubseteq \classc[n^{\log n}]$.
\end{proposition}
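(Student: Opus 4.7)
The plan is to prove this by contradiction, adapting the strategy used by Williams for related lower bounds. Assume that $\ne \cap \iocone \subseteq \classc[n^{\log n}]$. A standard padding argument ($n \mapsto n^k$) extends this collapse: for every $k \in \mathbb{N}$, every language decided in nondeterministic time $2^{\log^k n}$ whose complement is also decided in such time on infinitely many input lengths admits $\classc$-circuits of size $n^{O(\log^{k+1} n)}$. Informally, a suitably padded version of $\ne \cap \iocone$ collapses to $\classc[\quasi]$, and this quasipolynomial version of the collapse is what will be fed into a Karp--Lipton-style argument below.

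Now fix a large constant $c$ (to be calibrated against the padding exponents) and let $\mathcal{P}$ be the $\mathsf{P}$-property useful against $\classc[n^{\log^c n}]$ provided by the hypothesis, decided by an algorithm $\mathcal{A}$ running in time $N^d$ on truth tables of length $N = 2^n$. Define the candidate hard language $L$ by letting $L_n \in \{0,1\}^{2^n}$ be the lexicographically smallest truth table $\tau^*_n$ with $\mathcal{A}(\tau^*_n) = 1$ whenever one exists, and $L_n \equiv 0$ otherwise. Usefulness guarantees that $\tau^*_n$ exists and lies outside $\classc[n^{\log^c n}]$ for infinitely many $n$, so any circuit upper bound of the form $\classc[n^{\log^c n}]$ on $L_n$ at infinitely many $n$ will produce the desired contradiction.

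The crux of the argument is placing $L$ in $\ne \cap \iocone$. Unconditionally, the natural description of $L_n(x)$ is a $\Sigma_2$-style predicate with exponential time bound: guess $\tau$, verify $\mathcal{A}(\tau) = 1$ and $\tau(x) = 1$ in time $N^d = 2^{O(n)}$, and then verify $\forall \tau' \prec \tau : \mathcal{A}(\tau') = 0$. I plan to use a Karp--Lipton-style argument, powered by the padded collapse from the first paragraph, to replace this outer universal quantifier by existence of a succinct $\classc$-``spoiler'' circuit refuting the acceptance of any lex-smaller $\tau'$; the $\ne$-verifier guesses this spoiler along with $\tau$ and checks it in time $2^{O(n)}$. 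A symmetric treatment of $\overline{L}_n$ shows it is verifiable in $\ne$ on infinitely many input lengths, placing $L$ in $\iocone$ as well.

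Having placed $L$ in $\ne \cap \iocone$, the collapse assumption forces $L \in \classc[n^{\log n}] \subseteq \classc[n^{\log^c n}]$ whenever $c \geq 1$. On the infinitely many $n$ with $L_n = \tau^*_n$ and $\mathcal{A}(\tau^*_n) = 1$, this yields a $\classc[n^{\log^c n}]$-circuit for $\tau^*_n$, contradicting usefulness of $\mathcal{P}$. The main obstacle I anticipate is the Karp--Lipton step above: the universal quantifier ranges over doubly-exponentially many truth tables of length $N = 2^n$, so replacing it by a guess of a single small $\classc$-spoiler circuit is delicate, must fit within the $2^{O(n)}$ time budget, and has to be arranged symmetrically so that the i.o.-coNE side can also exploit the padded collapse on an infinite set of input lengths.
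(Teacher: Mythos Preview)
Your plan has a genuine gap at the Karp--Lipton step, and it is not just a technicality. The collapse you assume is $\ne \cap \iocone \subseteq \classc[n^{\log n}]$, which says nothing about $\ne$ or $\cone$ individually. The minimality predicate ``no lex-smaller $\tau'$ is accepted by $\mathcal{A}$'' is a $\cone$-type statement on $n$-bit inputs (the witness $\tau'$ has length $2^n$), and you have no reason to place it in $\ne$, hence no reason the collapse hands you a small ``spoiler'' circuit for it. Even under the stronger hypothesis $\ne \subseteq \classc[\quasi]$, the usual search-to-decision for the lex-first accepted string would require prefix queries on inputs of length up to $2^n$, so the relevant circuits would have size roughly $(2^n)^{\log 2^n}$, far outside your $2^{O(n)}$ budget. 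The symmetry you need for the $\iocone$ side only compounds the problem: you must certify minimality on the \emph{same} infinite set of input lengths in both directions, and nothing in the hypothesis pins that set down. So as stated, the plan cannot place your lex-first language $L$ in $\ne \cap \iocone$.

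The paper's proof avoids this entirely by using the useful property in a different role: not to \emph{define} a hard language, but to \emph{derandomize}. From the assumption it only extracts $\mathsf{E} \subseteq \size[n^{\log n}]$, applies the Miltersen--Vinodchandran--Watanabe collapse to obtain $\dtime[2^{n^{2\log n}}] \subseteq \matime[n^{O(\log^3 n)}]$, and then derandomizes this $\mathsf{MA}$ class inside $\ne$ as follows. The $\ne$-verifier guesses a truth table $h_\ell$ on $\ell = n^{o(1)}$ inputs (so $|tt(h_\ell)| = \poly(n)$), checks $\mathcal{A}(tt(h_\ell)) = 1$ to certify that $h_\ell$ is hard, and feeds $h_\ell$ into Umans' pseudorandom generator to fool the $\mathsf{MA}$-verifier in time $2^{O(n)}$. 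Applying this to both a language and its complement in $\dtime[2^{n^{2\log n}}]$ yields, on the infinite set $S$ where the property fires, agreement with some $L' \in \ne \cap \iocone$; the collapse then gives $\mathsf{i.o.}\,\classc[n^{\log n}]$ circuits for $\dtime[2^{n^{2\log n}}]$, and a direct diagonalization finishes. The missing idea in your plan is this hardness-to-randomness step: use the property to \emph{find} a hard seed for a PRG, rather than to pin down a single hard truth table whose minimality you then have to certify.
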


\noindent We give a self-contained proof of this result in Appendix \ref{a:unary}.

Proposition \ref{p:unary} sheds some light into Conjecture \ref{c:advice}. It shows that if the analogue of this conjecture for quasipolynomial size circuits holds, then $\nexp$ lower bounds against such circuits can be translated into similar $\ne \cap \cone$ circuit lower bounds (via a generalization of Proposition \ref{p:usefullog} to quasipolynomial size circuits). 

Given the statement of Propositions \ref{p:neconeuseful} and \ref{p:unary}, it is plausible to conjecture that there is a tight correspondence between useful properties computed without advice and circuit lower bounds for $\ne \cap \cone$. 

\begin{conjecture}\label{c:unary}
Let $\classc = \classc[\poly]$ be a typical circuit class. Then $\ne \cap \cone \nsubseteq \classc$ if and only if there exists a $\mathsf{P}$-property that is useful against $\classc$.
\end{conjecture}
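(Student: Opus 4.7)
The forward direction is exactly Proposition~\ref{p:neconeuseful}, so the task reduces to the converse: assume a $\mathsf{P}$-property $\propp$ useful against $\classc[\poly]$ is decided by an algorithm $\mathcal{A}$ running in time $N^c$ on inputs of size $N = 2^n$, and derive $\ne \cap \cone \nsubseteq \classc[\poly]$. The plan is to argue by contradiction: suppose $\ne \cap \cone \subseteq \classc[\poly]$, and construct a language $L \in \ne \cap \cone$ whose slices $L_n$ repeatedly satisfy $\propp$. By the contradicting assumption $L$ is in $\classc[\poly]$, which supplies polynomial-size $\classc$-circuits computing $L_n$ at every input length, contradicting the usefulness of $\propp$ at the infinitely many $n$ where usefulness is guaranteed.

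The natural candidate for $L$ is built from a canonical function $h_n : \Boolean$ selected by $\propp$, e.g.\ the lexicographically first truth table on $n$ inputs with $\propp(h_n) = 1$; set $L = \{(1^n, x) : h_n(x) = 1\}$. Membership in $\ne$ is straightforward: on input $(1^n,x)$ guess the $N$-bit truth table of a candidate $f$, run $\mathcal{A}$ on it in time $N^c = 2^{O(n)}$ to verify $\propp(f)=1$, and accept iff $f(x) = 1$. The serious difficulty is membership in $\cone$, since certifying $(1^n,x) \notin L$ appears to demand universal quantification over all lex-smaller candidate truth tables. To bypass this I would invoke the contrapositive assumption in the style of the Easy Witness Lemma of Impagliazzo--Kabanets--Wigderson: under $\ne \cap \cone \subseteq \classc[\poly]$ (and hence $\nexp \subseteq \ppoly$), every $\ne$-verifier with some accepting witness has a witness of the form $tt(C)$ for a polynomial-size $\classc$-circuit $C$. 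Replacing truth tables by $\classc$-circuit descriptions collapses the canonicalization space from $2^{2^n}$ to $2^{\poly(n)}$, reducing the minimality check to a coNP-style statement that should admit an $\cone$ verifier through standard guess-and-check over succinct descriptions.

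The step I expect to be the main obstacle is sharpening the conclusion from $\iocone$ (as in Proposition~\ref{p:unary}) to full $\cone$, while simultaneously working in the polynomial rather than quasi-polynomial regime. The i.o.\ qualifier in Proposition~\ref{p:unary} arises because usefulness of $\propp$ is guaranteed only at an infinite subset $S \subseteq \mathbb{N}$ of input lengths; turning this into a correct $\cone$-verifier at \emph{every} length requires transferring usefulness from $n \in S$ to all nearby lengths. A possible route is padding combined with a constructive downward self-reduction for $\classc$-circuits of the kind pioneered in Santhanam--Williams~\citep{DBLP:journals/eccc/SanthanamW12}; another modular route is to first establish Conjecture~\ref{c:advice}, which, combined with Proposition~\ref{p:usefullog}, would de-advise the $\nexp$-style construction and, together with the easy-witness canonicalization above, yield the desired $\ne \cap \cone$ separation. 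Absent either ingredient, the argument appears to stop at the i.o.\ conclusion already recorded in Theorem~\ref{t:eqnecone}(ii), which is precisely the present gap between that theorem and Conjecture~\ref{c:eqnecone}.
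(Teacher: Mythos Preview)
The statement you are attempting to prove is Conjecture~\ref{c:unary}, which the paper explicitly leaves open; there is no proof in the paper to compare against. Your proposal in fact concedes this in its final paragraph, noting that absent new ingredients the argument ``appears to stop at the i.o.\ conclusion already recorded in Theorem~\ref{t:eqnecone}(ii)''. So what you have written is an outline of the obstacles rather than a proof.

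Beyond that, there are two concrete gaps in the sketch itself. First, the parenthetical inference ``$\ne \cap \cone \subseteq \classc[\poly]$ (and hence $\nexp \subseteq \ppoly$)'' is unjustified: containment of $\ne \cap \cone$ in $\ppoly$ does not yield $\nexp \subseteq \ppoly$ (padding only gives $\nexp \cap \conexp \subseteq \ppoly$, which is strictly weaker). Without $\nexp \subseteq \ppoly$ you cannot invoke the Impagliazzo--Kabanets--Wigderson easy witness lemma (Proposition~\ref{p:witnesscircuits}), so the proposed collapse of the canonicalization search space from $2^{2^n}$ to $2^{\poly(n)}$ does not go through. Second, your claimed $\ne$ verifier for $L$ is not correct as stated: guessing \emph{some} $f$ with $\propp(f)=1$ and $f(x)=1$ does not certify that $f$ is the lexicographically \emph{first} such truth table, so the language accepted by your verifier is not $L$. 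The lex-minimality check is a universal condition over exponentially many candidates and is just as problematic for $\ne$ as it is for $\cone$; this is precisely why the paper's Proposition~\ref{p:unary} passes through the $\matime$ collapse (Lemma~\ref{l:mvw}) and pseudorandom generators (Proposition~\ref{p:umans}) rather than any direct canonicalization, and even then only reaches an $\iocone$ conclusion in the quasipolynomial regime.
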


We will see in Section \ref{s:applications} that useful properties are powerful enough to simplify and generalize many results of the form ``nontrivial algorithms yield circuit lower bounds''. In particular, a proof of Conjecture \ref{c:unary} would provide stronger transference theorems in different frameworks.

\subsection{Satisfiability algorithms and useful properties}\label{s:stronger}

It is possible to formulate the main result from Section \ref{s:sat} as follows: the existence of nontrivial satisfiability algorithms leads to useful properties, which in turn imply circuit lower bounds. This can be accomplished using the fact that the nondeterministic hierarchy theorem also holds for unary languages. In other words, if there exists a nontrivial SAT algorithm for a circuit class $\classc$, the proof of Proposition \ref{p:main} shows that any verifier for a hard unary language must have infinitely many inputs that only admit certificates of high $\classc$-circuit complexity. This verifier can be used to define a property that is useful against $\classc$: given a truth table $tt(h_n)$, check if it is a valid certificate for the input $1^n$. 

More specifically, satisfiability algorithms for polynomial size circuits lead to $\mathsf{P}$-properties useful against circuits of polynomial size, while algorithms for quasipolynomial size circuits lead to $\mathsf{P}$-properties useful against circuits of such size. The reader should compare the transference theorems from \citep{DBLP:conf/coco/Williams11} and \citep{DBLP:conf/stoc/Williams13} (Propositions \ref{p:williams1} and \ref{p:williams2}, respectively) to the statements of Propositions \ref{p:usefullog} and \ref{p:unary}. If Conjecture \ref{c:unary} is true, the existence of nontrivial satisfiability algorithms for $\classc[\poly]$ would imply that $\ne \cap \cone \nsubseteq \classc[\poly]$, a new result.

\section{Applications}\label{s:applications}

\subsection{Lower bounds from lossy compression}\label{s:compression}

In this section we prove the transference theorem obtained by Chen et al. \citep{DBLP:journals/eccc/ChenKKSZ13}, which we state again for convenience.

\begin{proposition}[Compression yields circuit lower bounds \citep{DBLP:journals/eccc/ChenKKSZ13}]~\\
Let $\classc$ be a typical circuit class. Suppose that for every $c \in \mathbb{N}$ there is a deterministic polynomial-time algorithm that compresses a given truth table of an $n$-variate boolean function $f \in \classc[n^c]$ to an equivalent circuit of size $o(2^n/n)$. Then $\nexp \nsubseteq \classc$. 
\end{proposition}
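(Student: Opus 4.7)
\emph{Proof proposal.} The plan is to reduce the proposition to Theorem~\ref{t:eqnexp} by manufacturing, from the hypothesized family of compression algorithms, a $\mathsf{P}/\log N$-property that is useful against $\classc[\poly]$. Since $\classc$ is typical, Theorem~\ref{t:eqnexp} then directly yields $\nexp \nsubseteq \classc[\poly]$. The advice channel of size $\log N = n$ is the natural place to hide the parameter $c$ specifying which of the hypothesized compression algorithms to invoke at length $n$.

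Concretely, for each $c \in \mathbb{N}$ let $\mathcal{A}_c$ be the polynomial-time compression algorithm for $\classc[n^c]$ guaranteed by the hypothesis, and let $s_c(n) = o(2^n/n)$ be a bound such that on any input $tt(f)$ with $f \in \classc[n^c]$, $\mathcal{A}_c$ outputs an equivalent circuit of size at most $s_c(n)$. Fix a threshold $\tau(n) := 2^n/(10n)$ (any threshold of the form $o(2^n/n)$ that lies above Shannon's counting bound only by a constant factor works). Define the property
\[
\mathcal{P}(tt(f_n), c) \;=\; 1 \quad\Longleftrightarrow\quad \mathcal{A}_c(tt(f_n)) \text{ fails to output a circuit of size } \le \tau(n) \text{ equivalent to } f_n.
\]
Given the truth table $tt(f_n) \in \{0,1\}^N$ and the advice $c$, deciding $\mathcal{P}$ amounts to running $\mathcal{A}_c$ (polytime in $N$), measuring the size of the output circuit, and brute-force checking its equivalence to $f_n$ by evaluating it on all $N$ inputs; so $\mathcal{P} \in \mathsf{P}/\log N$, where for each $n$ the advice string encodes an integer $c_n$ (as a constant-length number padded inside an $n$-bit string).

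To verify usefulness against $\classc[\poly]$, choose the advice sequence $c_n \to \infty$ slowly (say $c_n = \lfloor \log n \rfloor$). Fix any $k \in \mathbb{N}$; for all sufficiently large $n$ we have $c_n \ge k$, so for these $n$:
(i) if $g_n \in \classc[n^k] \subseteq \classc[n^{c_n}]$, then $\mathcal{A}_{c_n}(tt(g_n))$ outputs an equivalent circuit of size $\le s_{c_n}(n) \le \tau(n)$ (for $n$ past the $o$-threshold), so $\mathcal{P}(g_n, c_n) = 0$; (ii) by Shannon's counting bound (the $\delta = 1/2$ case of Lemma~\ref{l:hardfunction}, or a direct argument), there exists an $f_n : \{0,1\}^n \to \{0,1\}$ having no circuit of size $\le \tau(n)$, so whatever $\mathcal{A}_{c_n}$ outputs cannot be both small and equivalent to $f_n$, hence $\mathcal{P}(f_n, c_n) = 1$. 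This shows $\mathcal{P}$ is a $\mathsf{P}/\log N$-property useful against $\classc[\poly]$, and Theorem~\ref{t:eqnexp} concludes $\nexp \nsubseteq \classc[\poly]$.

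The only delicate point is the bookkeeping around the size bound: since "$s_c(n) = o(2^n/n)$" a priori depends on $c$, we must ensure that a single threshold $\tau(n)$ simultaneously lies above $s_{c_n}(n)$ for the chosen schedule $c_n$, and below the Shannon bound so that hard functions witness $\mathcal{P}$. This is handled by letting $c_n$ grow slowly enough that $s_{c_n}(n) \le \tau(n)$ holds for all large $n$ (e.g. pick $c_n$ to be the largest integer $\le \log n$ for which $s_{c_n}(n) \le 2^n/(10n)$; such $c_n$ is unbounded because each $s_c$ is $o(2^n/n)$), and observing that Shannon's bound comfortably places the threshold above what $s_c$ can ever reach. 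No other step is substantive, and the argument extends verbatim to the lossy-compression setting of Theorem~\ref{t:lossycompression} once the Shannon counting bound is replaced by the average-case hardness statement of Lemma~\ref{l:hardfunction}.
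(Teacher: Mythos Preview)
Your approach---manufacture a useful property from the compression algorithm and invoke Theorem~\ref{t:eqnexp}---is exactly the route the paper takes (see the proof of Proposition~\ref{p:clbfromapprox}, which treats the lossy generalization and applies Proposition~\ref{p:usefullog}, i.e.\ Theorem~\ref{t:eqnexp}). The structure of your argument and the role of Shannon's counting bound (Lemma~\ref{l:hardfunction}) match the paper's proof.

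There is, however, a small but genuine gap in your claim that $\mathcal{P}\in\mathsf{P}/\log N$. Two issues are conflated. First, the hypothesis gives you a \emph{sequence} of algorithms $\mathcal{A}_c$, not a uniform map $c\mapsto\mathcal{A}_c$; knowing the integer $c_n$ does not tell the property-decider how to run $\mathcal{A}_{c_n}$. The advice must therefore carry the Turing-machine \emph{code} of $\mathcal{A}_{c_n}$ (and a time bound), not merely the index $c_n$. Second, and more importantly, each $\mathcal{A}_c$ runs in time $N^{p(c)}$ for some exponent $p(c)$ that may grow with $c$; once $c_n\to\infty$ the simulation time $N^{p(c_n)}$ is not bounded by any fixed polynomial in $N$, so the property is not literally in $\mathsf{P}/\log N$. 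This is easy to repair---either open up the proof of Theorem~\ref{t:eqnexp} and observe that the $\nexp$-verifier there tolerates running time $2^{n\cdot p(c_n)}=2^{n^{O(1)}}$ as long as $c_n$ grows slowly, or note that the witness-circuit argument (Proposition~\ref{p:witnesscircuits}) only requires a $\nexp$-verifier---but as written the black-box appeal to Theorem~\ref{t:eqnexp} is not quite justified.

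The paper sidesteps both issues by \emph{defining} a (lossy) compression scheme as a \emph{single} algorithm $\mathcal{A}$ that, for every $k$, succeeds on infinitely many $n$ against $\classc[n^k]$; this yields a genuine $\mathsf{P}$-property with no advice and a fixed polynomial running time. Your ``delicate point'' about thresholding $s_{c_n}(n)$ is handled correctly, and your closing remark that the argument extends to the lossy case is exactly what the paper does.
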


As mentioned before, it is possible to show a similar result from the existence of lossy compression algorithms.
  
\begin{definition}[Lossy compression scheme] Let $\classc$ be a typical circuit class. We say that a \emph{deterministic} algorithm $\mathcal{A}$ is a $(\delta(n),s(n))$-\emph{compression algorithm} for $\classc$ if $\mathcal{A}$ runs in time $\poly(N)$, and for any fixed $k \in \mathbb{N}$, there are infinitely many integers $n$ for which the following holds. Given any string $tt(f_n) \in \{0,1\}^N$ representing a function $f_n: \Boolean$ computed by circuits in $\classc[n^k]$, $\mathcal{A}$ outputs a circuit $C$ on $n$ inputs of size at most $s(n)$ that computes $f_n$ with advantage $\delta(n)$.
\end{definition}  

\begin{proposition}[Lossy compression yields circuit lower bounds] \label{p:clbfromapprox}~\\
Let $\classc$ be a typical circuit class, and let $\delta(n): \mathbb{N} \rightarrow (0,1/2]$ be an arbitrary function. If there exists a $(\delta(n),o(2^n \delta^2/n))$-compression algorithm for $\classc$, then $\nexp \nsubseteq \classc$. 
\end{proposition}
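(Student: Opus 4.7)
The plan is to reduce the theorem to the useful-properties framework: I would construct a $\mathsf{P}$-property $\mathcal{P}$ that is useful against $\classc$ and then invoke Proposition \ref{p:usefullog}. Let $\mathcal{A}$ be the given $(\delta(n),s(n))$-compression algorithm, where $s(n) = o(2^n \delta(n)^2/n)$. On input a truth table $tt(f_n) \in \{0,1\}^N$ with $N = 2^n$, I would define $\mathcal{P}(f_n) = 1$ if and only if the circuit $C = \mathcal{A}(tt(f_n))$ either has more than $s(n)$ gates or fails to compute $f_n$ with advantage $\delta(n)$.

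The first step is to verify that $\mathcal{P} \in \mathsf{P}$. By assumption $\mathcal{A}$ runs in time $\poly(N)$, so its output $C$ has size $\poly(N)$; counting the gates of $C$ and brute-forcing $\Pr_{x \in \{0,1\}^n}[C(x) = f_n(x)]$ over all $N$ inputs each take $\poly(N)$ time, so the whole procedure runs in polynomial time in the input length. Next I would check usefulness. Fix any $k \in \mathbb{N}$. By the definition of a compression algorithm, there is an infinite set $S_k \subseteq \mathbb{N}$ such that for every $n \in S_k$ and every $g_n \in \classc[n^k]$ the output $\mathcal{A}(tt(g_n))$ has size at most $s(n)$ and computes $g_n$ with advantage $\delta(n)$; hence $\mathcal{P}(g_n) = 0$ for every such $g_n$. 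On the positive side, Lemma \ref{l:hardfunction} supplies, for all sufficiently large $n$, a function $h:\Boolean$ that cannot be computed with advantage $\delta(n)$ by any circuit of size $\alpha \cdot 2^n \delta(n)^2/n$. Since $s(n) = o(2^n \delta(n)^2/n)$, for $n$ large enough we have $s(n) < \alpha \cdot 2^n \delta(n)^2/n$, so $\mathcal{A}(tt(h))$ either exceeds the size threshold $s(n)$ or, if it does not, necessarily fails to approximate $h$ with advantage $\delta(n)$; either way $\mathcal{P}(h) = 1$. Thus $\mathcal{P}$ is useful against $\classc$.

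Finally, a $\mathsf{P}$-property is a fortiori a $\mathsf{P}/\log N$-property, so Proposition \ref{p:usefullog} (equivalently, Theorem \ref{t:eqnexp}) delivers $\nexp \nsubseteq \classc$. The main obstacle is not really technical but conceptual: the compression algorithm $\mathcal{A}$ is only guaranteed to behave well on truth tables coming from $\classc$, yet the property $\mathcal{P}$ has to be defined (and decidable) on \emph{all} truth tables. The fix is already built into the definition: by explicitly checking the output size and the approximation quality inside $\mathcal{P}$, we remove any reliance on a promise, the only cost being that $\mathcal{P}$ is no longer guaranteed to be dense. Usefulness, however, does not require denseness, and everything goes through.
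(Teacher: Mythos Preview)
Your proposal is correct and follows essentially the same approach as the paper: build a $\mathsf{P}$-property by running the compression algorithm, checking the output's size against a threshold and its approximation quality against $f_n$, and then invoke Proposition~\ref{p:usefullog} together with Lemma~\ref{l:hardfunction}. The only cosmetic difference is that the paper uses the threshold $\alpha \cdot 2^n \delta^2/n$ from Lemma~\ref{l:hardfunction} rather than your $s(n)$, but since $s(n) = o(2^n \delta^2/n) < \alpha \cdot 2^n \delta^2/n$ for large $n$, both choices work.
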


\begin{proof}
Let $\classc$ be a typical circuit class. Fix any function $\delta = \delta(n)$. Let $\mathcal{A}$ be an efficient $(\delta,o(2^n \delta^2/n))$-compression algorithm for $\classc$. We use $\mathcal{A}$ to construct an algorithm $\mathcal{B}$ that implicitly defines a property that is useful against $\classc$. The proof then follows immediately from Proposition \ref{p:usefullog}. 

We define $\mathcal{B}$ as follows. Given any truth table $tt(f) \in \{0,1\}^N$ as input, apply $\mathcal{A}$ to $tt(f)$ to obtain the description of a circuit $C$ over $n$ inputs. If $C$ is not a valid circuit, or it has more than $\alpha \cdot 2^n \delta^2/n$ gates, accept. Otherwise, check if $C$ computes $f$ with advantage $\delta$, and accepts $tt(f)$ if and only if this is \emph{not} the case.

Let $\propp$ be the property computed by $\mathcal{B}$. We need to check that $\propp$ is a $\mathsf{P}$-property that is useful against $\classc$. First, observe that $\mathcal{B}$ runs in time $\poly(N)$, since by assumption $\mathcal{A}$ is efficient, and $N = 2^n$. Also, $B$ will always accept some family of hard functions, since it follows from Lemma \ref{l:hardfunction} that for sufficiently large $n$ there are functions that cannot be computed with advantage $\delta$ by circuits of size less than $\alpha \cdot 2^n \delta^2 /n$. Finally, for any fixed $k$, it follows from the definition of lossy compression that there are infinitely many input sizes $n$ on which $\mathcal{A}$ succeeds. For all such inputs sizes, algorithm $\mathcal{B}$ will correctly reject functions computed by circuits from $\classc[n^k]$.
\end{proof}

This result is optimal for very small $\delta$. More precisely, it follows from elementary Fourier analysis of boolean functions that for every boolean function $f_n$ there is a parity function over some subset $S \subseteq [n]$ that computes $f_n$ with advantage $\Omega(2^{-n/2})$. Further, it is possible to check all parity functions in deterministic time $\poly(N)$.

\begin{remark}
Similar techniques can be used to show that lossy compression of quasipolynomial size circuits leads to circuit lower bounds for $\mathsf{NE} \cap \iocone$. This can be obtained through an application of Proposition \emph{\ref{p:unary}}.
\end{remark}

\subsection{Derandomization, SAT algorithms and circuit lower bounds}\label{s:derandomization}

In this section we use Williams' framework to prove that derandomization yields circuit lower bounds. Recall that $\mathsf{PIT}$ is the language consisting of all arithmetic circuits that compute the zero polynomial over $\mathbb{Z}$, and $\perm$ is the problem of computing the permanent of integer matrices.

Our proof uses the notion of useful algorithms introduced in Definition \ref{d:usefulalgorithms}. The following consequence is immediate from Proposition \ref{p:useful}.

\begin{corollary}\label{c:nexpppoly}
Assume that $\nexp \subseteq \size[\poly]$. Then there is $c\in \mathbb{N}$ such that there is no useful algorithm for $\equivand$-$\size[n^c]$.
\end{corollary}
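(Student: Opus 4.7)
The plan is to derive this corollary as the contrapositive of Proposition \ref{p:useful} specialized to $\classc = \size[\poly]$, with the refinement that a single fixed polynomial exponent $c$ already forces the contradiction. The key observation is that every circuit produced inside the proof of Proposition \ref{p:main} has a size that depends only on the language $L$ witnessing the nondeterministic time hierarchy, not on any hypothetical algorithm. Hence one can pinpoint a single $c$ in advance.

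First, I would fix a language $L \in \ntime[2^n]\setminus\ntime[o(2^n)]$ supplied by the nondeterministic time hierarchy theorem. Then I would trace the proof of Proposition \ref{p:main}: reduce $L$ to $\ssat$ via Lemma \ref{l:reduction} to obtain, for every input $x$ of length $n$, a compression circuit $D_x$ of size $O(n^{c_0})$ on $m = n + O(\log n)$ variables, where $c_0$ is the universal constant of that lemma; next, invoke the hypothesis $\nexp\subseteq\size[\poly]$ together with Lemma \ref{l:witness} to guarantee succinct satisfying assignments encoded by a circuit $E_x$ of size $O(n^{c_1})$, with $c_1$ depending only on $L$; finally, combine $D_x$ with three copies of $E_x$ and a constant-size clause-violation check to form the circuit $B$ on $m$ variables of size $O(n^b)$ such that $x\in L$ if and only if $B$ is unsatisfiable, where $b$ is a constant determined solely by $c_0$ and $c_1$.

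The crucial simplification over the general $\classc$ case is that, because $\size[\poly]$ already contains all polynomial-size circuits, the resulting $B$ lies in $\size[n^b]$ directly, so the Conversion Lemma (Lemma \ref{l:translation}) is \emph{not} needed. It only remains to encode ``$B$ is unsatisfiable'' as an $\equivand$-$\size[n^b]$ instance: setting $f_1 = B$, $f_2 = \mathbf{1}$, $f_3 = \mathbf{0}$ (the constant circuits of size $O(1)$ that exist because $\size[\poly]$ is reasonable), the $\equivand$ query returns true precisely when $B$ is identically zero. A useful algorithm for $\equivand$-$\size[n^b]$ would then, by the same reasoning as in the proof of Proposition \ref{p:useful}, decide $L$ in nondeterministic time $2^m/s(m) = \poly(n)\cdot 2^n/s(n) = o(2^n)$, contradicting $L\notin\ntime[o(2^n)]$. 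Setting $c := b$ supplies the desired constant.

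The only place where one must be careful — and essentially the main (though mild) obstacle — is in verifying that $b$ depends solely on the fixed choice of $L$ and on the universal constants supplied by Lemmas \ref{l:reduction} and \ref{l:witness}, rather than floating with the hypothetical algorithm. Once this is checked by simply reading off the sizes in the construction of $B$, the ``$\exists c$'' quantifier in the statement is obtained immediately and the corollary follows.
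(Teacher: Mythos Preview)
Your approach is correct and is essentially what the paper intends; the paper simply writes ``immediate from Proposition~\ref{p:useful}'' and leaves to the reader the observation that the circuit $B$ produced in the proof of Proposition~\ref{p:main} has size $O(n^b)$ for a constant $b$ depending only on $L$ and the hypothesis, so a useful algorithm for $\equivand$-$\size[n^b]$ already suffices. Your explicit note that for $\classc=\size[\poly]$ the Conversion Lemma is unnecessary (since $B$ is already a general circuit) is a pleasant simplification that recovers Williams' original $\ppoly$ argument, but it does not change the route in any substantive way.
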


In addition, we will need the following auxiliary lemma.

\begin{lemma}[Kabanets and Impagliazzo \citep{DBLP:journals/cc/KabanetsI04}, Aaronson and van Melkebeek \citep{DBLP:journals/toc/AaronsonM11}]\label{l:arith}~\\
There exists an efficient algorithm that takes as input an arithmetic circuit $A_m$ and
an integer $m$, and produces an arithmetic circuit $C_m$ such that $A_m$ computes the permanent of $m \times m$ matrices
matrices over $\mathbb{Z}$ if and only if $C_m \in \pit$.
\end{lemma}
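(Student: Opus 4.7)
My plan is to reduce the question of whether $A_m$ correctly computes $\perm_m$ to polynomial identity testing via the classical downward self-reducibility of the permanent. Concretely, Laplace expansion along the first row gives the polynomial identity
$$\perm_k(X) \;=\; \sum_{j=1}^{k} x_{1j} \cdot \perm_{k-1}(X_{1,j}),$$
where $X = (x_{ij})$ is a matrix of formal indeterminates and $X_{1,j}$ is the $(k-1) \times (k-1)$ minor obtained by deleting row $1$ and column $j$. Checking this recurrence at every level, plus a trivial base case, characterizes the permanent as a polynomial, so the task becomes a polynomial identity test on a suitable arithmetic circuit.

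I would interpret the input as providing (or permitting us to simulate, via a permanent-preserving padding that places a smaller matrix in the lower-right block of one whose first row and first column are $(1,0,\dots,0)$) arithmetic circuits $A_1, A_2, \ldots, A_m$, one at each size up to $m$. For each $k \in \{2,\ldots,m\}$ I build a circuit on a fresh block of $k^2$ formal variables $Y = (y_{ij})$ computing
$$D_k(Y) \;:=\; A_k(Y) \;-\; \sum_{j=1}^{k} y_{1j}\cdot A_{k-1}(Y_{1,j}),$$
together with the base-case circuit $D_1(y) := A_1(y) - y$. The output circuit $C_m$ is then the formal sum $\sum_{k=1}^m D_k$ taken over pairwise disjoint variable blocks, so that $C_m$ is the zero polynomial iff every $D_k$ is individually the zero polynomial. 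Each $D_k$ has size polynomial in $|A_k|+|A_{k-1}|+k$, so $C_m$ has total size polynomial in $|A_m| + m$ and is produced in polynomial time.

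Correctness is a direct induction on $k$. If each $A_k$ equals $\perm_k$ as a polynomial, then every $D_k$ vanishes identically by Laplace's expansion, so $C_m \in \pit$. Conversely, if every $D_k$ is the zero polynomial, then $A_1(y) = y = \perm_1(y)$, and assuming inductively that $A_{k-1} \equiv \perm_{k-1}$, the vanishing of $D_k$ forces $A_k(Y) = \sum_j y_{1j}\,\perm_{k-1}(Y_{1,j}) = \perm_k(Y)$. In particular, $A_m$ computes the $m \times m$ permanent.

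The main obstacle is the mild notational ambiguity of the hypothesis, which formally supplies only the single circuit $A_m$ while the self-reduction naturally references smaller sizes. The cleanest fix is the family-based reading (as in Kabanets--Impagliazzo); if one insists on working with $A_m$ alone, then one feeds $A_m$ padded matrices for the smaller instances and must additionally include explicit base-case consistency checks inside $C_m$ (e.g.\ evaluating $A_m$ on a padded identity matrix and subtracting $1$) to rule out degenerate fixed points of the linear recurrence such as $A_m \equiv 0$. Either way the resulting $C_m$ has polynomial size, the construction is efficient, and $C_m \in \pit$ holds exactly when $A_m$ computes $\perm_m$.
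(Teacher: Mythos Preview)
The paper does not give its own proof of this lemma; it is quoted from Kabanets--Impagliazzo and Aaronson--van Melkebeek without argument. Your proposal is exactly the standard construction in those references: exploit the Laplace-expansion self-reducibility of the permanent to produce identities $D_1,\ldots,D_m$, combine them over disjoint variable blocks, and verify correctness by induction. Both the construction and the correctness argument are sound.

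One minor remark on the padding variant: your existing base case $D_1(y) = \tilde{A}_1(y) - y$, where $\tilde{A}_1(y)$ is $A_m$ evaluated on the appropriately padded $1\times 1$ matrix, already eliminates degenerate solutions such as $A_m \equiv 0$ (since then $D_1(y) = -y \not\equiv 0$), so the additional identity-matrix check you mention is harmless but not needed.
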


We are now ready to give a short proof of the following result. Our argument follows the same high-level approach employed by \citep{DBLP:journals/cc/KabanetsI04} and \citep{DBLP:journals/toc/AaronsonM11}. 

\begin{proposition}[Kabanets and Impagliazzo \citep{DBLP:journals/cc/KabanetsI04}]\label{p:lowerfromderand}~\\
If $\pit \in \nsubexp$, then at least one of the following results hold:
\begin{itemize}
\item[\emph{(}i\emph{)}] $\nexp \nsubseteq \size[\poly(n)]$; or
\item[\emph{(}ii\emph{)}] $\perm \nsubseteq \asize[\poly(n)]$.
\end{itemize}
\end{proposition}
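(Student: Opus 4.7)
I would prove the contrapositive: assume $\pit \in \nsubexp$ together with \emph{both} $\nexp \subseteq \size[\poly]$ and $\perm \in \asize[\poly]$, and derive a contradiction. Concretely, for \emph{every} $c \in \mathbb{N}$ I will construct a useful algorithm (in the sense of Definition~\ref{d:usefulalgorithms}) for $\equivand$-$\size[n^c]$ running in nondeterministic time $2^{n^{o(1)}}$. Since Corollary~\ref{c:nexpppoly} says that under $\nexp \subseteq \size[\poly]$ some $c$ admits no such useful algorithm, this yields the desired contradiction and hence Proposition~\ref{p:lowerfromderand}.

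\textbf{The useful algorithm.} On input three circuits $f_1, f_2, f_3 \in \size[n^c]$ over $n$ variables, first build a polynomial-size boolean circuit $h(x) := (f_1(x) \wedge f_2(x)) \oplus f_3(x)$; the triple witnesses $\equivand$ iff $\#h = 0$. Apply the standard Tseitin-style reduction to turn $h$ into a 3-CNF $\phi$ whose satisfying-assignment count equals $\#h$ up to a known multiplicative factor, and then apply Valiant's reduction $\#3\mathsf{SAT} \le \perm$ to obtain an explicit integer matrix $M \in \mathbb{Z}^{m \times m}$ with $m = \poly(n)$ and $\perm(M)$ encoding $\#h$. Using the hypothesis $\perm \in \asize[\poly]$, nondeterministically guess a candidate arithmetic circuit $A_m$ of size $\poly(n)$ for the $m \times m$ permanent. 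Verify $A_m$ via Lemma~\ref{l:arith}: it produces a $\pit$ instance $C_m$ of size $\poly(n)$ which lies in $\pit$ iff $A_m$ is correct, and this membership is decided by the nondeterministic algorithm guaranteed by $\pit \in \nsubexp$. Abort on failure; otherwise evaluate $A_m(M)$ (using an auxiliary $\pit$ check to control bit-length) and accept iff the result is zero.

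\textbf{Usefulness and running time.} Every step outside the two $\pit$ calls is polynomial, and each $\pit$ call runs in nondeterministic time $2^{\poly(n)^{o(1)}} = 2^{n^{o(1)}}$, so every surviving branch halts well below $2^n/n^{\omega(1)}$. A branch that guesses a wrong $A_m$ is rejected by the verification step and aborts, while the branch that guesses a correct $P_m$ (one exists by hypothesis) returns the correct answer. These are exactly the two requirements of Definition~\ref{d:usefulalgorithms}, so by Proposition~\ref{p:useful} we would get $\nexp \nsubseteq \size[n^c]$ for every $c$, contradicting Corollary~\ref{c:nexpppoly}.

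\textbf{Main obstacle.} The structural part of the argument is clean, but the delicate point is controlling integer bit-length: $\perm(M)$ can be doubly exponentially large, so naive evaluation of $A_m$ on $M$ and naive $\pit$ verification of $A_m$ over $\mathbb{Z}$ require care. I would address this by following Kabanets and Impagliazzo \citep{DBLP:journals/cc/KabanetsI04} and Aaronson and van Melkebeek \citep{DBLP:journals/toc/AaronsonM11} and guessing, alongside $A_m$, a $\poly(n)$-bit prime $p$ nondeterministically, performing both the permanent verification and the final zero-test modulo $p$, and repeating over polynomially many primes to rule out false zeros. This keeps every $\pit$ instance at $\poly(n)$ size, preserves the $2^{n^{o(1)}}$ running time, and is compatible with the useful-algorithm framework since incorrect guesses (of $A_m$ or $p$) are still caught at verification time.
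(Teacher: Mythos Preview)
Your argument is correct, but it takes a genuinely different route from the paper's proof. The paper also argues by contradiction and also aims at a useful algorithm for $\equivand$-$\size[n^c]$, but it uses the hypothesis $\nexp \subseteq \size[\poly]$ \emph{twice}: first (as you do) to invoke Corollary~\ref{c:nexpppoly} at the end, and second to \emph{guess} a polynomial-size boolean circuit $D_n$ that purportedly decides $\equivand$-$\size[n^c]$ outright. Verifying that $D_n$ is correct is a $\Pi_2$ predicate, so the paper reduces it to a single permanent query via Toda's theorem, guesses and verifies an arithmetic circuit $A_m$ for the permanent exactly as you do (Lemma~\ref{l:arith} plus $\pit \in \nsubexp$), and only then applies the now-certified $D_n$ to the actual input $(f_1,f_2,f_3)$.

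Your approach skips the $D_n$ detour entirely: you observe that the specific $\equivand$ instance is already a $\mathsf{coNP}$ question, reduce it parsimoniously to a single permanent value via $\#3\mathsf{SAT}$, and evaluate that value with the verified $A_m$. This is more elementary --- it needs only Valiant's $\#\mathsf{P}$-completeness of $\perm$, not Toda's theorem --- and it uses $\nexp \subseteq \size[\poly]$ only through Corollary~\ref{c:nexpppoly}. The paper's detour, on the other hand, makes the structure ``guess a solver, verify it, run it'' very explicit, which matches how Proposition~\ref{p:useful} is stated and also makes the dependence on $\pit$ cleanly isolated to the verification of $A_m$.

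One small correction to your ``main obstacle'': $\perm(M)$ is not doubly exponential. With $m=\poly(n)$ and $O(1)$-bounded entries (as in Valiant's reduction), $|\perm(M)| \le m!\cdot O(1)^m = 2^{\poly(n)}$, so it has only $\poly(n)$ bits. The real issue is that \emph{intermediate} values inside the guessed circuit $A_m$ can blow up; your modular-arithmetic fix handles this, and since $\perm(M)$ has only $\poly(n)$ prime factors, testing against polynomially many small primes is a deterministic $\poly(n)$-time step once $A_m$ is certified. With that adjustment your useful algorithm is fully specified and satisfies Definition~\ref{d:usefulalgorithms}.
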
 

\begin{proof}In order to derive a contradiction, assume that:
\begin{itemize}
\item $\pit \in \nsubexp$;
\item $\nexp \subseteq \size[\poly(n)]$;
\item $\perm \subseteq \asize[\poly(n)]$.
\end{itemize}

\noindent More precisely, $\nexp \subseteq \size[\poly(n)]$ implies that there exists a family of circuits $D = \{D_n\}_{n \in \mathbb{N}}$ of size $n^d$ that solves $\equivand$-$\size[n^c]$. In addition, $\perm$ over matrices of order $m$ can be solved by a family of arithmetic circuits $A = \{A_m\}_{m \in \mathbb{N}}$ of size $m^a$ (for some $a \in \mathbb{N}$). We prove that these assumptions contradict Corollary \ref{c:nexpppoly}. We construct a useful algorithm $\mathcal{A}$ for $\equivand$-$\size[n^c]$ as follows.~\\

\noindent {\bf Algorithm $\mathcal{A}$:}\\
\noindent {\bf Input:} Circuits $C_1,C_2$ of size $n^c$.
\begin{itemize}
\item First, $\mathcal{A}$ guesses a circuit $D_n$ of size $n^d$.
\item $\mathcal{A}$ prepares a query to the polynomial time hierarchy\footnote{Observe that $D_n$ does not solve the equivalence problem if and only if $ (\exists  C_1,C_2 \: \exists x$ such that $C_1(x) \neq C_2(x)$ and $D_n(C_1,C_2)=1)$ or $(\exists  C_1,C_2 $ such that $ \forall x (C_1(x) = C_2(x))$ and $D_n(C_1,C_2)=0)$.} to check if $D_n$ solves $\equivand$-$\size[n^c]$.
\item It uses Toda's theorem \citep{DBLP:journals/siamcomp/Toda91} together with the completeness of the permanent problem \citep{DBLP:journals/tcs/Valiant79} to reduce this query to a call to $\perm$ over matrices of dimension $m$, where $m = \poly(n^d)$.
\item Next, $\mathcal{A}$ guesses an arithmetic circuit $A_m$ of size $m^a$.
\item It then applies Lemma \ref{l:arith} to obtain a circuit $C_m$ such that $A_m$ computes the permanent of $m \times m$ matrices
matrices over $\mathbb{Z}$ if and only if $C_m \in \pit$.
\item Now $\mathcal{A}$ uses nondeterminism and the assumption that $\pit \in \nsubexp$ to check if $C_m \in \pit$. It aborts otherwise.
\item It uses $A_m$ to answer the initial query, and aborts if $D_n$ does not solve $\equivand$-$\size[n^c]$.
\item Finally, $\mathcal{A}$ uses $D_n$ to solve $\equivand$-$\size[n^c]$ on inputs $C_1$ and $C_2$.
\end{itemize}

Clearly, $\mathcal{A}$ runs in nondeterministic subexponential time. In addition, it is easy to see that it is a useful algorithm for $\equivand$-$\size[n^c]$, which completes the proof of Proposition \ref{p:clbfromderandomization}.
\end{proof}  

Most importantly, this proof shows that any improvement over Corollary \ref{c:nexpppoly} implies a corresponding improvement over Proposition \ref{p:lowerfromderand}. In addition, it is not hard to see that Conjecture \ref{c:unary} immediately implies the extension of Proposition \ref{p:lowerfromderand} obtained by Aaronson and van Melkebeek \citep{DBLP:journals/toc/AaronsonM11}\footnote{Here is a sketch of the argument. Assume that $\ne \cap \cone \subseteq \ppoly$. Then by Conjecture \ref{c:unary} there is no $\mathsf{P}$-property useful against $\ppoly$. However, it is possible to show that useful algorithms for satisfiability lead to useful properties. Altogether, these assumptions imply the desired strengthening of Corollary \ref{c:nexpppoly}.}.

\subsection{Useful properties and learning algorithms}\label{s:learning}

The existence of learning algorithms in many different models yields circuit lower bounds, as first shown by Fortnow and Klivans \citep{DBLP:journals/jcss/FortnowK09}. In this section we discuss two frameworks for learning: deterministic exact learning from membership and equivalence queries (Angluin \citep{Ang88}), and randomized PAC learning (Valiant \citep{DBLP:conf/stoc/Valiant84}).~\\

\noindent {\bf Exact learning algorithms.} Let $\classc$ be a typical circuit class. In this model, a \emph{deterministic} algorithm is given access to oracles $\mathsf{MQ}^f$ and $\mathsf{EQ}^f$ for some function $f:\Boolean$ in $\classc$. There oracles are defined as follows.~\\

\noindent $\mathsf{MQ}^f$: Given $x \in \{0,1\}^n$, returns $f(x)$.~\\

\noindent $\mathsf{EQ}^f$: Given a hypothesis $h:\Boolean$ represented as a circuit, returns $1$ if $h \equiv f$. Otherwise, returns an arbitrary input $x \in \{0,1\}^n$ such that $f(x) \neq h(x)$.~\\

\noindent For a size function $s: \mathbb{N} \rightarrow \mathbb{N}$, we say that a learning algorithm $\mathcal{A}$ \emph{exact learns} $\classc[s(n)]$ in time $t(n)$ if for every $f \in \classc$, when given access to oracles $\mathsf{MQ}^f$ and $\mathsf{EQ}^f$, $\mathcal{A}$ runs in time at most $t(n)$, and outputs the description of a circuit $C$ computing $f$. In particular, every equivalence query is invoked on a circuit of size at most $t(n)$, and the final hypothesis $C$ is a circuit of size at most $t(n)$.~\\

Recall that one of the main results from Fortnow and Klivans \citep{DBLP:journals/jcss/FortnowK09} states that exact learning a circuit class leads to circuit lower bounds against $\mathsf{E}^{\mathsf{NP}}$ (Proposition \ref{p:clbfromlearning}). The original proof used by them is a clever combination of many results from complexity theory. Here we observe that it is relatively easy to prove results of this form using the machinery of useful properties. To simplify the argument even more, we can view learning as compression, which yields a quick proof of the following result.

\begin{proposition}[``Learning yields circuit lower bounds'']\label{p:exactlearning}~\\
Let $\classc$ be a circuit class. Suppose there exists an exact learning algorithm for $\classc[\poly]$ that runs in subexponential time. Then $\nexp \nsubseteq \classc[\poly]$.
\end{proposition}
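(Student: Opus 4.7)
The plan is to view the exact learning algorithm as a compression algorithm and then invoke Proposition \ref{p:clbfromcompression}. Concretely, let $\mathcal{L}$ be an exact learning algorithm for $\classc[\poly]$ running in time $t(n) = 2^{n^{o(1)}}$. I will design an algorithm $\mathcal{B}$ that, on input $tt(f) \in \{0,1\}^N$ (with $N = 2^n$) representing some $f \in \classc[n^c]$, outputs an equivalent circuit of size $o(2^n/n)$ in time $\poly(N)$. Since $\mathcal{L}$ only interacts with $f$ through the oracles $\mathsf{MQ}^f$ and $\mathsf{EQ}^f$, the truth-table $tt(f)$ is sufficient to simulate $\mathcal{L}$ faithfully.

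First I would describe how $\mathcal{B}$ simulates the two oracles. A query $\mathsf{MQ}^f(x)$ is answered by returning the bit of $tt(f)$ indexed by $x$, which costs $O(N)$. A query $\mathsf{EQ}^f(h)$, where $h$ is a hypothesis circuit of size at most $t(n)$, is answered by evaluating $h$ on every $x \in \{0,1\}^n$ and comparing the resulting truth-table with $tt(f)$; if they agree, $\mathcal{B}$ returns ``yes'', otherwise it returns the lexicographically first $x$ where $h(x) \neq tt(f)_x$. Each such query costs $\poly(N) \cdot t(n)$ time. Because $\mathcal{L}$ makes at most $t(n)$ oracle queries and itself runs in time $t(n)$, the total running time of $\mathcal{B}$ is $\poly(N) \cdot t(n) = \poly(N) \cdot 2^{n^{o(1)}} = \poly(N)$.

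Since $\mathcal{L}$ is guaranteed to output a circuit exactly computing $f$ of size at most $t(n) = 2^{n^{o(1)}}$, the circuit produced by $\mathcal{B}$ has size $2^{n^{o(1)}} = o(2^n/n)$. Hence $\mathcal{B}$ is a deterministic polynomial-time compression algorithm for $\classc[n^c]$ in the sense of Proposition~\ref{p:clbfromcompression}, and since this holds for every $c \in \mathbb{N}$, the conclusion $\nexp \nsubseteq \classc[\poly]$ follows immediately. Alternatively, one can bypass Proposition~\ref{p:clbfromcompression} and define a $\mathsf{P}$-property $\propp$ directly: on input $tt(f)$, run $\mathcal{B}$ and accept if the output is not a valid circuit of size at most $\alpha \cdot 2^n/n$ computing $tt(f)$; by Lemma~\ref{l:hardfunction} some hard function satisfies $\propp$, and by construction no function in $\classc[n^c]$ does, so Proposition~\ref{p:usefullog} yields the lower bound.

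The argument contains no real obstacle — the only point that requires care is the cost accounting for the $\mathsf{EQ}^f$ simulation, where one must verify that evaluating a hypothesis of size $t(n)$ on all $N$ inputs still fits within $\poly(N)$. This is precisely where the subexponential bound on $t(n)$ is used; if one wanted to weaken the hypothesis to allow $t(n) = 2^{\Theta(n)}$, the simulation would blow up and the reduction to compression would break down.
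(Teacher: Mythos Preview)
Your proposal is correct and follows essentially the same approach as the paper: simulate the exact learner's $\mathsf{MQ}$ and $\mathsf{EQ}$ oracles directly from the truth-table to obtain a compression algorithm, then invoke the compression-to-lower-bound transference. The paper's proof is terser and cites its lossy-compression Proposition~\ref{p:clbfromapprox} (with $\delta = 1$) rather than Proposition~\ref{p:clbfromcompression}, but this is an inessential difference; your alternative route via a $\mathsf{P}$-property and Proposition~\ref{p:usefullog} is exactly how Proposition~\ref{p:clbfromapprox} is itself established.
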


\begin{proof}
Let $\mathcal{A}$ be an exact learning algorithm for $\classc$. It is easy to see that given any truth-table $tt(h) \in \{0,1\}^N$ from $\classc[\poly]$, we can simulate $\mathcal{A}$ on input $h$ in time $2^{O(n)}$. In other words, it is possible to provide correct answers to the membership and equivalence queries asked during $\mathcal{A}$'s computation. By assumption, the learning algorithm outputs a circuit of subexponential size that computes $h$. This is therefore a valid compression algorithm for $\classc[\poly]$, and Proposition \ref{p:exactlearning} follows immediately from Proposition \ref{p:clbfromapprox} with $\delta = 1$. 
\end{proof}

In addition to its simplicity, this proof offers other advantages. The framework of useful properties is more flexible with respect to changes in the learning model. For instance, one could consider deterministic learning algorithms using equivalence queries over subsets $S \subseteq \{0,1\}^n$ encoded by subexponential size circuits, and only require that the learning algorithm outputs a hypothesis that is $\varepsilon$-close to the unknown concept. Again, Proposition \ref{p:clbfromapprox} easily implies circuit lower bounds.

Next we turn our attention to randomized learning algorithms, a class of algorithms for which theorems of the form ``learning implies circuit lower bounds'' are still much weaker than their deterministic counterpart.~\\

\noindent {\bf Randomized PAC learning algorithms.} In the PAC learning framework, there is an unknown function $f \in \classc$ that the learning algorithm is supposed to learn (after obtaining limited information about $f$). Here we concentrate on the stronger model in which the learner can ask membership queries, and only needs to learn under the uniform distribution\footnote{In other words, a transference theorem for this learning model is a stronger result. In addition, it is easy to see that the results discussed here hold under even more powerful learning models.}. In other words, the learner can query the value $f(x)$ on any input $x$, and should be able to obtain, with high probability, a good approximation $h$ for $f$. In general, for any function $f:\{0,1\}^n \rightarrow \{0,1\}$ in $\classc[s(n)]$, given parameters $n$, $\varepsilon$ (accuracy), $\delta$ (confidence), and an upper bound $s(n)$ on the size of the circuit computing $f$, the learning algorithm should output with probability at least $1 - \delta$ a hypothesis $h$  such that $\Pr_x [f(x) \neq h(x)] \leq \varepsilon$ (i.e., $h$ is $\varepsilon$-close to $f$), where the probability is taken over all strings $x$ of size $n$ under the uniform distribution. We measure the running time $t_\mathcal{A}(n,1/\delta, 1/\varepsilon, s(n))$ of a learning algorithm $\mathcal{A}$ as a function of these parameters. As opposed to what is usually called proper learning, the learning algorithm is allowed to output the description of any circuit of size at most $t_\mathcal{A}(.)$ as its final hypothesis. For simplicity, we say that an algorithm $\mathcal{A}$ PAC learns $\classc$ if it learns any function from $\classc$ to accuracy $1/4$ with probability at least $1 - 1/n$.~\\

It is known that the existence of a polynomial time PAC learning algorithm for $\classc[\poly]$ implies that $\mathsf{BPEXP} \nsubseteq \classc[\poly]$ (Fortnow and Klivans \citep{DBLP:journals/jcss/FortnowK09}). However, the same proof provides much weaker results for subexponential time learning, and it is an interesting open problem to show that the existence of subexponential time PAC learning algorithms lead to similar circuit lower bounds. The next proposition shows that this problem is related to the power of randomness in the context of useful properties. First, we extend the definition of useful properties to promise properties.

\begin{definition}[``Promise properties useful against $\classc$'']~\\
A \emph{promise property} of boolean functions $\propp = (\propp_{\mathsf{yes}}, \propp_{\mathsf{no}})$ consists of two nonempty disjoint subsets of the set of all boolean functions. For a typical circuit class $\classc$, $\propp$ is said to be \emph{useful} against $\classc$ if, for all $k$, there are infinitely many positive integers $n$ such that
\begin{itemize}
\item $\proppyes(f) = 1$ for at least one function $f:\Boolean$, and
\item $\proppno(g)=1$ for all $g:\Boolean$ that admits circuits from $\classc[n^k]$.
\end{itemize}
\noindent We say that a promise property $\propp$ is a $\Gamma$-\emph{property} if its corresponding promise problem $L_{\propp}$ is in $\mathsf{promise}$-$\Gamma$.
\end{definition}

\begin{proposition}[``Useful properties from randomized learning'']\label{p:paclearning}~\\
Let $\classc$ be a typical circuit class. Suppose there exists a randomized algorithm $\mathcal{A}$ that \emph{PAC} learns $\classc[\poly]$ in time $2^{n^{o(1)}}$. Then there exists a $(\mathsf{promise}$-$\mathsf{coRP})$-property that is useful against $\classc$.
\end{proposition}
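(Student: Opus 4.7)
The plan is to recycle the ``learning as compression'' idea used for Proposition \ref{p:exactlearning}, but now recording the randomness of the learner in the complexity of the property. Let $\mathcal{A}$ be the assumed randomized PAC learner for $\classc[\poly]$, running in time $t(n) = 2^{n^{o(1)}}$, with accuracy $1/4$ and confidence $1/n$. I will define a randomized polynomial-time algorithm $\mathcal{B}$ that, on input a truth-table $tt(f)\in\{0,1\}^N$, simulates $\mathcal{A}$ on $f$, answering every membership query directly from $tt(f)$. This simulation produces a hypothesis circuit $h$ of size at most $t(n)$. Since $h$ is subexponentially small, $tt(h)$ can be computed in time $2^n \cdot t(n)^{O(1)} = \mathrm{poly}(N)$, and the agreement $|\{x : h(x)=f(x)\}|$ can be compared to $(3/4)\cdot 2^n$ exactly. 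The algorithm $\mathcal{B}$ accepts iff either $h$ has size more than $t(n)$ or the agreement is strictly less than $(3/4)\cdot 2^n$.

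Next, I would define the promise property $\propp = (\proppyes, \proppno)$ by
\[
\proppyes = \{f : \Pr[\mathcal{B}(tt(f)) = 1] = 1\}, \qquad \proppno = \{f : \Pr[\mathcal{B}(tt(f)) = 1] \le 1/n\}.
\]
These sets are disjoint, so $\propp$ is a legitimate promise property, and $\mathcal{B}$ decides $\propp$ in promise-coRP (one-sided error on the NO side, zero error on the YES side).

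To verify usefulness against $\classc$, fix any $k$ and any sufficiently large $n$. On the YES side, invoke Lemma \ref{l:hardfunction} with $\delta = 1/4$: there is a function $h^{*}_n:\Boolean$ that is not computed with advantage $1/4$ by any circuit of size $\alpha \cdot 2^n/(16 n)$. Since $t(n) = 2^{n^{o(1)}}$ is well below this bound for large $n$, no hypothesis produced by the simulation can agree with $h^{*}_n$ on at least $3/4$ of the inputs, so $\mathcal{B}$ accepts $h^{*}_n$ on every sequence of coin tosses; hence $h^{*}_n \in \proppyes$. On the NO side, for any $f \in \classc[n^k]$ the PAC guarantee ensures that with probability at least $1 - 1/n$ the learner outputs a circuit $h$ of size at most $t(n)$ agreeing with $f$ on at least $(3/4)\cdot 2^n$ inputs, so $\mathcal{B}$ rejects with probability at least $1 - 1/n$; that is, $f \in \proppno$. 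This gives the two usefulness conditions for infinitely many $n$.

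The only delicate point is the interplay between the size of the learner's hypothesis and the threshold from Lemma \ref{l:hardfunction}; this is handled by the observation that $t(n) \ll 2^n \delta^2 / n$ for any constant $\delta$ such as $1/4$. The conceptual obstacle, and the reason the conclusion is a promise-coRP-property rather than a P-property as in Proposition \ref{p:exactlearning}, is that the learner's internal randomness cannot be removed without an additional derandomization hypothesis; this matches precisely the open problem for subexponential randomized learning flagged in Section \ref{s:additional}.
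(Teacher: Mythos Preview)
Your proposal is correct and follows essentially the same approach as the paper: simulate the PAC learner using the given truth table to answer membership queries, accept iff the resulting hypothesis fails to approximate the input function, and invoke Lemma~\ref{l:hardfunction} to witness a function in $\proppyes$ while the PAC guarantee places every $\classc[n^k]$ function in $\proppno$. The only cosmetic difference is that the paper defines the promise sets semantically (hard-to-approximate functions versus $\classc$) whereas you define them directly via the acceptance probability of $\mathcal{B}$; your choice of the $3/4$ agreement threshold is in fact more consistent with the stated accuracy parameter $1/4$ than the $1/10$ threshold used in the paper's write-up.
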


\begin{proof}
We use a subexponential time randomized learning algorithm $\mathcal{A}$ for $\classc$ to define a (promise) $\mathsf{coRP}$-property $\propp$ that is useful against $\classc$. Consider the following randomized algorithm $\mathcal{B}$. Given the truth-table $tt(f_n) \in \{0,1\}^N$ of an arbitrary function $f_n:\Boolean$, it simulates the computation of $\mathcal{A}$ over $f_n$, until $\mathcal{A}$ outputs a circuit $C$ of size $2^{n^{o(1)}}$ as its final hypothesis. Algorithm $\mathcal{B}$ accepts $f_n$ if and only if $C$ is \emph{not} $1/10$-close to $f_n$. 

It follows from Lemma \ref{l:hardfunction} that for any large enough $n$ there is a function $h_n$ that cannot be $1/10$-approximated by circuits of subexponential size (for definiteness, fix some constructive size bound). In other words, for any large $n$, there exists at least one function $h_n$ not in $\classc[\poly]$ that is accepted with probability one. In addition, since $\mathcal{A}$ is a PAC learning algorithm for $\classc$, every function in $\classc$ is rejected with high probability. Clearly, $\mathcal{B}$ computes a promise $\mathsf{coRP}$-property that is useful against $\classc$: $\proppyes$ consists of boolean functions that cannot be approximated by circuits of subexponential size, and $\proppno = \classc$. 
\end{proof}

This result gives another example of the fundamental importance of the notion of useful properties in the context of results of the form ``algorithms yield circuit lower bounds''.

\section{Some broad research directions} \label{s:conclusion}

Here is a list of problems related to the results discussed in this survey that we find particularly interesting.~\\

\noindent {\bf Strengthening the $\acc$ lower bound.} Williams proved that $\nexp \nsubseteq \acc$. It follows easily from Lemma \ref{l:trick} that either $\mathsf{P} \nsubseteq \mathsf{ACC}$ or $\mathsf{NEXP} \nsubseteq \mathsf{P}/\mathsf{poly}$. Give an unconditional proof that one of these circuit lower bounds hold.~\\

\noindent {\bf Lossy compression of $\acc$ and $\tc_2$.} Design efficient lossy compression schemes for circuit classes such as $\acc$ or $\tc_2$. To the best of our knowledge, these results do not violate any widely believed cryptographic assumption.~\\

\noindent {\bf Satisfiability algorithms.} Can we make progress on satisfiability algorithms for threshold circuits?~\\

\noindent {\bf SAT algorithms for depth $d+1$ versus lower bounds against depth $d$.} Is it possible to show that, in general, nontrivial satisfiability algorithms for $\classc_{d+1}$ lead to circuit lower bounds against $\classc_d$?

\section*{Acknowledgements}

I would like to thank Rocco Servedio for helpful conversations that lead to the simplification of some proofs. I would also like to thank Valentine Kabanets and Ryan Williams for reading and commenting on a first draft of this survey. Finally, Cl\'{e}ment Canonne provided me valuable advice to improve the presentation.
  
\bibliographystyle{alpha}	
\bibliography{refs-full}	

\appendix

\newpage

\section{$\ne \cap \iocone$ lower bounds from useful properties} \label{a:unary}

In this section we describe the proof of Proposition \ref{p:unary}, which we state again for convenience. 

\begin{proposition*}
Let $\classc$ be a typical circuit class. If for every $c \in \mathbb{N}$ there exists a $\mathsf{P}$-property that is useful against $\classc[n^{\log^{c}n}]$, then $\ne \cap \iocone \nsubseteq \classc[n^{\log n}]$.
\end{proposition*}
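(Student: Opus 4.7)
The plan is to proceed by contradiction, using a quasi-polynomial, ``infinitely often'' analog of the witness lemma (Proposition~\ref{p:witnesscircuits}). Suppose $\ne \cap \iocone \subseteq \classc[n^{\log n}]$. I will choose a sufficiently large constant $c$ (to be fixed in the last step) and, by the hypothesis, take a $\mathsf{P}$-property $\mathcal{P}$ that is useful against $\classc[n^{\log^c n}]$. The target of the argument is to produce, at infinitely many $n$ where $\mathcal{P}$ discriminates, a truth table satisfying $\mathcal{P}$ whose $\classc$-circuit complexity is at most $n^{\log^{c'} n}$ for some fixed $c' < c$; this directly contradicts the usefulness of $\mathcal{P}$.

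The first main step is to establish an adapted witness lemma: under the containment $\ne \cap \iocone \subseteq \classc[n^{\log n}]$, there exists a constant $c'$ such that for every $\ne$-verifier $V$, at infinitely many input lengths $n$, every $x$ of length $n$ accepted by $V$ has some accepting witness whose truth table admits a $\classc$-circuit of size at most $n^{\log^{c'} n}$. I would derive this by mirroring the easy-witness construction of Impagliazzo--Kabanets--Wigderson used in Proposition~\ref{p:witnesscircuits}: define an auxiliary ``witness-bit'' language in $\ne \cap \cone$ (the complement follows from a universal quantifier over truth tables of size $2^{O(n)}$, verifiable in $\cone$-time), apply the containment hypothesis to obtain $\classc$-circuits of quasi-polynomial size, and read off witness circuits using the succinct self-reduction technique from Santhanam and Williams \citep{DBLP:journals/eccc/SanthanamW12}. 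The factor $\log^{c'} n$ absorbs the polynomial blow-up between input size and witness size that is incurred by the prefix-search style extraction.

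Having this, I would apply the adapted witness lemma to the natural $\ne$-verifier $V_\mathcal{P}$ that, on input $1^n$, guesses a truth table $tt \in \{0,1\}^{2^n}$ and verifies $\mathcal{P}(tt) = 1$ in time $\poly(2^n)$. By usefulness of $\mathcal{P}$ against $\classc[n^{\log^c n}]$, there are infinitely many $n$ on which $V_\mathcal{P}$ accepts $1^n$ yet no accepting witness has a $\classc$-circuit of size $n^{\log^c n}$. The adapted witness lemma, at infinitely many of these same $n$ (since outside the accepting lengths its conclusion is vacuous), provides an accepting witness of $\classc$-circuit size at most $n^{\log^{c'} n}$. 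Choosing $c := c' + 1$ at the outset, so that $n^{\log^{c'} n} \le n^{\log^c n}$ for all sufficiently large $n$, yields the desired contradiction and completes the proof.

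The main obstacle is the adapted witness lemma. The classical statement relies on $\nexp \subseteq \classc$ globally, whereas here we only have the weaker ``infinitely often'' containment $\ne \cap \iocone \subseteq \classc[n^{\log n}]$; the challenge is to push the easy-witness argument through in the quasi-polynomial regime without losing more than a $\log^{c'} n$ factor in the exponent, and to ensure that the i.o.\ conclusion overlaps with the accepting lengths of $V_\mathcal{P}$. The latter is handled by observing that the witness-circuit condition is nontrivial precisely on accepting input lengths, so the i.o.\ guarantee on $\mathbb{N}$ restricts to an i.o.\ guarantee on the accepting set of $V_\mathcal{P}$; the former requires the careful succinct-representation step described above, which is exactly the delicate portion of Williams' STOC 2013 argument.
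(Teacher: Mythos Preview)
Your adapted witness lemma is where the argument breaks. The ``witness-bit language'' route you sketch does not recover a \emph{single} witness. The only natural such language that actually lands in $\ne\cap\cone$ is $L_1=\{(1^n,i):\exists\,tt\in\{0,1\}^{2^n},\ \mathcal{P}(tt)=1\text{ and }tt_i=1\}$; its complement is indeed a single universal quantifier over length-$2^n$ strings, so $L_1\in\ne\cap\cone$ and the hypothesis gives $\classc[n^{O(\log n)}]$ circuits for it. But the function $i\mapsto L_1(1^n,i)$ need not itself satisfy $\mathcal{P}$: distinct coordinates may be witnessed by distinct truth tables, so you have not produced any witness circuit at all. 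The alternative ``bit $i$ of the lex-first $tt$ with $\mathcal{P}(tt)=1$'' is consistent, but it is a $\Sigma_2$-type predicate (an existential followed by a universal over exponential-length strings) and is not known to lie in $\ne\cap\cone$, so the containment hypothesis cannot be applied. Prefix-search variants either share this defect or blow the input length up to $2^n$, after which the promised $\classc$-circuit has size $(2^n)^{\log 2^n}=2^{n^2}\gg n^{\log^{c'}n}$. More fundamentally, this is \emph{not} how Impagliazzo--Kabanets--Wigderson prove Proposition~\ref{p:witnesscircuits}: their argument is hardness-versus-randomness (hard witnesses fuel a pseudorandom generator, which derandomizes $\mathsf{MA}$, which together with a collapse contradicts a time hierarchy), not a search-to-decision reduction, and that machinery needs the full $\nexp\subseteq\classc$ to drive the collapse step.

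The paper's proof takes a different route and never states an easy-witness lemma. From $\ne\cap\iocone\subseteq\classc[n^{\log n}]$ one gets $\mathsf{E}\subseteq\size[n^{\log n}]$, and a Miltersen--Vinodchandran--Watanabe collapse then places $\dtime[2^{n^{2\log n}}]$ inside $\matime[n^{O(\log^3 n)}]$. The useful $\mathsf{P}$-property is used not to define a verifier to which one applies a witness lemma, but as a \emph{nondeterministically certifiable source of hardness}: the $\ne$ simulation guesses a short truth table, runs the polynomial-time property algorithm to confirm it is hard, and feeds it into Umans' generator to derandomize the $\mathsf{MA}$ computation. This succeeds on an infinite set $S$ of lengths (those where the property discriminates), and applying the same construction to the complement of any $L\in\dtime[2^{n^{2\log n}}]$ shows the simulated language lies in $\ne\cap\iocone$. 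One more application of the containment hypothesis gives $\dtime[2^{n^{2\log n}}]\subseteq\mathsf{i.o.}\classc[n^{\log n}]$, which contradicts direct diagonalization. The key idea you are missing is that the useful property supplies the hard function for a PRG, rather than defining the verifier whose witnesses you try to compress.
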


This result is implicit in the work of Williams \citep{DBLP:conf/stoc/Williams13}, and it consists of an interesting combination of nondeterminism, a collapse theorem, a hardness vs. randomness result, and simple diagonalization. We will need the following auxiliary results.

\begin{lemma}\label{l:ctosize}
Let $\classc$ be a typical circuit class, and assume that $\mathsf{P} \subseteq \classc[n^{\log n}]$. Then for every $d \in \mathbb{N}$, any function $f:\{0,1\}^n \rightarrow \{0,1\}$ computed by circuits of size $n^{\log^{d}n}$ is computed by circuits from $\classc[n^{\log^{O(d)}n}]$.
\end{lemma}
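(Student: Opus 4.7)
The plan is to adapt the proof of Lemma~\ref{l:trick} to the quasi\-polynomial regime, using a suitable ``Circuit-Eval'' problem as the engine that converts an arbitrary circuit into an equivalent $\classc$-circuit.

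First, fix $d \in \mathbb{N}$ and consider the language
\[
\mathsf{Circuit}\text{-}\mathsf{Eval}_d \;=\; \{\langle E,x\rangle \mid E \text{ encodes a circuit on } n \text{ variables of size } \leq n^{\log^d n},\; E(x)=1\}.
\]
Its inputs have length $m = \Theta(n^{\log^d n})$, and circuit evaluation runs in time linear in $|E|$, so $\mathsf{Circuit}\text{-}\mathsf{Eval}_d \in \mathsf{P}$. The hypothesis $\mathsf{P}\subseteq \classc[n^{\log n}]$ then yields a family of $\classc$-circuits $\{D_m\}$ of size at most $m^{\log m}$ that decides $\mathsf{Circuit}\text{-}\mathsf{Eval}_d$ on inputs of length $m$.

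Next, given a circuit $E$ of size $n^{\log^d n}$ computing some $f:\Boolean$, the plan is to take $D_m$ for $m = \Theta(n^{\log^d n})$ and hardwire the description of $E$ into the $E$-part of the input wires. Because $\classc$ is typical, such hardwiring can be performed without increasing depth and leaves a $\classc$-circuit on only the $n$ variables of $x$, which computes $f$ exactly. A size bookkeeping using $\log m = \Theta(\log^{d+1} n)$ gives
\[
m^{\log m} \;=\; n^{\,O(\log^d n)\cdot O(\log^{d+1} n)} \;=\; n^{\,O(\log^{2d+1} n)} \;=\; n^{\log^{O(d)} n},
\]
which is precisely the claimed bound.

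The only point calling for attention is that the hypothesis $\mathsf{P}\subseteq \classc[n^{\log n}]$ produces nonuniform circuits whose size is measured in the input length $m$ of $\mathsf{Circuit}\text{-}\mathsf{Eval}_d$ rather than in the original parameter $n$; but this is exactly the situation already handled by the description-hardwiring argument behind Lemma~\ref{l:trick}, so no new ingredients are required. I do not foresee a genuine obstacle here: the argument is a routine lift of the polynomial-size case, and any slack in the exponent of $\log^{O(d)} n$ is absorbed by the $O(\cdot)$ on $d$.
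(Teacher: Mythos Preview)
Your proposal is correct and follows exactly the approach the paper intends: the paper's own proof simply states that the result is a parameterized version of Lemma~\ref{l:trick} with a similar proof, and your argument is precisely that parameterized version, including the correct size bookkeeping $m^{\log m}=n^{O(\log^{2d+1} n)}=n^{\log^{O(d)} n}$.
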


\begin{proof}
The result follows from a parameterized version of Lemma \ref{l:trick}, and the proof is similar. 
\end{proof}

\begin{lemma}[Miltersen, Vinodchandran and Watanabe \citep{DBLP:conf/cocoon/MiltersenVW99}]\label{l:mvw}~\\
Let $g(n) > 2^n$ and $s(n) \geq n$ be functions that are both increasing and time-constructible. There exists a constant $d \in \mathbb{N}$ for which the following holds. If $\mathsf{E} \subseteq \mathsf{SIZE}(s(n))$ then $\mathsf{DTIME}[g(n)] \subseteq \mathsf{MATIME}[s(d \log g(n))^d]$.
\end{lemma}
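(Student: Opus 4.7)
My plan is to prove the stated implication by a case split on whether $\mathsf{E}\subseteq\classc[n^{\log n}]$, thereby directly exhibiting a language in $\ne\cap\iocone$ outside $\classc[n^{\log n}]$. If $\mathsf{E}\nsubseteq\classc[n^{\log n}]$, then since $\mathsf{E}\subseteq\ne\cap\cone\subseteq\ne\cap\iocone$ a language from $\mathsf{E}$ already witnesses the conclusion (the hypothesis of the proposition is not even needed). So assume $\mathsf{E}\subseteq\classc[n^{\log n}]$. Since $\classc$ is typical and in particular contained in $\ppoly$, every $\classc$-circuit of size $n^{\log n}$ is simulable by a boolean circuit of size $n^{O(\log n)}$, so $\mathsf{E}\subseteq\size[n^{O(\log n)}]$. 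Applying Lemma~\ref{l:mvw} with $s(n)=n^{O(\log n)}$ and $g(n)=2^{n^a}$ yields $\dtime[2^{n^a}]\subseteq\matime[n^{O(a\log n)}]$ for every constant $a\in\mathbb{N}$.

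With this derandomisation in hand, fix $c$ large enough (any $c\ge 3$ should do once the circuit blow-up constants are tracked), and let $\mathcal{P}$ be a $\mathsf{P}$-property useful against $\classc[n^{\log^c n}]$, decidable by an algorithm $V_{\mathcal{P}}$ in time $N^d$ on inputs of length $N=2^n$. Let $f_n^\star\in\{0,1\}^{2^n}$ denote the lex-first truth table with $V_{\mathcal{P}}(f_n^\star)=1$ whenever such an $f$ exists, and define
\[
L_{\mathcal{P}}\;=\;\{\,\langle 1^n,x\rangle\in\{0,1\}^{O(n)}\,:\,f_n^\star(x)=1\,\}.
\]
By usefulness, on infinitely many $n$ the function $f_n^\star$ has no circuit in $\classc[n^{\log^c n}]$; reading off the slice of $L_{\mathcal{P}}$ at length $m=\Theta(n)$ recovers $f_n^\star$ up to a simple encoding, so $L_{\mathcal{P}}\notin\classc[m^{\log m}]$ at those lengths, and hence $L_{\mathcal{P}}\notin\classc[n^{\log n}]$ as a language.

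To place $L_{\mathcal{P}}\in\ne$, an $\ne$-verifier on input $\langle 1^n,x\rangle$ guesses the candidate $f$, runs $V_{\mathcal{P}}(f)$ in time $2^{O(n)}$, additionally guesses a quasi-polynomial size circuit $C_{\min}$ purportedly deciding the $\cone$ minimality predicate ``$\forall g<f:V_{\mathcal{P}}(g)=0$'', and uses the $\matime$-type derandomisation produced by Lemma~\ref{l:mvw} to check $C_{\min}$ efficiently. The existence of such a small $C_{\min}$ is guaranteed by $\mathsf{E}\subseteq\size[n^{O(\log n)}]$, which implies that this $\cone$-in-input-length-$2^n$ minimality predicate has quasi-polynomial size circuits. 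The verifier then accepts iff $f(x)=1$. The symmetric verifier, applied to the predicate ``$f_n^\star(x)=0$'', places $\overline{L_{\mathcal{P}}}$ in $\ne$ at every input length where $f_n^\star$ is defined --- an infinite set by usefulness --- giving $L_{\mathcal{P}}\in\iocone$.

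The main obstacle is this last step: splicing the $\matime$-type derandomisation from Lemma~\ref{l:mvw} into the $\ne$-verifier so that the guessed minimality circuit $C_{\min}$ can be checked inside total time $2^{O(n)}$ while simultaneously supporting both the language and its complement on the infinite set of useful input lengths. This is the technical heart of Williams' argument in \citep{DBLP:conf/stoc/Williams13}, and it is precisely what forces the $\iocone$ (rather than plain $\cone$) into the statement of the proposition.
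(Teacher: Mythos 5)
Your proposal does not prove the statement it was asked to prove. The target is Lemma~\ref{l:mvw} itself --- the Miltersen--Vinodchandran--Watanabe simulation ``$\mathsf{E} \subseteq \size(s(n))$ implies $\dtime[g(n)] \subseteq \matime[s(d\log g(n))^d]$'' --- but what you have written is a proof sketch of Proposition~\ref{p:unary} (the $\ne \cap \iocone$ lower bound from useful properties), and inside that sketch you explicitly \emph{invoke} Lemma~\ref{l:mvw} as a black box (``Applying Lemma~\ref{l:mvw} with $s(n)=n^{O(\log n)}$ and $g(n)=2^{n^a}$ yields \ldots''). So as a proof of the lemma the argument is circular and, more to the point, entirely absent: no step of your text addresses why small circuits for $\mathsf{E}$ give a Merlin--Arthur simulation of deterministic time. (For the record, the paper does not reprove this lemma either; it cites \citep{DBLP:conf/cocoon/MiltersenVW99} and uses it in Appendix~\ref{a:unary} exactly the way you do.)

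A genuine proof of Lemma~\ref{l:mvw} runs along the following lines, and none of its ingredients appear in your write-up. Take $L \in \dtime[g(n)]$ and consider the interactive proof system for deterministic time (\`a la Babai--Fortnow--Lund / $\mathsf{EXP} \subseteq \mathsf{MIP}$): the honest prover strategies for $L$ on inputs of length $n$ are themselves computable in deterministic time $\poly(g(n))$, hence --- viewed as boolean functions on inputs of length $O(\log g(n))$, after padding so that the relevant language sits in $\mathsf{E}$ --- they have circuits of size $s(O(\log g(n)))$ under the hypothesis $\mathsf{E} \subseteq \size(s(n))$. Merlin sends these circuits; Arthur simulates the verifier of the interactive proof using the circuits as provers, which takes time $\poly(s(O(\log g(n))), n) \leq s(d\log g(n))^d$ for a suitable absolute constant $d$. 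Soundness is inherited from the interactive proof, and completeness from the existence of the small circuits. If your intent was instead to prove Proposition~\ref{p:unary}, you should say so; even then, your sketch leaves the key verifier construction (how the guessed hard truth table and the Umans generator are combined inside an $\ne$ verifier, as in Lemma~\ref{l:ma}) at the level of ``this is the technical heart,'' which is the part that actually needs to be carried out.
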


For a function $h_\ell:\{0,1\}^\ell \rightarrow \{0,1\}$, let $\mathsf{CC}(h)$ be the size (number of gates) of the smallest circuit computing $h$.

\begin{proposition}[Umans \citep{DBLP:journals/jcss/Umans03}]\label{p:umans}~\\
There is a constant $k \in \mathbb{N}$ and a function $G: \{0,1\}^* \times \{0,1\}^* \rightarrow \{0,1\}^*$ for which the following holds. For every $s \in \mathbb{N}$ and boolean function $h_\ell:\{0,1\}^\ell \rightarrow \{0,1\}$ satisfying $\mathsf{CC}(h_\ell) \geq s^k$, and for all circuits $C$ of size at most $s$ over $s$ inputs, 
$$
\left | \Pr_{z \in \{0,1\}^{k \cdot \ell}} \left [ C(G(tt(h_\ell),z)) = 1 \right ] - \Pr_{z \in \{0,1\}^s} \left [ C(z) = 1 \right ]  \right | < \frac{1}{s}.
$$
In addition, $G$ can be computed in $\mathsf{poly}(2^\ell)$ time.
\end{proposition}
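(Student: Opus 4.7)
I argue by contradiction: assume $\ne \cap \iocone \subseteq \classc[n^{\log n}]$ and use the hypothesized $\propp_c$'s to derandomize far enough to violate the nondeterministic time hierarchy. The four ingredients match the hint that the proof combines nondeterminism, a collapse theorem, a hardness-versus-randomness result, and simple diagonalization.

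First, since $\cp \subseteq \mathsf{E} \subseteq \ne \cap \cone \subseteq \ne \cap \iocone$, the negated conclusion yields $\cp \subseteq \classc[n^{\log n}]$ and $\mathsf{E} \subseteq \size[n^{\log n}]$. Lemma~\ref{l:ctosize} then furnishes a constant $K$ for which $\size[\ell^{\log^d \ell}] \subseteq \classc[\ell^{\log^{Kd}\ell}]$ for every $d$; consequently, any function satisfying a property useful against $\classc[\ell^{\log^c \ell}]$ must have unrestricted circuit complexity exceeding $\ell^{\log^{c/K}\ell}$ on infinitely many $\ell$. Feeding $\mathsf{E} \subseteq \size[n^{\log n}]$ into Lemma~\ref{l:mvw} (with $g(n) = 2^{n^k}$ for each fixed $k$) gives $\cexp \subseteq \matime[n^{O(\log n)}]$; moreover, a Proposition-\ref{p:witnesscircuits}-style witness-circuit analysis adapted to the $\iocone$ hypothesis, combined with a Babai--Fortnow--Lund interactive proof, places $\ne$ inside $\matime[n^{O(\log n)}]$ on infinitely many input lengths.

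Next, I derandomize $\matime[n^{O(\log n)}]$ using Umans' PRG (Proposition~\ref{p:umans}) seeded by a function $h_\ell$ extracted from $\propp_c$. To fool the size-$T$ circuits arising from the MA verifier (here $T = n^{O(\log n)}$), the PRG demands $\mathsf{CC}(h_\ell) \geq T^{K_U}$ for Umans' constant $K_U$. Choosing $c$ large compared to $K\cdot K_U$ and setting $\ell = 2^{\log^{\epsilon} n}$ for a small $\epsilon > 0$, the property-supplied hardness $\ell^{\log^{c/K}\ell}$ comfortably dominates $T^{K_U}$, while $2^\ell = 2^{n^{o(1)}}$ stays sub-exponential in $n$. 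The derandomized nondeterministic algorithm on input $x$: guess $tt(h_\ell) \in \{0,1\}^{2^\ell}$; verify $\propp_c(h_\ell) = 1$ in $\poly(2^\ell)$ time; guess Merlin's message $y$; enumerate all $2^{K_U \ell}$ PRG seeds $z$ and run the MA verifier on $(x, y, G(tt(h_\ell), z))$; accept iff the majority of runs accept. The total nondeterministic time is $2^{O(n^{o(1)})}$, so $\matime[n^{O(\log n)}] \subseteq \nsubexp$.

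Combining the two inclusions places $\ne$ inside $\nsubexp$ on infinitely many input lengths, i.e., $\ntime[2^{O(n)}] \subseteq \ntime[2^{n^{o(1)}}]$ on an infinite set of input lengths, which directly contradicts the nondeterministic time hierarchy theorem (Cook; Seiferas--Fischer--Meyer; Zak). The main obstacle I anticipate is the parameter threading in the derandomization step: the hardness $\propp_c$ yields pertains to the class $\classc$ at input length $\ell$, whereas Umans' PRG needs an \emph{unrestricted} circuit lower bound calibrated to the MA time $T(n)$ at input length $n$. The uniformity of the hypothesis---one $\propp_c$ per $c$---is exactly what lets me take $c$ arbitrarily large and push $\ell$ down to a quasi-polylogarithmic scale while retaining enough hardness to make the PRG sub-exponential; a secondary subtlety is aligning the infinite set of ``good'' lengths from the $\iocone$ hypothesis with those on which $\propp_c$ witnesses hardness, so that the two infinitely-often conditions survive the chain of inclusions simultaneously.
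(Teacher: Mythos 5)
Your proposal does not prove the statement it was supposed to prove. The statement in question is Umans' hardness-to-randomness theorem itself: the existence of a constant $k$ and an efficiently computable generator $G$ such that, whenever $\mathsf{CC}(h_\ell) \geq s^k$, the distribution $G(tt(h_\ell), \cdot)$ fools all size-$s$ circuits to within $1/s$. A proof of this would have to actually construct $G$ (in Umans' paper this is done via a low-degree extension of $tt(h_\ell)$ combined with his extractor-style construction, following the Sudan--Trevisan--Vadhan paradigm) and then give the reconstruction argument: any circuit $C$ of size $s$ that distinguishes the generator's output from uniform with advantage $1/s$ can be converted, via list-decoding of the encoded truth table, into a circuit of size $< s^k$ computing $h_\ell$, contradicting the hardness hypothesis. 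None of this appears in your write-up. The paper itself imports this proposition from \citep{DBLP:journals/jcss/Umans03} without proof, but that does not make a proof attempt that omits the construction and the reconstruction step acceptable.

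What you have written instead is a proof sketch of Proposition \ref{p:unary} (the appendix result that $\mathsf{P}$-properties useful against $\classc[n^{\log^c n}]$ imply $\ne \cap \iocone \nsubseteq \classc[n^{\log n}]$), and your argument explicitly \emph{invokes} Proposition \ref{p:umans} as a black box in its derandomization step. Read as a proof of the stated proposition, this is circular: you cannot establish the PRG theorem by assuming it. Read as a proof of Proposition \ref{p:unary}, your outline is close in spirit to the paper's Appendix \ref{a:unary} (guess and verify a hard function via $\propp_c$, seed Umans' generator, derandomize $\matime$, diagonalize), though you take a detour through a Babai--Fortnow--Lund interactive proof to place $\ne$ in $\matime$, which the paper avoids --- it only needs $\dtime[2^{n^{2\log n}}] \subseteq \matime[n^{O(\log^3 n)}]$ via Lemma \ref{l:mvw} and then diagonalizes against deterministic time. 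But that comparison is moot: the task was to prove Umans' theorem, and the essential content of that theorem --- the generator construction and the distinguisher-to-predictor-to-small-circuit reconstruction --- is entirely missing.
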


The next lemma shows that useful properties together with the lack of circuit lower bounds for $\mathsf{P}$ allow us to obtain a nontrivial derandomization of Merlin-Arthur games.

\begin{lemma}\label{l:ma}
Let $\classc$ be a typical circuit class, and suppose that for every $c \in \mathbb{N}$ there exists a $\mathsf{P}$-property that is useful against $\classc[n^{\log^{c}n}]$. In addition, assume that $\mathsf{P} \subseteq \classc[n^{\log n}]$. Then there is an infinite subset $S \subseteq \mathbb{N}$ such that for any $L \in \matime[n^{O(\log^3 n)}]$, there exists a language $L' \in \ne $ such that for every $n \in S$, we have $L_n = L'_n$. In addition, for all $n \notin S$, we have $L'_n = \emptyset$.
\end{lemma}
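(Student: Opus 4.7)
The plan is to simulate a Merlin–Arthur computation running in quasi-polynomial time inside $\ne$ by (i) nondeterministically guessing a function of high circuit complexity certified by the useful $\mathsf{P}$-property and (ii) applying Umans's hardness-to-pseudorandomness construction (Proposition~\ref{p:umans}) to derandomize Arthur's coin tosses. The set $S$ will be the infinite collection of lengths on which the useful property actually witnesses usefulness.

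Fix $L \in \matime[n^{a \log^3 n}]$ with verifier $V(x,y,r)$ of size $s(n) := n^{a \log^3 n}$; for each fixed $(x,y)$, the map $r \mapsto V(x,y,r)$ is a size-$s(n)$ circuit. Let $\kappa$ be the constant from Proposition~\ref{p:umans} and set $\ell := n$: any $h_n : \{0,1\}^n \to \{0,1\}$ with $\mathsf{CC}(h_n) \geq s(n)^\kappa$ yields a generator $G(tt(h_n),\cdot) : \{0,1\}^{\kappa n} \to \{0,1\}^{s(n)}$ fooling $V(x,y,\cdot)$ to within $1/s(n)$, with each output computable in time $\mathsf{poly}(2^n)$. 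To obtain $h_n$, combine $\mathsf{P} \subseteq \classc[n^{\log n}]$ with Lemma~\ref{l:ctosize}: there is a constant $c = O(1)$ (absorbing $a$ and $\kappa$) such that $\size[n^{a \kappa \log^3 n}] \subseteq \classc[n^{\log^c n}]$. Let $\propp$ be the hypothesized $\mathsf{P}$-property useful against $\classc[n^{\log^c n}]$, and let $S \subseteq \mathbb{N}$ be the infinite set of lengths where usefulness is witnessed ($\propp$ accepts some $n$-variate function and rejects every function in $\classc[n^{\log^c n}]$). Then every $\propp$-accepted $h$ at length $n \in S$ satisfies $\mathsf{CC}(h) > s(n)^\kappa$.

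The NE machine $M$ for $L'$ operates as follows on input $x \in \{0,1\}^n$: guess a Merlin witness $y \in \{0,1\}^{s(n)}$ and a truth table $tt(h) \in \{0,1\}^{2^n}$; verify $\propp(h) = 1$ in time $\mathsf{poly}(2^n)$; enumerate all seeds $z \in \{0,1\}^{\kappa n}$, compute $r_z := G(tt(h),z)$, and accept iff more than half of the $z$'s satisfy $V(x,y,r_z) = 1$. The total running time is $2^{\kappa n} \cdot \mathsf{poly}(2^n) = 2^{O(n)}$, so $L(M) \in \ne$. For every $n \in S$, any $h$ passing the $\propp$-check has $\mathsf{CC}(h) > s(n)^\kappa$, so Umans's fooling guarantee together with the $2/3$–$1/3$ MA gap forces $L(M)_n = L_n$.

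The main obstacle I anticipate is enforcing $L'_n = \emptyset$ for $n \notin S$. On such lengths $\propp$ may still accept functions of low circuit complexity, breaking the fooling guarantee and leaving room for spurious accepting paths in $M$. My plan to resolve this is to intersect $\propp$ with an auxiliary $\mathsf{P}$-checkable condition and apply a padding argument so that the effective useful set is embedded in a thin, explicit collection of lengths (for instance, powers of two), while modifying $M$ to trivially reject on lengths outside that collection. The delicate step is arranging that at every length inside the thin collection but outside the useful set, $\propp$ accepts no function at all, so that $M$ has no accepting path whatsoever. This step most closely parallels Williams's construction in~\citep{DBLP:conf/stoc/Williams13}, and I expect it to constitute the technical heart of the lemma.
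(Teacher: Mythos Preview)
Your overall plan is the paper's plan: guess a Merlin witness together with a truth table certified hard by the useful $\mathsf{P}$-property, then derandomize Arthur's coins via Umans's generator (Proposition~\ref{p:umans}). The one real difference is the seed length. You take the hard function on $\ell=n$ variables; the paper instead guesses truth tables $h_m$ for every $m$ in a short interval near $2^{(\log n)^{5/6}}$ (with $c=5$) and uses the first one the property accepts. Your choice is arguably cleaner and still runs in $2^{O(n)}$, so it lands in $\ne$ just as well; the paper's smaller $\ell$ just keeps the PRG enumeration well below $2^n$, slack that is not actually needed here.

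Where you diverge is in the handling of $n\notin S$, and here you are over-engineering. The paper uses no padding and no thin-set embedding. The machine simply rejects whenever the property rejects every guessed truth table; this already forces $L'_n=\emptyset$ on all $n$ whose interval contains no length where $\propp$ accepts, and $S$ is taken to be exactly the complementary set. The paper then asserts that for $n\in S$ any accepted $h_\ell$ satisfies $\mathsf{CC}(h_\ell)\ge \ell^{\log^c\ell}$---in effect identifying ``lengths where $\propp$ is nonempty'' with ``lengths where usefulness is witnessed,'' and not separately treating the scenario you worry about (a length where $\propp$ accepts some low-complexity function). So your concern is not manufactured, but the elaborate fix you sketch is not what the paper does and is not the technical heart of the lemma; drop it in favor of the paper's ``reject if $\propp$ rejects the guess'' mechanism, and if you want to be more careful than the paper, argue separately that the property can be assumed to reject all $\classc[n^{\log^c n}]$ functions on every length rather than only infinitely many.
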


\begin{proof}
First, observe that Lemma \ref{l:ctosize} implies that for every $c \in \mathbb{N}$ there exists a property $\propp_c$ that is useful against $\mathsf{SIZE}[n^{\log^c n}]$. Let $\mathcal{A}^c$ be an efficient algorithm computing $\propp_c$ (we set the value of $c$ later). Let $L \in \matime[n^{O(\log^3 n)}]$. There exists a $\mathsf{MA}$-verifier $V$ for $L$ running in time $s = n^{O(\log^3 n)}$ such that
\begin{eqnarray}
x \in L \quad  & \Longrightarrow & \quad \exists y \in \{0,1\}^s \Pr_{w \in \{0,1\}^s}\left[V(x,y,w)= 1 \right] \geq \frac{2}{3} \nonumber \\ 
x \notin L \quad  & \Longrightarrow & \quad \forall y \in \{0,1\}^s \Pr_{w \in \{0,1\}^s}\left[V(x,y,w)= 1 \right] \leq \frac{1}{3} \nonumber
\end{eqnarray}

Our nondeterministic algorithm $\mathcal{N}$ for $L$ proceeds as follows. On input $x \in \{0,1\}^n$, it first guesses a string $y \in \{0,1\}^s$, then constructs a circuit $C_{x,y}$ from $\mathsf{SIZE}[s]$ such that for all $w \in \{0,1\}^s$ we have $C_{x,y}(w) = V(x,y,w)$. Then $\mathcal{N}$ guesses truth-tables $tt(h_m) \in \{0,1\}^M$ for every $m \in [2^{(\log n)^{5/(c+1)}}, 2^{(\log (n+1))^{5/(c+1)}})$, where $M = 2^m$ as usual. If $\mathcal{A}^c$ rejects all such functions, then $\mathcal{N}$ rejects $x$. Otherwise, let $h_\ell$ be the first function for which $\mathcal{A}^c(h_\ell) = 1$. Since $\mathcal{A}^c$ computes a useful property, $s = n^{O(\log^3 n)}$, and $\ell \geq 2^{(\log n)^{5/(c+1)}}$, for any $c \in \mathbb{N}$ we have:
$$
\mathsf{CC}(h_\ell) \geq \ell^{\log^c \ell} \geq n^{\log^4 n} \gg s^k,
$$
for any $k \in \mathbb{N}$ and sufficiently large $n$. Finally, $\mathcal{N}$ runs the algorithm granted by Proposition \ref{p:umans} on $C_{x,y}$ using $h_\ell$, and accepts its input $x$ if and only if
\begin{equation}\label{eq:prob}
\Pr_{z \in \{0,1\}^{k \cdot \ell}}\left[ C_{x,y}(G(tt(h_\ell),z))= 1   \right] \geq \frac{1}{2}.
\end{equation}

Observe that there exists an infinite set $S \subseteq \mathbb{N}$ such that for each $n \in S$ and for every $x \in \{0,1\}^n$, $\mathcal{N}$ is able to find a function $h_\ell$ for which $\mathsf{CC}(h_\ell) \geq s^k$, where $k$ is the constant in the statement of Proposition \ref{p:umans}. Put another way, $\mathcal{N}$ is correct on input sizes in $S$, and by construction $\mathcal{N}$ rejects every other input whose input size is not in $S$. Also, $S$ depends only on $\propp_c$.

The (nondeterministic) running time of $\mathcal{N}$ is dominated by the computation of the probability in (\ref{eq:prob}), and the time required to verify using $\mathcal{A}^c$ whether some hard function has been guessed. Finally, set $c = 5$, and observe that for this value of $c$ we have $\ell \ll n$. It follows therefore that $\mathcal{N}$ runs in time at most $2^n$. This completes the proof that there exists $L' \in \ne$ such that for every $n \in S$, $L'_n = L_n$, and for all $n \notin S$, we have $L'_n = \emptyset$.
\end{proof}

We are now ready to give the proof of Proposition \ref{p:unary}.

\begin{proof}[Proof of Proposition \ref{p:unary}]
Assume that $\ne \cap \iocone \subseteq \classc[n^{\log n}]$. In particular, $\mathsf{E} \subseteq \mathsf{SIZE}[n^{\log n}]$. Let $g(n) = 2^{n^{2 \log n}}$ and $s(n) = n^{\log n}$. Using Lemma \ref{l:mvw}, we get $\dtime[2^{n^{2\log n}}] \subseteq \matime[n^{O(\log^3 n)}]$. Clearly, our assumptions also imply that $\mathsf{P} \subseteq \classc[n^{\log n}]$. 

Let $L \in \dtime[2^{n^{2\log n}}]$. It follows from Lemma \ref{l:ma} that there exists an infinite set $S \subseteq \mathbb{N}$ and a language $L' \in \ne$ such that $L_n = L'_n$ for every $n \in S$. Consider $\overline{L} \in \dtime[2^{n^{2\log n}}]$, the complement of $L$. Then, again, there exists a language $L'' \in \ne$ such that for every $n \in S$, $\overline{L}_n = L''_n$. Clearly, $\overline{L''} \in \cone$, and for every $n \in S$ we have $\overline{L''}_n= L_n = L'_n$. In other words, $L' \in \ne \cap \iocone$. Overall, we get
$$
\dtime[2^{n^{2\log n}}] \subseteq \mathsf{i.o.}(\ne \cap \iocone) \subseteq \mathsf{i.o.}\classc[n^{\log n}],
$$
where the last inclusion uses our initial assumption.

However, using a simple diagonalization argument, we can define a language $L^* \in \dtime[2^{n^{2\log n}}]$ such that for all $n \geq n_0$, $L^{*}_n$ is not computed by circuits from $\classc[n^{\log n}]$. This contradiction completes the proof of Proposition \ref{p:unary}.
\end{proof}
 
\end{document}